\title{On Efficient Connectivity-Preserving Transformations in a Grid} 
\titlerunning{On Efficient Connectivity-Preserving Transformations in a Grid} 
\author{Abdullah Almethen}{Department of Computer Science, University of Liverpool, UK}{A.Almethen@liverpool.ac.uk}{}{}
\author{Othon Michail}{Department of Computer Science, University of Liverpool, UK}{Othon.Michail@liverpool.ac.uk}{}{}
\author{Igor Potapov}{Department of Computer Science, University of Liverpool, UK}{Potapov@liverpool.ac.uk}{}{}
\authorrunning{A. Almethen, O. Michail  and I. Potapov} 
\keywords{line movement, programmable matter, transformation, shape formation, reconfigurable robotics, time complexity} 
\begin{document}

\maketitle

\begin{abstract}
We consider a discrete system of $n$ devices lying on a 2-dimensional square grid and forming an initial connected shape $S_I$. Each device is equipped with a linear-strength mechanism which enables it to move a whole line of consecutive devices in a single time-step. We study the problem of transforming $S_I$ into a given connected target shape $S_F$ of the same number of devices, via a finite sequence of \emph{line moves}. Our focus is on designing \emph{centralised} transformations aiming at \emph{minimising the total number of moves} subject to the constraint of \emph{preserving connectivity} of the shape throughout the course of the transformation. We first give very fast connectivity-preserving transformations for the case in which the \emph{associated graphs} of $ S_I $ and $ S_F $ are isomorphic to a Hamiltonian line. In particular, our transformations make $ O(n \log n $) moves, which is asymptotically equal to the best known running time of connectivity-breaking transformations. Our most general result is then a connectivity-preserving \emph{universal transformation} that can transform any initial connected shape $ S_I $ into any target connected shape $ S_F $, through a sequence of $O(n\sqrt{n})$ moves. Finally, we establish $\Omega(n \log n)$ lower bounds for two restricted sets of transformations. These are the first lower bounds for this model and are matching the best known $ O(n \log n)  $ upper bounds.
\end{abstract}

\section{Introduction}
\label{sec:Intro}

Over the past few years, many fascinating systems have been developed, leveraging advanced
technology in order to deploy large collections of tiny monads. Each monad is typically a highly restricted micro-robotic entity, equipped with a microcontroller and some actuation and sensing capabilities. Through its collaborative complexity, the collection of monads can carry out tasks which are well beyond the capabilities of individual monads. The vision is the development of materials that will be able to algorithmically change their physical properties, such as their shape, colour, conductivity and density, based on transformations executed by an underlying program. These efforts are currently shaping the research area of \emph{programmable matter}, which has attracted much theoretical and practical interest.   

The implementation indicates whether the monads are operated centrally or through local decentralised control. In \textit{centralised} systems, there is an external program which globally controls all monads with full knowledge of the entire system. On the other hand, \textit{decentralised} systems provide each individual monad with enough autonomy to communicate with its neighbours and move locally. There are an impressive number of recent developments for collective robotic systems, demonstrating their potential and feasibility, starting from the scale of milli or micro \cite{BG15,GKR10,KCL12,RCN14}  down to nano size of individual monads \cite{DDL09,Ro06}. 

Recent research has highlighted the need for the development of an algorithmic theory of such systems. Michail and Spirakis \cite{MS18} and Michail \emph{et al.} \cite{MSS19} emphasised an apparent lack of a formal theoretical study of this prospective, including modelling, possibilities/limitations, algorithms and complexity.  The development of a formal theory is a crucial step for further progress in those systems. Consequently, multiple theoretical computer science sub-fields have appeared, such as metamorphic systems \cite{DSY04b,NGY00,WWA04}, mobile robotics \cite{CFPS12,CDP19,DFSY15,DiLuna2019,SY10}, reconfigurable robotics \cite{ABD13,BKR04,DDG18,DGRSS16,YSS07}, passively-mobile systems \cite{AADFP06, AAER07,MS16a,MS18}, DNA self-assembly \cite{Do12,RW00,Wi98,WCG13}, and the latest emerging sub-area of ``Algorithmic Foundations of Programmable Matter'' \cite{FRRS16}.

Consider a system deployed on a two-dimensional square grid in which a collection of spherical devices are typically connected to each other, forming a shape $S_I$. By a finite number of valid individual moves, $S_I$  can be transformed into a desired target shape $S_F$. In this prospective, a number of models are designed and introduced in the literature for such systems. For example, Dumitrescu and Pach \cite{DP04}, Dumitrescu \emph{et al.} \cite{DSY04a,DSY04b} and Michail \emph{et al.} \cite{MSS19} consider mechanisms where an individual device is capable to move over and turn around its neighbours through empty space. Transformations based on similar moves being assisted by small seeds, have also been considered in \cite{ADDDFKPP19}. A new linear-strength mechanism was introduced by Almethen \emph{et al.} in \cite{AMP19}, where a whole line of consecutive devices can, in a single time-step, move by one position in a given direction. 

In this paper, we embark from the \textit{line-pushing} model of \cite{AMP19}, which provided sub-quadratic centralised transformations that may, though, arbitrarily break connectivity of the shape during their course. The only connectivity-preserving transformation in \cite{AMP19} was an $ O(n\sqrt{n}) $-time transformation for a single pair of shapes of order $ n $, namely from a diagonal into a straight line. All transformations that we provide in the present study preserve connectivity of the shape during the transformations. We first give very fast connectivity-preserving transformations for the case in which the \emph{associated graphs} of $ S_I $ and $ S_F $ are isomorphic to a Hamiltonian line. In particular, our transformations make $ O(n \log n $) moves, which is asymptotically equal to the best known running time of connectivity-breaking transformation. Our most general result is then a connectivity-preserving \emph{universal transformation} that can transform any initial connected shape $ S_I $ into any target connected shape $ S_F $, through a sequence of $O(n\sqrt{n})$ moves. Finally, we establish $\Omega(n \log n)$ lower bounds for two restricted sets of transformations. These are the first lower bounds for this model and are matching the best known $ O(n \log n)  $ upper bounds.    

\subsection{Related Work}

For the models of individual moves where only one node moves in a single time-step, \cite{DP04,MSS19} show universality of transforming any pair of connected shapes $(A,B)$ having the same number of devices (called \textit{nodes} throughout this paper) to each other via sliding and rotation mechanisms. By allowing only rotation, \cite{MSS19} proves that the problem of deciding transformability is in $\mathbf{P}$. It can be shown that in all models of constant-distance individual moves, $\Omega(n^2)$ moves are required to transform some pairs of connected shapes, due the inherent distance between them \cite{MSS19}. This motivates the study of alternative types of moves that are reasonable with respect to practical implementations and allow for sub-quadratic reconfiguration time in the worst case.  

There are attempts in the literature to provide alternatives for more efficient reconfiguration. The first main approach is to explore parallel transformations, where multiple nodes move together in a single time-step. This is a natural step to tackle such a problem, especially in distributed systems where nodes can make independent decisions and move locally in parallel to other nodes. There are a number of theoretical studies on parallel and distributed transformations \cite{DDG18,DGPR16,DiLuna2019,DSY04b,MSS19,YUY16} as well as practical implementations \cite{RCN14}. For example, it can be shown that a connected shape can transform into any other connected shape, by performing in the worst case $O(n)$ parallel moves around the perimeter of the shape \cite{MSS19}.

The second approach aims to equip nodes in the system with a more powerful mechanism which enables them to reduce the inherent distance by a factor greater than a constant in a single time-step. There are a number of models in the literature in which individual nodes are equipped with strong actuation mechanisms, such as linear-strength mechanisms.  Aloupis \emph{et al.} \cite{ABD13,ACD08} provide a node with arms that are capable to extend and contract a neighbour, a subset of the nodes or even the whole shape as a consequence of such an operation. Further, Woods \emph{et al.} \cite{WCG13} proposed an alternative linear-strength mechanism, where a node has the ability to rotate a whole line of consecutive nodes.

Recently, the \textit{line-pushing} model of \cite{AMP19} follows a similar approach in which a single node can move a whole line of consecutive nodes by simultaneously (i.e., in a single time-step) pushing them towards an empty position. The line-pushing model can simulate the rotation and sliding based transformations of \cite{DP04,MSS19} with at most a 2-factor increase in their worst-case running time. This implies that all transformations established for individual nodes, transfer in the line-pushing model and their universality and reversibility properties still hold true.  They achieved sub-quadratic time transformations, including an $O(n \log n)$-time universal transformation which does not preserve connectivity and a connectivity-preserving $O(n \sqrt{n} )$-time transformation for the special case of transforming a diagonal  into a straight line.       

Another relevant line of research has considered a single moving robot that transforms an otherwise static shape by carrying its tiles one at a time \cite{CDP19,FGHKSS18,GHK19}. Those models are partially centralised as a single robot (usually a finite automaton) controls the transformation, but, in contrast to our perspective, control in that case is local and lacking global information.

\subsection{Our Contribution}

In this work, we build upon the findings of \cite{AMP19} aiming to design very efficient and general transformations that are additionally able to keep the shape connected throughout their course. 

We first give an $O(n \log n)$-time transformation, called \textit{Walk-Through-Path}, that works for all pairs of shapes $(S_I,S_F)$ that have the same order and belong to the family of \emph{Hamiltonian shapes}. A \textit{Hamiltonian shape} is any connected shape $S$ whose \emph{associated graph} $G(S)$ is isomorphic to a Hamiltonian path (see also \cite{IPS82}). 
At the heart of our transformation is a recursive successive doubling technique, which starts from one endpoint of the Hamiltonian path and proceeds in $\log n$ phases (where $n$ denotes the order of the input shape $S_I$, throughout this paper). In every phase $i$, it moves a terminal line $L_i$ of length $2^i$ a distance $2^i$ higher on the Hamiltonian path through a \emph{LineWalk} operation. This leaves a new terminal sub-path $S_i$ of the Hamiltonian path, of length $2^i$. Then the general procedure is recursively called on $S_i$ to transform it into a straight line $L^{\prime}_i$ of length $2^i$. Finally, the two straight lines $L_i$ and $L^{\prime}_{i}$ which are perpendicular to each other are combined into a new straight line $L_{i+1}$ of length $2^{i+1}$ and the next phase begins. 

A core technical challenge in making the above transformation work is that Hamiltonian shapes do not necessarily provide free space for the \emph{LineWalk} operation. Thus, moving a line has to take place through the remaining configuration of nodes while at the same time ensuring that it does not break their and its own connectivity, including keeping itself connected to the rest of the shape. We manage to overcome this by revealing a nice property of line moves, according to which a line $L$ can \emph{transparently} walk through \emph{any} configuration $S$ (independently of the latter's density) in a way that: (i) preserves connectivity of both $L$ and $S$ and (ii) as soon as $L$ has gone through it, $S$ has been restored to its original state, that is, all of its nodes are lying in their original positions. This property is formally proved in Proposition \ref{prop:LineAlongPath} (Section \ref{sec:Prelim}).       

We next develop a \emph{universal transformation}, called \emph{UC-Box}, that within $O(n\sqrt{n})$ moves transforms any pair of connected shapes of the same order to each other, while preserving connectivity throughout its course. Starting from the initial shape $S_I$, we first compute a spanning tree $T$ of $S_I$. Then we enclose the shape into a square box of size $n$ and divide it into sub-boxes of size $\sqrt{n}$, each of which contains at least one sub-tree of $T$. By moving lines in a way that does not break connectivity, we compress the nodes in a sub-box into an adjacent sub-box towards a parent sub-tree. By carefully repeating this we manage to arrive at a final configuration which is always a compressed square shape. The latter is a type of a \textit{nice} shape (a family of connected shapes introduced in \cite{AMP19}), which can be transformed into a straight line in linear time. We provide an analysis of this strategy based on the number of \emph{charging phases}, which turns out to be $\sqrt{n}$, each making at most $n$ moves, for a total of $O(n\sqrt{n})$ moves.

Finally, we establish $\Omega(n \log n)$ lower bounds for two restricted sets of transformations. These are the first lower bounds for this model and are matching the best known $ O(n \log n)  $ upper bounds. The first set consists of transformations from an initial diagonal into a target straight line. If every node on the diagonal only meets through shortest paths with other nodes at their original positions and every such meeting results in an irreversible merging (i.e., nodes that merge cannot split in future steps), then it can be shown that any such transformation has a labelled tree representation. The nodes of the tree are the nodes of the shape, the edges represent mergings between the corresponding nodes at some point in the transformation and the labels of the edges represent the shortest path distances between the original positions of the corresponding nodes. Then the total cost of the transformation is equal to the sum of the labels plus the sum of the sizes of all sub-trees of the tree. The latter additive factors capture the cost of \emph{turning} (i.e., changing the orientation of) a line of merged nodes, which is always equal to its length, and every meeting through a shortest path on the grid requires at least one turn.

 We further restrict attention to the sub-set of those transformations in which every leaf-to-root path has length at most 2. This captures tranformations in which every node must reach its final destination (on the target straight line) through at most 1 meeting-hop and at most 2 hops in total. Interestingly, by disregarding the sub-tree additive costs and the fact that our initial and target instances have specific geometric arrangements, it is known that computing a 2-HOPS MST in the Euclidean 2-dimensional space is a hard optimisation problem and the best known result is a PTAS by Arora \emph{et al.} \cite{ARR98} (cf. also \cite{CDL07}). By working on the tree representation, we show that any transformation in this set requires at least $\Omega(n \log n)$ moves. Our second lower bound is also $\Omega(n \log n)$ time, for an alternative set of one-way transformations.    

Section \ref{sec:Prelim} formally defines the model and the problems under consideration and proves a basic proposition which is a core technical tool in one of our transformations. Section \ref{sec:Hamiltonian_Shapes} presents our $O(n\log n)$-time transformation for Hamiltonian shapes. Section \ref{sec:nsqrtn_Universal_Transformation} discusses our universal $O(n\sqrt{n})$-time transformation. In Section \ref{sec:Lower_Bounds}, our lower bounds are proved. Finally, in Section  \ref{sec:Conclusion} we conclude and discuss interesting problems left open by our work.

\section{Preliminaries}
\label{sec:Prelim}

All transformations in this study operate on a two-dimensional square grid, in which each cell  has a unique position of non-negative integer coordinates $ (x, y) $, where $x$ represents columns and $y$ denotes rows in the grid. A set of $n$ nodes on the grid forms a shape $ S $ (of the order $n$), where every single node $u \in S$ occupies only one cell $cell(u) = (u_x, u_y)$. A node $u$ can be indicated at any given time by the coordinates $(u_x, u_y)$ of the unique cell that it occupies at that time. A node $v \in S$ is a \textit{neighbour} of (or \textit{adjacent} to) a node $u \in S$ if and only if their coordinates satisfy $u_x -1 \le v_x \le u_x +1$ and $u_y -1 \le v_y \le u_y +1$ (i.e., their cells are adjacent vertically, horizontally or diagonally).

\begin{definition}\label{def:associated_graph} 
	A graph $G(S) = (V,E)$ is \emph{associated} with a shape $ S $, where $u\in V$ iff $u$ is a node of $S$ and $(u,v)\in E$ iff $u$ and $v$ are neighbours in $S$. 
\end{definition}
 
A shape $S$ is connected iff $G(S)$ is a connected graph. We denote by $T(S)$ (or just $T$ when clear from context) a spanning tree of $G(S)$, and whenever we state that such a tree is given we make use of the fact that $T(S)$ can be computed in polynomial time.  

\begin{definition}  [A tree]	 \label{def:RootedTree} 
	A tree $(T, r)$, or $T$ whenever clear in the context, is rooted at a node $r \in V$, such that there is a unique path $\mathcal{P}$ from $r$ to each node $v \in V$ denoted by $\mathcal{P}_T(r,v)$ on which the distance $ \delta_T(r,v) $ is the number of edges between them. A node $v$ is a successor of $u$ iff $\mathcal{P}_T(r,v) \supset  \mathcal{P}_T(r,u)$, and $u$ is a parent of $v$ iff $\delta_T(u,v) = 1$.       
\end{definition}

The size of a tree, $size(T)$, denotes the number of all nodes in $T$, includes the root $r$ and all its successors. In what follows, $n$ denotes the number of nodes in a shape under consideration, and all logarithms are to base 2.  

In this paper, we exploit the linear-strength mechanism of the \textit{line-pushing model} introduced in \cite{AMP19}. A line $ L $ is a sequence of nodes occupying consecutive cells in one direction of the grid, that is, either vertically or horizontally but not diagonally. A \textit{\textbf{line move}} is an operation of moving all nodes of $ L $ together in a single time-step towards a position adjacent to one of $ L $’s endpoints, in a given direction $d$ of the grid, $d \in \{up,down,right,left\}$. A line move may also be referred to as \emph{step}, \emph{move}, or \emph{movement} in this paper. Throughout, the running time of transformations is measured in total number of line moves until completion. A \textit{line move} is formally defined below.

\begin{definition}[A permissible line move] \label{def:permissible_line_move} 
	A line $ L = (x,y), (x+1,y), \ldots , (x+k-1, y ) $ of length $k$, where $1 \leq k \le n $, can push all its $k$ nodes rightwards in a single move to positions $ (x+1,y), (x+2,y), \ldots , (x+k, y) $ iff there exists an empty cell to $(x+k, y)$. The ``down'', ``left'', and ``up'' moves are defined symmetrically, by rotating the whole shape 90\degree, 180\degree, and 270\degree\  clockwise, respectively.
\end{definition}

We next define a family of shapes that are used in one of our transformations.

\begin{definition}[Hamiltonian Shapes] \label{def:All_Line_Shapes}	
	A shape $S$ is called Hamiltonian iff $G(S)=(V,E)$ is isomorphic to a Hamiltonian path, i.e., a path starting from a node $u \in  V$, visiting every node in $V$ exactly once and ending at a node $v \in V$, where $v\neq u$. $\mathcal{H}$ denotes the family of all Hamiltonian shapes.	Figure \ref{fig:Straight_Spning_Line} shows some examples of Hamiltonian shapes.
	\begin{figure}[th!]
		\centering
		\subcaptionbox{A double-spiral.}
		{\includegraphics[scale=0.4]{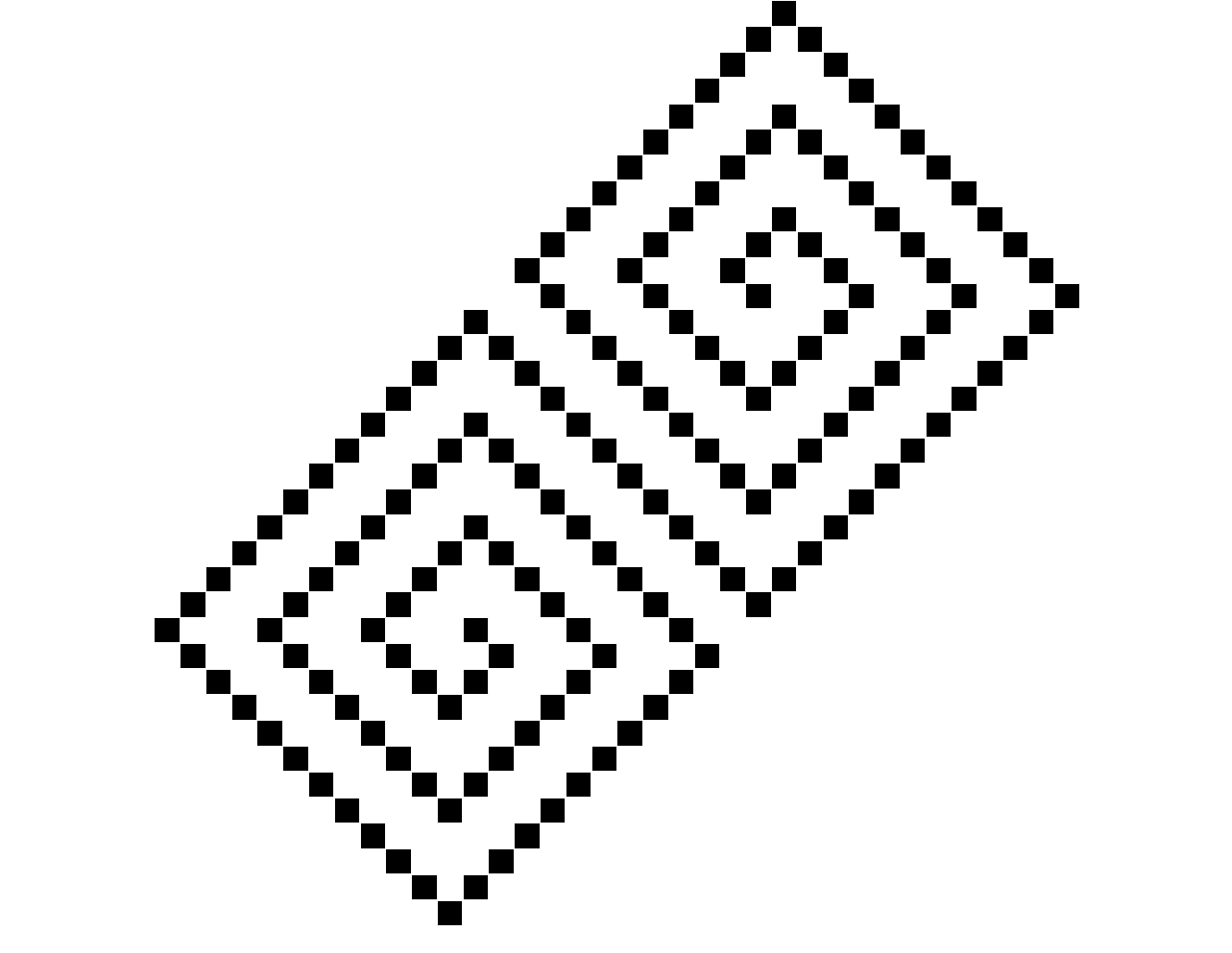}} \qquad \qquad  
		\subcaptionbox{A shape of two different Hamiltonian paths in yellow.}
		{\includegraphics[scale=0.4]{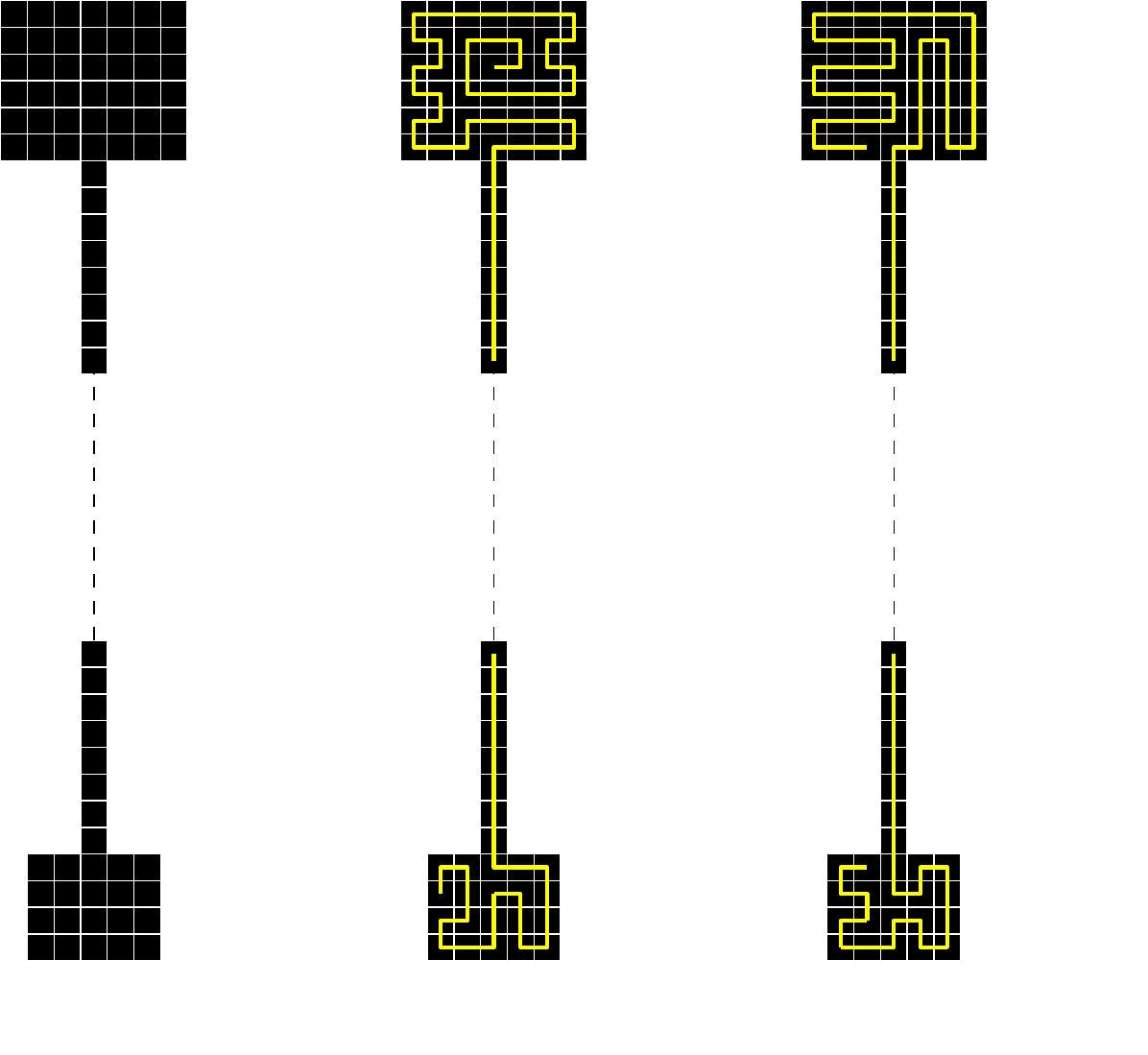}} 
		\caption{Examples of Hamiltonian shapes.}
		\label{fig:Straight_Spning_Line}
	\end{figure}
\end{definition}

The following proposition proves a basic property of line moves which will be a core technical tool in one of our transformation for Hamiltonian shapes.

\begin{proposition} [Transparency of Line Moves] \label{prop:LineAlongPath}
	Let $S$ be any shape, $ L \subseteq S$ any line and $ P $ any path of cells in the grid (under the vertical and horizontal neighbouring relation) starting from a position adjacent to one of $L$'s endpoints. Let $ C(P) $ denote the configuration of $P$ defined by $S$. 
	There is a way to move $ L $ along $ P $, while satisfying all the following properties: 
	\begin{enumerate}[label=(\roman*)]
		\item \emph{No delay:} The number of steps is asymptotically equal to that of an optimum move of $ L $ along $ P $ in the case of  $ C(P) $ being empty (i.e., if no cells were occupied). That is, $ L $ is not delayed, independently of what $ C(P) $ is.
		\item \emph{No effect:} After $ L $'s move along $ P $, $ C^{\prime}(P) = C(P) $, i.e., the cell configuration has remained unchanged. Moreover, no occupied cell in $ C(P) $ is ever emptied during $ L $'s move (but unoccupied cells may be temporarily occupied).
		\item \emph{No break:} $S$ remains connected throughout $L$'s move.
	\end{enumerate}
\end{proposition}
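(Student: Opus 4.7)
The plan is to design and analyse a constructive procedure, which I shall call \textsc{LineWalk}, that advances $L$ cell by cell from $p_1$ to $p_m$, and then verify each of properties~(i)--(iii) in turn. Throughout, the leading endpoint of $L$ is the endpoint initially adjacent to $p_1$, and the procedure always attempts to move this endpoint into the next unvisited cell of $P$.

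The key primitive is an \emph{extended push}. To advance the leading endpoint into the next cell $p_i$ when $p_i$ (and possibly several cells following it along $P$) are occupied by nodes $v_0, v_1, \ldots, v_r \in S \setminus L$, the procedure pushes the combined straight line $L \cup \{v_0, v_1, \ldots, v_r\}$ by one cell in the direction towards $p_i$. Because a line push is atomic, $p_i$ transitions from being occupied by $v_0$ to being occupied by an $L$-node in a single step and is never empty. When $p_i$ is empty and aligned with $L$'s axis, the extended push reduces to an ordinary push of $L$. When $P$ turns, i.e.\ when the direction to $p_i$ is perpendicular to $L$'s current axis, I would implement a \emph{pivot} that relocates $L$'s trailing nodes one by one to the new leading end via short right-angle trajectories of single-node moves, costing $O(|L|)$ moves per turn and relying on the $8$-neighbour adjacency of the model to keep $L$ and its interface with the rest of $S$ connected throughout.

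The main work lies at the trailing endpoint. As $L$'s rear is about to vacate an originally-occupied cell $p_j$, the occupant that was displaced from $p_j$ by earlier extended pushes (and that by this time is located a bounded number of cells further along $P$) must be re-deposited into $p_j$ in the very step in which the push would otherwise empty it. I would implement this as a \emph{reverse deposit}: one or two single-node moves that route the occupant through a cell adjacent to $P$, scheduled to coincide with the push that vacates the rear. A direct induction on $i$ then yields property~(ii): every originally-occupied cell is occupied at every discrete time, and at the end of the procedure every occupant is back in its original cell. Property~(i) follows because each cell of $P$ is processed with $O(1)$ line moves and each turn with $O(|L|)$, matching the asymptotic counts in the empty-$C(P)$ case; property~(iii) follows because extended pushes preserve internal connectivity of the moving block, and the detour cell used by the reverse deposit is $8$-adjacent to both $L$ and the stationary remainder of $S$.

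I expect the main obstacle to be in guaranteeing the availability of the off-$P$ detour cell, and of an empty cell just beyond the extended run, when $S$ is dense around $P$. When neither is directly available, the extended push or the reverse deposit has to cascade: a further $S$-node is first temporarily displaced to make room, and is itself restored once the primary operation completes. The hard part will be to argue, through a localised case analysis on the configuration of $S$ within a constant-radius window of each $p_j$, that such cascades have bounded depth and amortised $O(1)$ cost per cell, so that both the time bound of~(i) and the connectivity invariant of~(iii) survive throughout.
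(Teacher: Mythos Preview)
Your proposal diverges from the paper's argument, and the divergence creates a genuine gap. Your \emph{reverse deposit} assumes that when $L$'s rear is about to vacate an originally-occupied cell $p_j$, the node $v$ that was displaced from $p_j$ is ``located a bounded number of cells further along $P$'' and can be routed back with one or two single-node moves. But under your own \emph{extended push}, every advance of $L$ also advances $v$ by one cell; by the time $L$'s rear reaches $p_j$, the displaced $v$ is $\Theta(|L|)$ cells ahead, sitting at or beyond $L$'s front. Restoring it to $p_j$ is therefore not a local $O(1)$ operation, and the off-$P$ detour you propose would have to carry $v$ all the way back past $L$, which neither respects the time bound in~(i) nor has an obvious connectivity-preserving implementation in dense regions. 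The cascading argument you sketch does not address this, because the difficulty is distance, not lack of a free cell.

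The paper sidesteps this entirely by exploiting that nodes are \emph{unlabelled}. It never tries to return the original occupant of $p_j$ to $p_j$. Instead, as $L$ moves along $P$ it simply \emph{swaps membership}: when the front enters an occupied cell, the occupant is absorbed into $L$ (becoming its new leading node); when the rear would leave an originally-occupied cell, $L$ deposits its trailing node there and redefines itself to exclude that node. Thus $|L|$ is preserved, every originally-occupied cell of $P$ stays occupied at all times, and no off-path detour cells are needed along straight segments. The only delicate case is a turn at an occupied corner whose sole connection to $S\setminus L$ is diagonal; there the paper spends a constant number of extra moves (push the corner node one cell aside, pivot $L$, move the corner node back). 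Once you adopt the swap viewpoint, (i)--(iii) follow almost immediately, and your anticipated case analysis on dense windows around $P$ becomes unnecessary.
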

\begin{proof}
	Given $L \subseteq S$ and $P$, place additional nodes that occupy cells in $ P $, possibly with gaps, in any configuration $ C(P) $, see Figure \ref{fig:Line_Path} for example. Whenever $L$ walks through an empty cell $(x,y)$ of $ P $, a node $u\in L$ fills in $(x,y)$. If $L$ pushes the node $u$ of a non-empty cell of  $ P $, a node $v\in L$ takes its place. When $L$ leaves a non-empty cell $(x,y)$ that was originally occupied by node $v$, $L$ restores $(x,y)$ by leaving its endpoint $u \in L$ in $(x,y)$.
	
	\begin{figure}[th!]
		\centering
		{\includegraphics[scale=0.45]{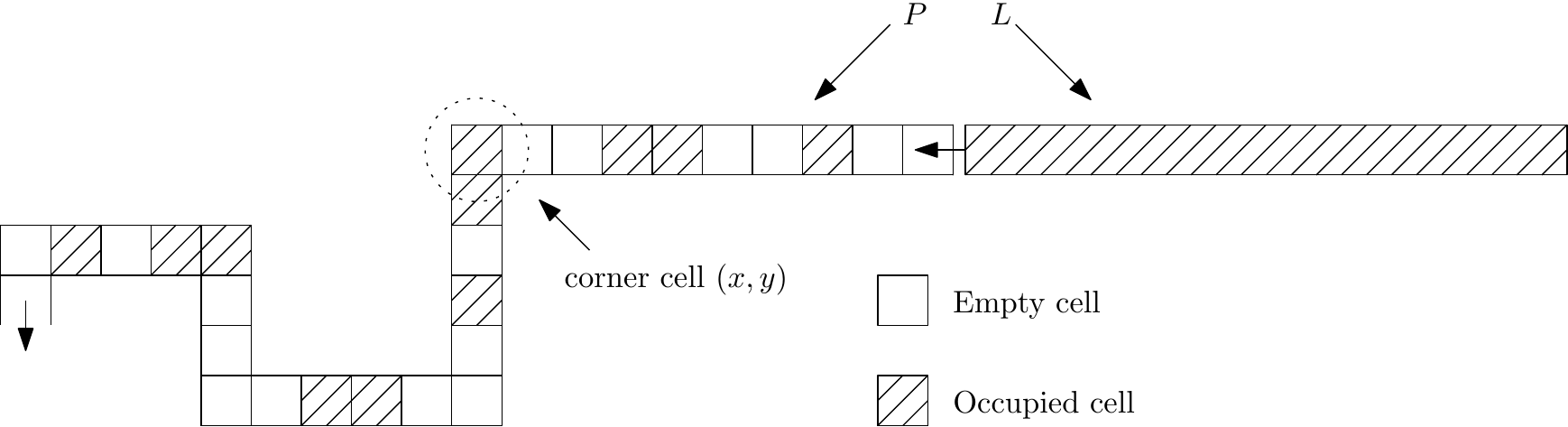}} 
		\caption{A path  $ P $ of a given configuration $ C(P) $. A line $L$ will pass along $p$.}
		\label{fig:Line_Path}
	\end{figure}  
	
	Now assume that $L$ turns at a non-empty corner cell $(x,y)$ of $P$ (say without loss of generality, from horizontal to vertical direction). Typically the node occupying the corner cell$(x,y)$ moves vertically one step along $P$, and then $L$ pushes one move to fill in the empty cell $(x,y)$ by a node $u\in L$. Unless  $(x,y)$ is being only connected diagonally to a non-empty cell that is not a neighbour of any node $u\in L$.  Figure \ref{fig:Line_Path_2} shows how to deal with the case in which $L$ turns at a non-empty corner-cell $(x,y)$ of $P$, which is only connected diagonally to a non-empty cell of $S$ and is not adjacent to any cell occupied by $L$. 
	
	\begin{figure}[th!]
		\centering
		\subcaptionbox{}
		{\includegraphics[scale=0.45]{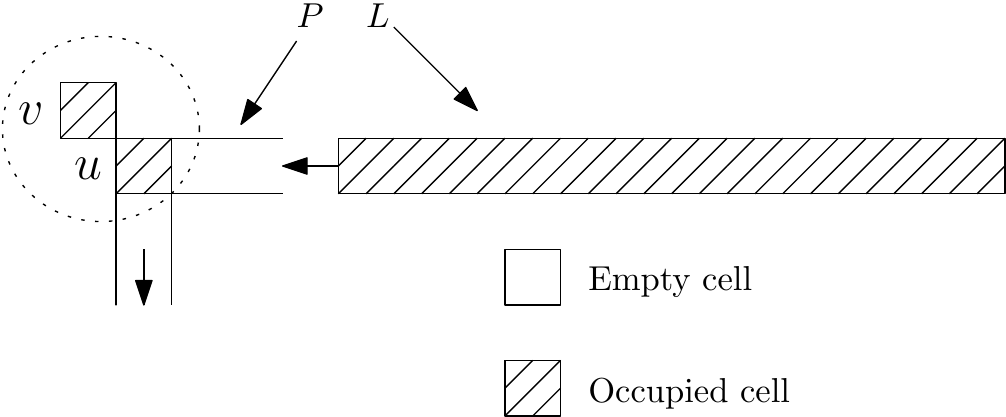}} \qquad 
		\subcaptionbox{}
		{\includegraphics[scale=0.45]{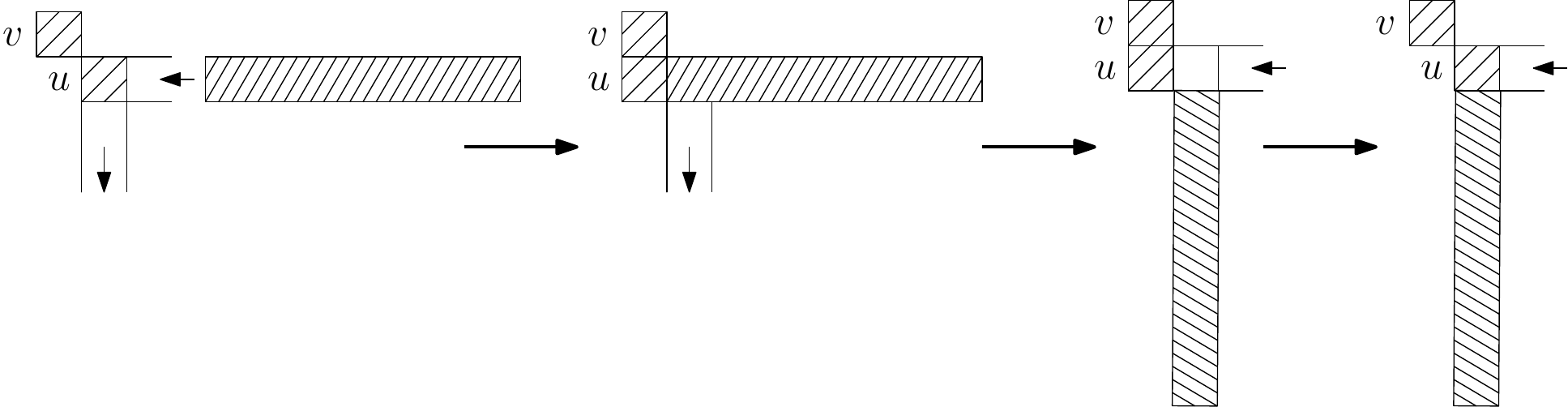}} 
		\caption{A line $L$ moving through a path $P$ and arriving at a turning point of $P$. $u$ occupies a corner cell of $P$ and $v$ occupies a cell of $S$ and is only connected diagonally to $u$ while not being adjacent to any cell occupied by $L$. $L$ pushes $u$ one position horizontally and turns all of its nodes vertically. Then $u$ moves back to its original position in $P$. All other orientations are symmetric and follow by rotating the shape $ 90\degree $, $ 180\degree $ or $ 270\degree$. }
		\label{fig:Line_Path_2}
	\end{figure} 
	
	Therefore,  it always temporarily maintain global connectivity and restores all of those nodes to their original positions. Hence, $L$'s move takes a number of moves to pass through any $ C(P) $ equal to or even less than its optimum move in the case of empty $C(P)$. Therefore, $L$ can \emph{transparently} walk through \emph{any} configuration $S$ (independently of the latter's density) in a way that: (i) preserves connectivity of both $L$ and $S$ and (ii) as soon as $L$ has gone through it, $S$ has been restored to its original state, that is, all of its nodes are lying in their original positions.
\end{proof}

We now formally define all problems considered in this work.\\

\noindent\textbf{{\sc HamiltonianConnected}.} Given a pair of connected Hamiltonian shapes $(S_I, S_F)$ of the same order, where $S_I$ is the initial shape and $S_F$ the target shape, transform $S_I$ into $S_F$ while preserving connectivity throughout the transformation. \\

\noindent\textbf{{\sc DiagonalToLineConnected}.} A special case of {\sc HamiltonianConnected} in which $S_I$ is a diagonal line and $S_F$ is a straight line. \\ 

\noindent\textbf{{\sc UniversalConnected}.} Given \emph{any} pair of connected shapes $(S_I, S_F)$ of the same order, where $S_I$ is the initial shape and $S_F$ the target shape, transform $S_I$ into $S_F$ while preserving connectivity throughout the transformation.

\section{$ O(n \log n) $-time Transformations for Hamiltonian Shapes}
\label{sec:Hamiltonian_Shapes}

In this section, we present a strategy for {\sc HamiltonianConnected}, called \textit{Walk-Through-Path}. It transforms any pair of shapes $S_I,S_F\in\mathcal{H}$ of the same order to each other within $O(n\log n)$ moves while preserving connectivity of the shape throughout the transformation. Recall that $\mathcal{H}$ is the family of all Hamiltonian shapes. Our transformation starts from one endpoint of the Hamiltonian path of $S_I$ and applies a recursive successive doubling technique to transform $S_I$ into a straight line $S_L$ in $O(n\log n)$ time. By replacing $S_I$ with $S_F$ in \textit{Walk-Through-Path} and reversing the resulting transformation, one can then go from $S_I$ to $S_F$ in the same asymptotic time. 

We first demonstrate the core recursive technique of this strategy in a special case which is sufficiently sparse to allow local reconfigurations without the risk of affecting the connectivity of the rest of the shape. In this special case, $S_I$ is a diagonal of any order and observe that $S_I,S_F\in\mathcal{H}$ holds for this case. We then generalise this recursive technique to work for any $S_I\in\mathcal{H}$ and add to it the necessary sub-procedures that can perform local reconfiguration in \emph{any} area (independently of how dense it is), while ensuring that global connectivity is always preserved.

\begin{figure}[th!]
	\centering
	\captionsetup{justification=centering}
	\subcaptionbox{First phase.}
	{\includegraphics[scale=0.44]{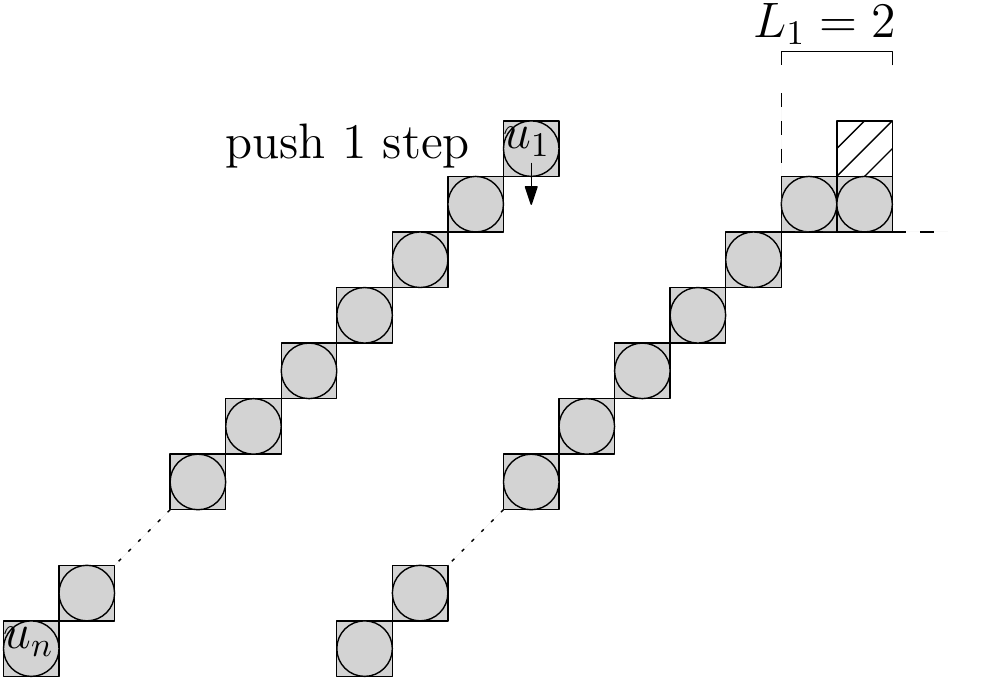}} \quad
	\vspace{10px}
	\subcaptionbox{Second phase.}
	{\includegraphics[scale=0.44]{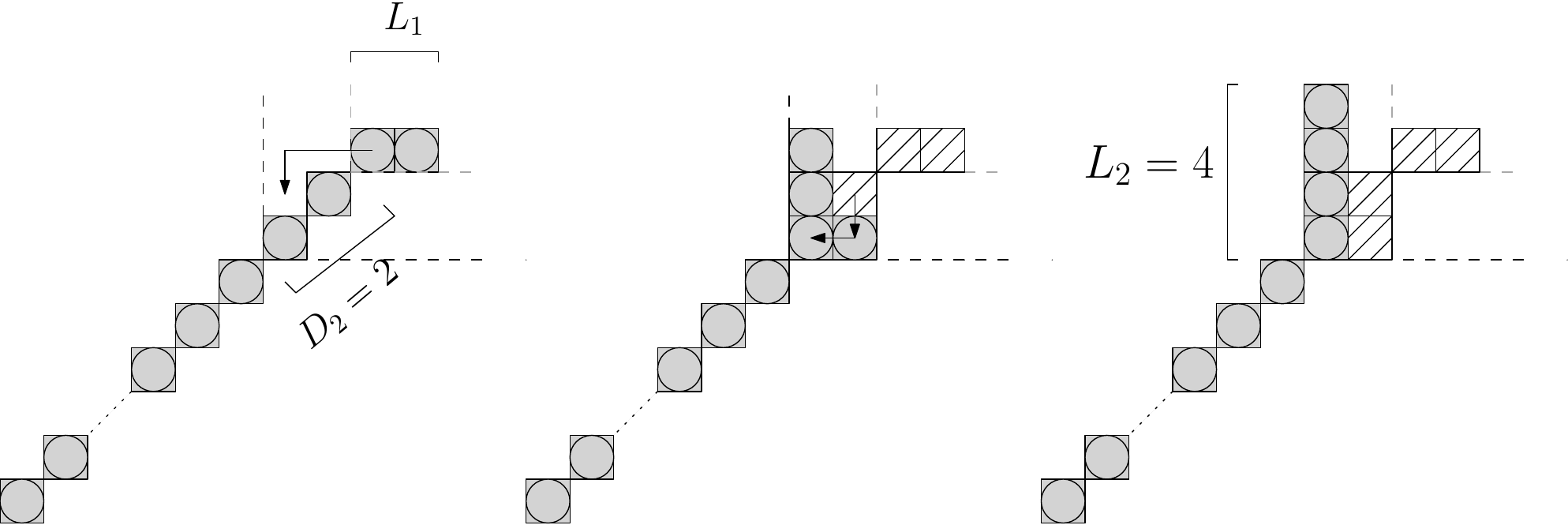}}
	\caption{First and second phase of \textit{Walk-Through-Path} on the diagonal shape.}
	\label{fig:FirstPhase}
\end{figure} 

Let $ S_I$ be a diagonal of $n$ nodes $u_n,u_{n-1},\ldots,u_1$, occupying cells $(x,y),(x+1,y+1),\ldots,(x+n-1,y+n-1)$, respectively. Assume for simplicity of exposition that $n$ is a power of 2; this can be dropped later. As argued above, it is sufficient to show how $S_I$ can be transformed into a straight line $S_L$. In phase $i =0$,  the top node $u_1$ moves one position to align with $u_2$ and form a line $L_1$ of length 2, as depicted in Figure \ref{fig:FirstPhase} (a). Next phase,  $L_1$ moves two positions and turns to align with $u_4$, then repeat whatever done in phase $i =0$ again on nodes $u_3$ and $u_4$ (where both form a diagonal segment $D_1$) to create a line $L^{\prime}_1$, and then combine the two perpendicular line $L_1$ and $L^{\prime}_1$ into a line $L_2$ of length 4, as shown in Figure \ref{fig:FirstPhase} (b).

In any phase $i$, for $1 \le i \le \log n$, a line $L_{i}$ occupies $2^{i}$ consecutive cells in a terminal subset of $S_I$ (Figure \ref{fig:I_Phase} (a)). $L_i$ moves through a shortest path towards the far endpoint of the next diagonal segment $D_i$ of length $2^{i}$ (Figure \ref{fig:I_Phase} (b)). Note that for general shapes, this move shall be replaced by a more general \textit{Line-Walk} operation (defined in the sequel). By a recursive call on $D_i$, $D_i$ transforms into a line $L_i^{\prime}$ (Figure \ref{fig:I_Phase} (c)). Finally, the two perpendicular lines $L_{i}$ and $L_i^{\prime}$ are combined in linear time into a straight line $L_{i+1}$ of length $2^{i+1}$ (Figure \ref{fig:I_Phase} (d)). By the end of phase $\log n$, a straight line $ S_L $ of order $n$ has been formed.  

\begin{figure}[th!]
	\centering
	\subcaptionbox{A line $L_{i}$ and a diagonal segment $D_i$ both of length $2^i$.}
	{\includegraphics[scale=0.25]{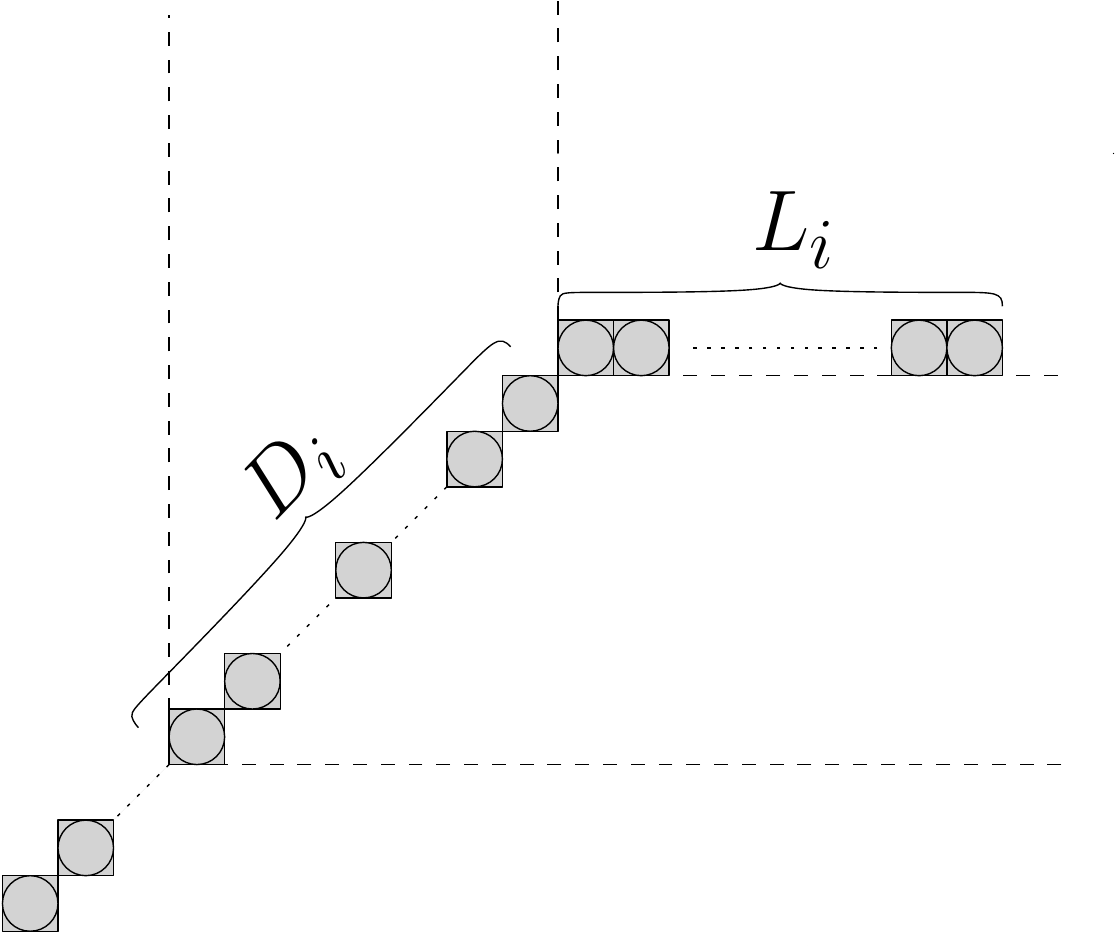}} \qquad
	\subcaptionbox{ $L_{i}$ moves through a shortest path towards the far endpoint of $D_i$.}
	{\includegraphics[scale=0.25]{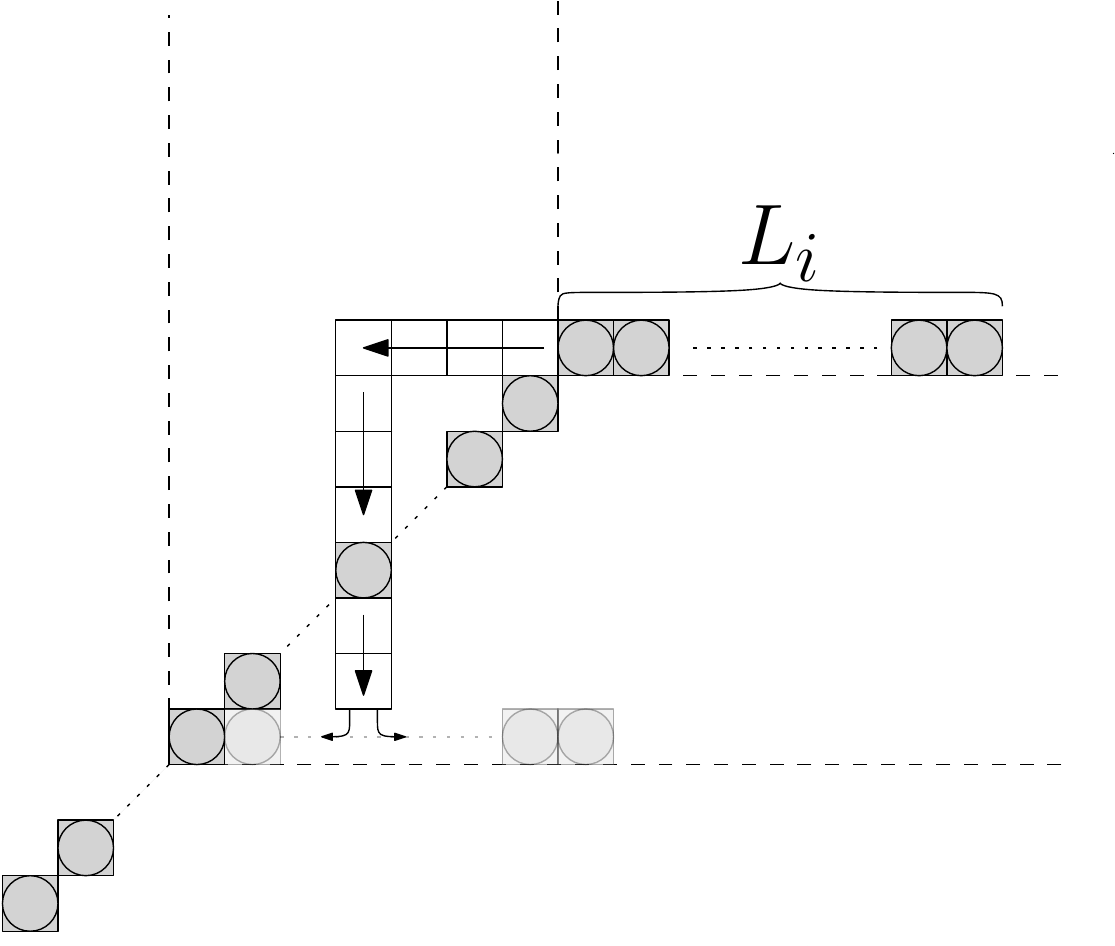}} \qquad
	\subcaptionbox{$D_i$ recursively transforms into a line $L_i^{\prime}$ .}
	{\includegraphics[scale=0.25]{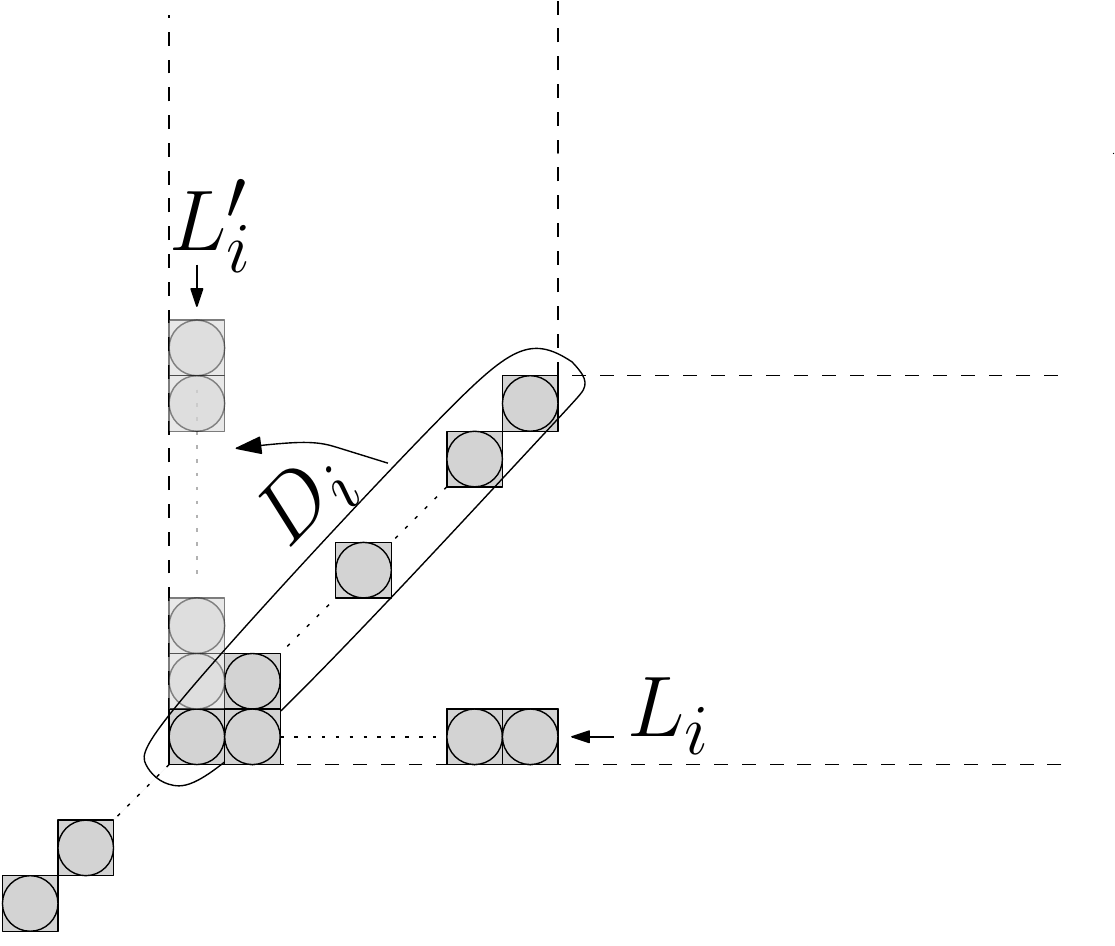}}\qquad
	\subcaptionbox{A  line $L_{i+1}$ of length $2^{i+1}$ formed by combining $L_{i}$ and $L_i^{\prime}$.}
	{\includegraphics[scale=0.25]{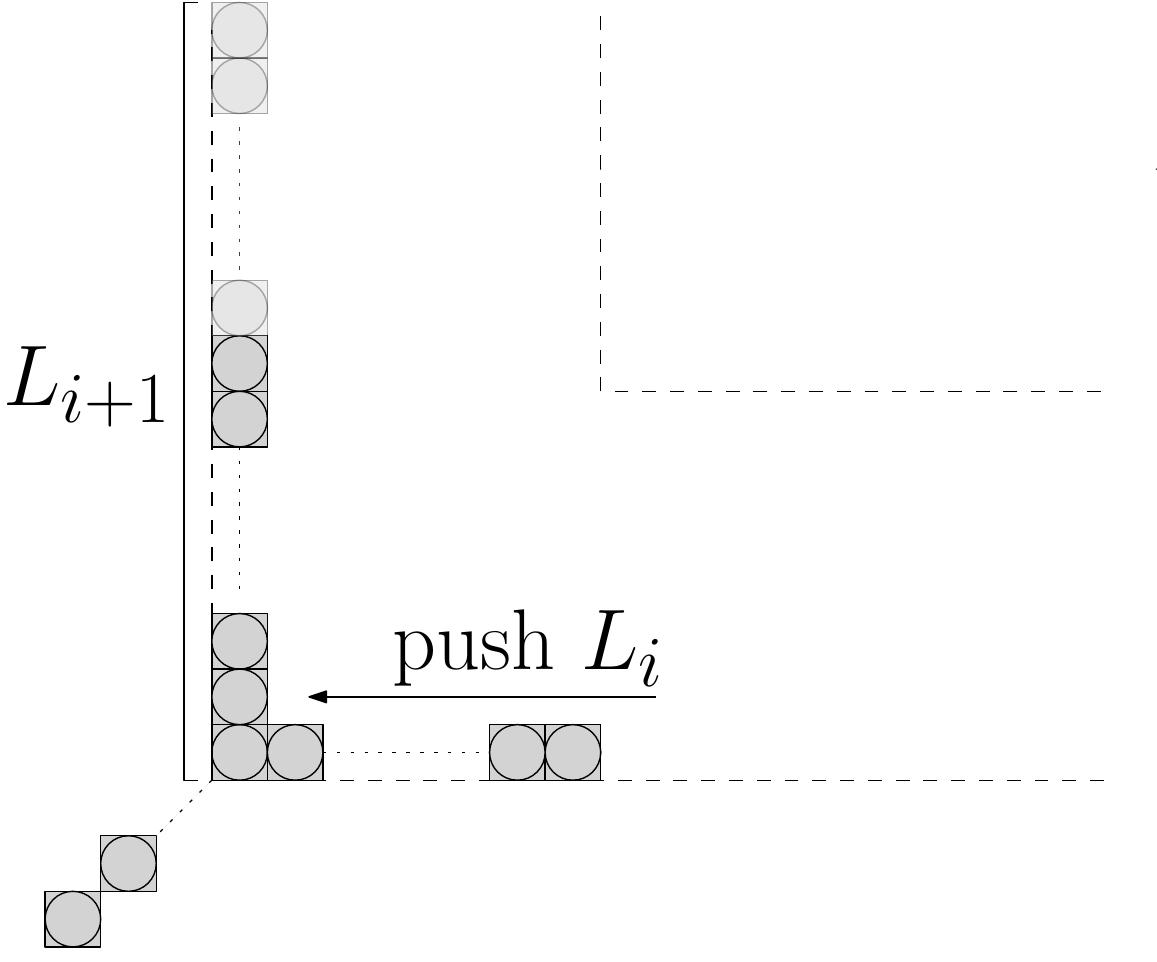}}
	\caption{A snapshot of phase $i$ of \textit{Walk-Through-Path} applied on a diagonal. Light grey cells represent the ending positions of the corresponding moves depicted in each sub-figure.}
	\label{fig:I_Phase}
\end{figure} 

A core technical challenge in making the above transformation work in the general case, is that Hamiltonian shapes do not necessarily provide free space, thus, moving a line has to take place through the remaining configuration of nodes while at the same time ensuring that it does not break their and its own connectivity. In the more general \emph{LineWalk} operation that we now describe, we manage to overcome this by exploiting \emph{transparency} of line moves, according to which a line $L$ can \emph{transparently} walk through any configuration $S$ (independently of the latter's density); see Proposition \ref{prop:LineAlongPath}.

\textbf{\textit{LineWalk.}} At the beginning of any phase $i$, there is a terminal straight line $L_i$ of length $2^i$ containing the nodes $v_1,\ldots, v_{2^{i}}$, which is connected to an $ S_i \subseteq S_I$, such that $S_i$ consists of the $ 2^i $ subsequent nodes, that is $v_{2^{i}+1},\ldots, v_{2^{i+1}}$. Observe that $S_i$ is the next terminal sub-path of the remaining Hamiltonian path of $S_I$. We distinguish the following cases: (1) If $L_i$ and $S_i$ are already forming a straight line, then go to phase $i+1$.  (2) If $S_i$ is a line perpendicular to $L_i$, then combine them into a straight line by pushing $L_i$ to extend $S_i$ and go to phase $i+1$. Otherwise, (3) check if the (Manhattan) distance between $v_{2^{i}}$ and $v_{2^{i+1}}$ is  $\delta(v_{2^i} ,  v_{2^{i+1}}) \le 2^{i}$, then $L_i$ moves from $v_{2^i} = (x,y)$ vertically or horizontally towards either node $(x, y^{\prime})$ or $(x^{\prime}, y)$ in which $L_i$ turns and keeps moving to $v_{2^{i+1}} = (x^{\prime},y^{\prime})$ on the other side of $S_I$. If not, (4) $L_i$ must first pass through a middle node of $S_I$ at $v_{2^{i} + 2^{i-1}}  = (x^{\prime\prime}, y^{\prime\prime})$, therefore $L_i$ repeats (3) twice, from $v_{2^{i}}$ to $v_{2^{i} + 2^{i-1}}$ and then towards $v_{2^{i+1}} $.

Note that cases (3) and (4) ensure that $L_i$ is not disconnected from the rest of the shape. Moreover, moving $L_i$ must be performed in a way that respects transparency (Proposition \ref{prop:LineAlongPath}), so that connectivity of the remaining shape is always preserved and its configuration is restored to its original state. These details are described later in this section. 

Algorithm \ref{algo2}, {\sc HamiltonianToLine}, gives a general strategy to transform any  Hamiltonian shape $ S_I \in \mathcal{H}$ into a straight line in $O(n\log n)$ moves. In every phase $i$, it moves a terminal line $L_i$ of length $2^i$ a distance $2^i$ higher on the Hamiltonian path through a \emph{LineWalk} operation. This leaves a new terminal sub-path $S_i$ of the Hamiltonian path, of length $2^i$. Then the general procedure is recursively called on $S_i$ to transform it into a straight line $L^{\prime}_i$ of length $2^i$. Finally, the two straight lines $L_i$ and $L^{\prime}_{i}$ which are perpendicular to each other are combined into a new straight line $L_{i+1}$ of length $2^{i+1}$ and the next phase begins. The output of {\sc HamiltonianToLine} is a straight line $S_L$ of order $n$. 

\vspace{10pt}
\begin{algorithm}[H]
	$ S = (u_0,u_1,...,u_{|S|-1}) $ is a Hamiltonian shape\\
	\SetAlgoLined
	\DontPrintSemicolon	
	Initial conditions: $ S \gets S_I$ 	and $ L_0 \gets \{u_0\} $\\
	\vspace{7pt}
	\For{$ i = 0, \ldots,$ $ \log |S| $}
	{
		LineWalk($ L_i $)\\
		$ S_i $ $\gets$  select($ 2^i $) \tcp{select the next terminal subset of $ 2^i $ consecutive nodes of $ S $}
		$ L^{\prime}_i $ $\gets$ HamiltonianToLine($ S_i $) \tcp{recursive call on $S_i$}
		$ L_{i+1} $ $\gets$ combine$( L_i, L^{\prime}_i )$ \tcp{combines $L_i$ and $L^{\prime}_i$ into a new straight line $L_{i+1}$}
	}
	\KwOut{a straight line $S_L$}
	\caption{{\sc HamiltonianToLine}($ S $)}
	\label{algo2}
\end{algorithm}
\vspace{10pt}

Now, we are ready to show correctness of \textit{Walk-Through-Path} in the following lemmas.
\begin{lemma} \label{lem:Walk-Through-Path}
	Starting from an initial  Hamiltonian shape $S_I\in\mathcal{H}$  of  order $n$,  {\sc HamiltonianToLine} forms a straight line $S_L\in\mathcal{H}$ of length $n$.
\end{lemma}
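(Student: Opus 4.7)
The plan is to prove this by induction on the phase index $i$ (equivalently, on $\log n$, using the simplifying assumption that $n$ is a power of $2$ noted earlier in the section). The invariant I would maintain is: at the start of phase $i$, the current workspace contains a terminal straight line $L_i$ of length $2^i$ that is connected to a sub-path $S_i$ of the remaining Hamiltonian path, consisting of the next $2^i$ consecutive nodes $v_{2^i+1},\dots,v_{2^{i+1}}$; and at the end of phase $i$ these have been merged into a single straight line $L_{i+1}$ of length $2^{i+1}$. The base case $i=0$ is immediate: $L_0 = \{u_0\}$ is trivially a straight line of length $2^0 = 1$, and by definition of a Hamiltonian shape the node $u_0$ is connected to the next node $u_1$, forming $S_0$.

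For the inductive step, I would verify three sub-claims in the order the algorithm executes them. First, that the \emph{LineWalk} operation correctly relocates $L_i$ so that after it completes: (a) $L_i$ remains a straight line of length $2^i$, (b) it is placed adjacent to the endpoint $v_{2^{i+1}}$ of $S_i$ opposite to the hinge $v_{2^i+1}$, and (c) the cells occupied by $S_i$ have been restored to their original configuration. Claim (c) is exactly the ``no effect'' conclusion of Proposition \ref{prop:LineAlongPath}, and together with ``no break'' this also guarantees that global connectivity is preserved during the move. Because $S_i$ is untouched, it is still isomorphic to a Hamiltonian path of length $2^i$, hence $S_i \in \mathcal{H}$. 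Second, I would apply the inductive hypothesis to the recursive call \textsc{HamiltonianToLine}($S_i$), which yields a straight line $L'_i$ of length $2^i$. Third, since at the moment of combining $L_i$ and $L'_i$ are perpendicular and share a common endpoint, a constant number of turn/push operations aligns them into a straight line $L_{i+1}$ of length $2^{i+1}$, re-establishing the invariant for phase $i+1$. Iterating up to $i = \log n - 1$ produces the straight line $S_L$ of length $n$, proving the lemma.

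The main obstacle I expect is justifying sub-claim (b): that the destination of \emph{LineWalk} is always well-defined and reachable by a path whose traversal keeps $L_i$ connected to the rest of the shape at every intermediate step. For case (3) of \emph{LineWalk}, where $\delta(v_{2^i},v_{2^{i+1}})\le 2^i$, I would argue that the L-shaped route through $(x,y')$ or $(x',y)$ stays within a bounding box that always shares at least one cell with the hinge node $v_{2^i+1}$, so $L_i$ never detaches. For case (4), where the distance exceeds $2^i$, splitting the walk at the midpoint $v_{2^i+2^{i-1}}$ is exactly what keeps both half-walks within reach of the corresponding halves of $S_i$; I would check that at the pivot moment $L_i$ touches this midpoint node, ensuring continuous adjacency to $S_i$. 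Once these geometric adjacencies are established, transparency (Proposition~\ref{prop:LineAlongPath}) takes care of the interior nodes of $S_i$ along the way, and the remaining verification of combine and of the straight-line shape of $L_{i+1}$ is routine.
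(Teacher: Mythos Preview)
Your inductive argument is correct and in fact considerably more thorough than what the paper gives for this lemma. The paper's own proof is a three-line sketch: it simply observes that at the start of the final phase the configuration is a line $L$ of length $2^{\log n-1}$ together with a sub-shape $S$ of the same size, and that after LineWalk, the recursive call, and combine, one obtains a line of length $2^{\log n}=n$. No invariant is stated and no induction is carried out explicitly; the recursion is taken as self-evidently correct.

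The structural difference is that you fold into this lemma the geometric justification for LineWalk (your sub-claim (b), cases (3) and (4), and the appeal to Proposition~\ref{prop:LineAlongPath}), whereas the paper factors all of that out into the separate Lemma~\ref{lem:CorrectnessOf_LineWalk}. For the statement as written---which asserts only that a straight line of length $n$ is produced, not that connectivity is maintained---the paper's minimal argument suffices, and your connectivity discussion is strictly extra. Your version has the advantage of being self-contained and of actually verifying the invariant at every phase; the paper's version keeps Lemma~\ref{lem:Walk-Through-Path} lightweight and defers the delicate part to where connectivity is the explicit claim. Either organisation is fine.
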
   
\begin{proof}
	By the beginning of the final phase, the shape configuration consists of two parts, a straight line $L$ of length $2^{\log n -1 }$ and a shape $S$ of $2^{\log n -1 }$ nodes. During this phase,   $L$ performs a \emph{LineWalk} operation, $S$ transforms recursively into $L^{\prime}$  and then  $L$ combines with $L^{\prime}$ into a straight line $S_L$ of length $2^{log n} =n$. Consequently, $S_L$  shall occupy $n$ consecutive cells on the grid, either vertically or horizontally.  
\end{proof}
\begin{lemma} \label{lem:CorrectnessOf_LineWalk}
	The operation of \textit{Line-Walk} preserves the whole connectivity of the shape during phase $i$. 
\end{lemma}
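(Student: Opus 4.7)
The plan is to carry out a case analysis following the four branches of the \emph{LineWalk} operation, using Proposition \ref{prop:LineAlongPath} as the workhorse whenever $L_i$ actually moves through the rest of the shape.

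Cases (1) and (2) are dispatched in one line each. In case (1) no movement occurs, so connectivity is preserved by hypothesis. In case (2), $L_i$ executes a single perpendicular extension move onto $S_i$; the move creates the required single adjacency between $L_i$'s endpoint and an endpoint of $S_i$ without disturbing any other node of the shape, so the induced graph of $S$ remains connected.

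For cases (3) and (4), the line $L_i$ walks an L-shaped grid path $P$ (respectively two consecutive L-shapes meeting at the midpoint $v_{2^i+2^{i-1}}$) from $v_{2^i}$ to $v_{2^{i+1}}$ through the possibly dense remainder $S\setminus L_i$. I would encode the invariant to maintain at every intermediate step as the conjunction of three statements: (a) $L_i$ remains a line (possibly mid-turn at a corner cell); (b) every node of $S\setminus L_i$ occupies its original position, up to the transient bumping pattern at corners depicted in Figure \ref{fig:Line_Path_2}; and (c) at least one cell of $L_i$ is grid-adjacent to a cell of $S\setminus L_i$. Invariants (a) and (b) are exactly what Proposition \ref{prop:LineAlongPath} delivers: its \emph{no-effect} clause ensures that every originally occupied cell of $P$ stays occupied throughout and the configuration is fully restored after the walk, while its \emph{no-break} clause guarantees that $S$ never disconnects during the walk.

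The main obstacle is invariant (c), which Proposition \ref{prop:LineAlongPath} does not by itself guarantee: in principle the line could drift into free space and detach from $S_i$. This is precisely the purpose of the distance thresholds built into the case split. In case (3), the Manhattan distance $\delta(v_{2^i},v_{2^{i+1}})\le 2^i$ equals the length of $L_i$, so at every instant during the walk $L_i$ still overlaps the ``spine'' of the Hamiltonian sub-path between $v_{2^i}$ and $v_{2^{i+1}}$, forcing at least one cell of $L_i$ to be grid-adjacent to some node of $S_i$. Case (4) is two applications of (3) chained through the midpoint $v_{2^i+2^{i-1}}$, which is chosen precisely so that each of the two sub-walks individually satisfies the length budget of case (3). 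Conjoining (a), (b) and (c) at every step yields connectivity of $S$ throughout phase $i$, which is the statement of the lemma.
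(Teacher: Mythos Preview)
Your proposal is correct and follows essentially the same route as the paper's proof: a four-way case split on the \emph{LineWalk} branches, with Proposition~\ref{prop:LineAlongPath} doing the heavy lifting in cases (3) and (4), and the Manhattan-distance threshold $\delta(v_{2^i},v_{2^{i+1}})\le 2^i=|L_i|$ supplying the reason $L_i$ never detaches from $S_i$. Your packaging into the three invariants (a)--(c) is a cleaner way to state what the paper argues informally; in particular your isolation of (c) as the one thing Proposition~\ref{prop:LineAlongPath} does \emph{not} give you, and your explanation that the length budget is precisely what forces (c), is sharper than the paper's corresponding sentence (``$L_i$ shall arrive at $(x',y')$, occupy all $2^i$ cells and still connected'').

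Two small points where you under-describe relative to the paper. First, in case (2) the combine is not a single move but a sequence of $2^i$ pushes of $L_i$ along $S_i$'s axis, and these pushes may traverse occupied cells of the surrounding configuration; the paper explicitly invokes transparency (Proposition~\ref{prop:LineAlongPath}) here too, whereas you say the move happens ``without disturbing any other node''. Second, your phrase ``overlaps the spine of the Hamiltonian sub-path'' is slightly loose: what the length bound actually guarantees is that at every step either $L_i$'s trailing end is still at (or adjacent to) $v_{2^i}$, hence adjacent to $v_{2^i+1}\in S_i$, or $L_i$'s leading end has already reached $v_{2^{i+1}}\in S_i$; the path $P$ itself need not intersect $S_i$ except at its endpoints. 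Neither point is a genuine gap, but tightening them would make the argument airtight.
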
   
\begin{proof}
	Let $S_I\in\mathcal{H}$ be a Hamiltonian shape of  order $n$ in phase $i$, which terminates at a straight line $L_i$ of length $2^{i}$ nodes, starting from $v_1$ to $v_{2^i}$. During phase $i$, this transformation doubles the size of $L_i$ by merging its nodes with the following $2^i$ nodes on the Hamiltonian path that are forming a shape $S_i$  from $v_{2^i+1}$ to $v_{2^{i+1}}$. 
	
	We now show  case (1) and (2) of the \textit{Line-Walk} operation on a horizontal $L_i$ (\textit{the other cases are symmetric by rotating the shape $ 90\degree $, $ 180\degree $ or $ 270\degree $}). In case 1, $L_i$ and $S_i$ are already forming a straight line $L_{i+1}$ of length $2^{i+1}$, hence the whole configuration of the shape left unchanged. In case (2), $L_i$  and $S_i$ are forming two perpendicular straight lines in which $L_i$ can easily push into $S_i$  and extend it by $2^i$. As $L_i$ pushes and  $S_i$ extends, they are replacing and restoring any occupied cell along their way through \emph{any} configuration (independently of how density is) by exploiting \emph{transparency} of line moves in Proposition \ref{prop:LineAlongPath}. As a result, the \textit{Line-Walk} operation preserves connectivity of  $L_i$, $S_i$ and the whole shape. 	

	Now, let $L_i$ and $S_i$ be of the same configuration of case (3) or (4) described above, where $L_i$ has a length of  $2^{i}$ and $S_i$ consists of $2^{i }$ nodes $v_{2^{i}+1},\ldots, v_{2^{i+1}}$ that occupy multiple rows and columns.  Assume that $L_i$ is horizontal and occupies $ (x,y), (x+1,y), \ldots , (x+2^i, y)$, this is sufficient as the other cases are symmetric if one rotates the whole shape $ 90\degree $, $ 180\degree $ or $ 270\degree $. Observe that $S_i$ is the next terminal sub-path of the remaining Hamiltonian path. Consequently, the Manhattan distance between $ v_ {2 ^ i} $ and $ v_ {2 ^ {i + 1}} $ specifies the path that $ L_i $ will follow to meet and align with the far endpoint of $ S_i $.
	
	Recall that the minimum Manhattan (taxicab) distance of any path in a square grid, which starts from point  $u$ and ends at $v$, $\delta(u, v) = |u_x - v_x| + |u_y - v_y|  $, will always have the same length and this transamination picks a path of minimum turns (aiming for low cost). Hence,  there are two feasible L-shaped paths from $u$ to $v$, each of which has one turn. The first path starts horizontally from point $(u_x, u_y)$  towards $(v_x, u_y)$ then turns vertically to $(v_x, v_y)$, and the second one starts from $(u_x, u_y)$ vertically to  $(u_x, v_y)$ then turns horizontally towards $(v_x, v_y)$.  
	
	In case (3), the Manhattan distance between $v_{2^i}$ and $v_{2^{i+1}}$ is $\delta(v_{2^i}, v_{2^{i+1}}) \le 2^{i}$, then $L_i$ moves horizontally from $v_{2^i} = (x,y)$ along $(x^{\prime}, y)$ in which $L_i$ changes its direction towards $v_{2^{i+1}} = (x^{\prime}, y^{\prime})$. In a worst-case configuration, a path may consists of  at $2^{i}$ empty cells $L_i$ must pass to reach the destination cell $(x^{\prime}, y^{\prime})$. Recall that $L_i$ contains $2^i$ nodes, hence $L_i$ shall arrive at $(x^{\prime}, y^{\prime})$, occupy all $2^{i}$ cells and still connected. Once $L_i$ arrived there, it can safely change its direction  to line up with $v_{2^{i+1}} $ and occupy the column $x^{\prime}$, while being connected too. Moreover, assume the path along which $L_i$ has moved contains  non-empty cells, therefore all of them are restored by \emph{transparency} of line moves shown in Proposition \ref{prop:LineAlongPath}.
	
	That is, as $L_i$ moves along a path of non-empty cells within phase $i$, it pushes a node $u \notin L_i$  and replaces it by node $u \in L_i$.  When $L_i$ leaves this path during phase $i+1$, it rosters any non-empty cell occupied by a pre-existing node $u\notin L$. 	The same argument holds for (4) by applying (3) twice. Figure \ref{fig:StraightLine_SpanningLine} demonstrates an example of case (3) and (4).	As a result, The operation of \textit{Line-Walk} keeps the whole shape connected during any phase $i$ of this transformation. 

\end{proof}
\begin{figure}[th!]
	\centering
	\subcaptionbox{The case when $\delta(v_{2^i}, v_{2^{i+1}}) \le 2^{i}$.}
	{\includegraphics[scale=0.57]{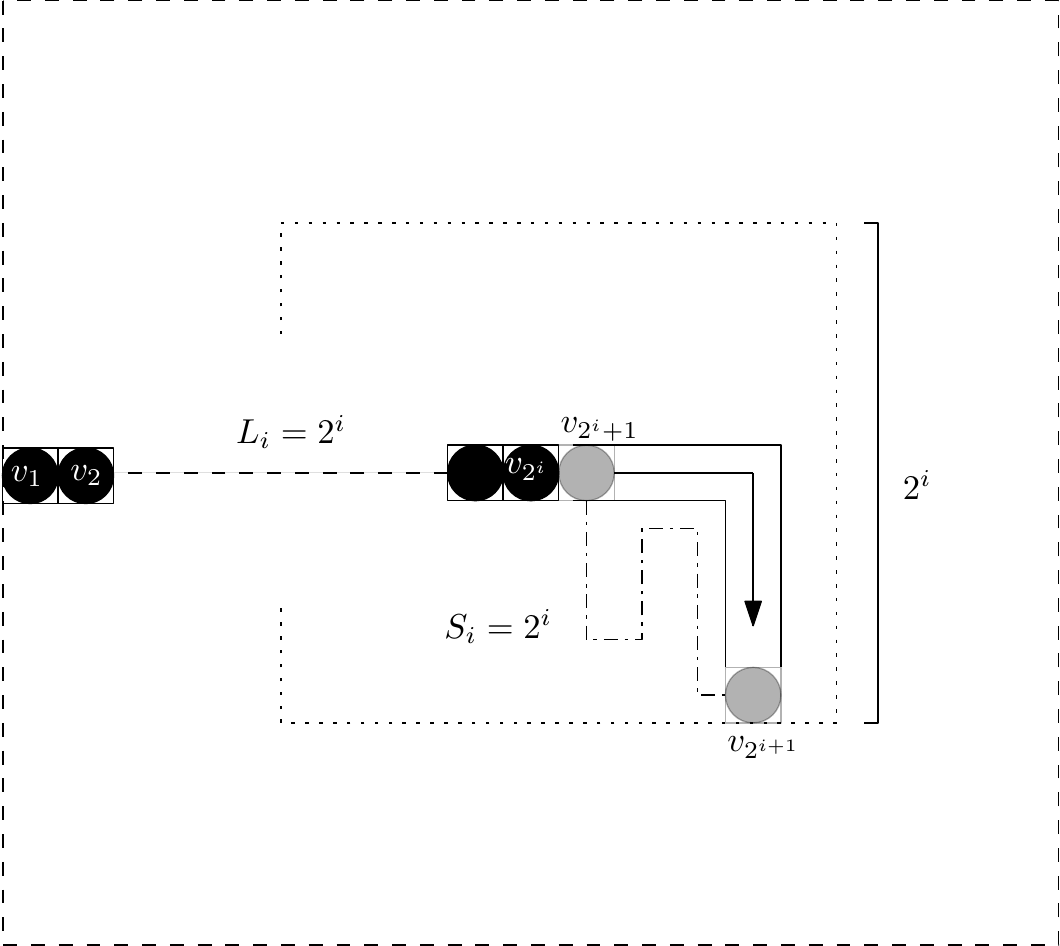}} \qquad
	\subcaptionbox{The case when $\delta(v_{2^i}, v_{2^{i+1}}) >2^{i}$, where node  $v_{2^{i}+ 2^{i-1}} $ at the middle of $S_i$.}
	{\includegraphics[scale=0.57]{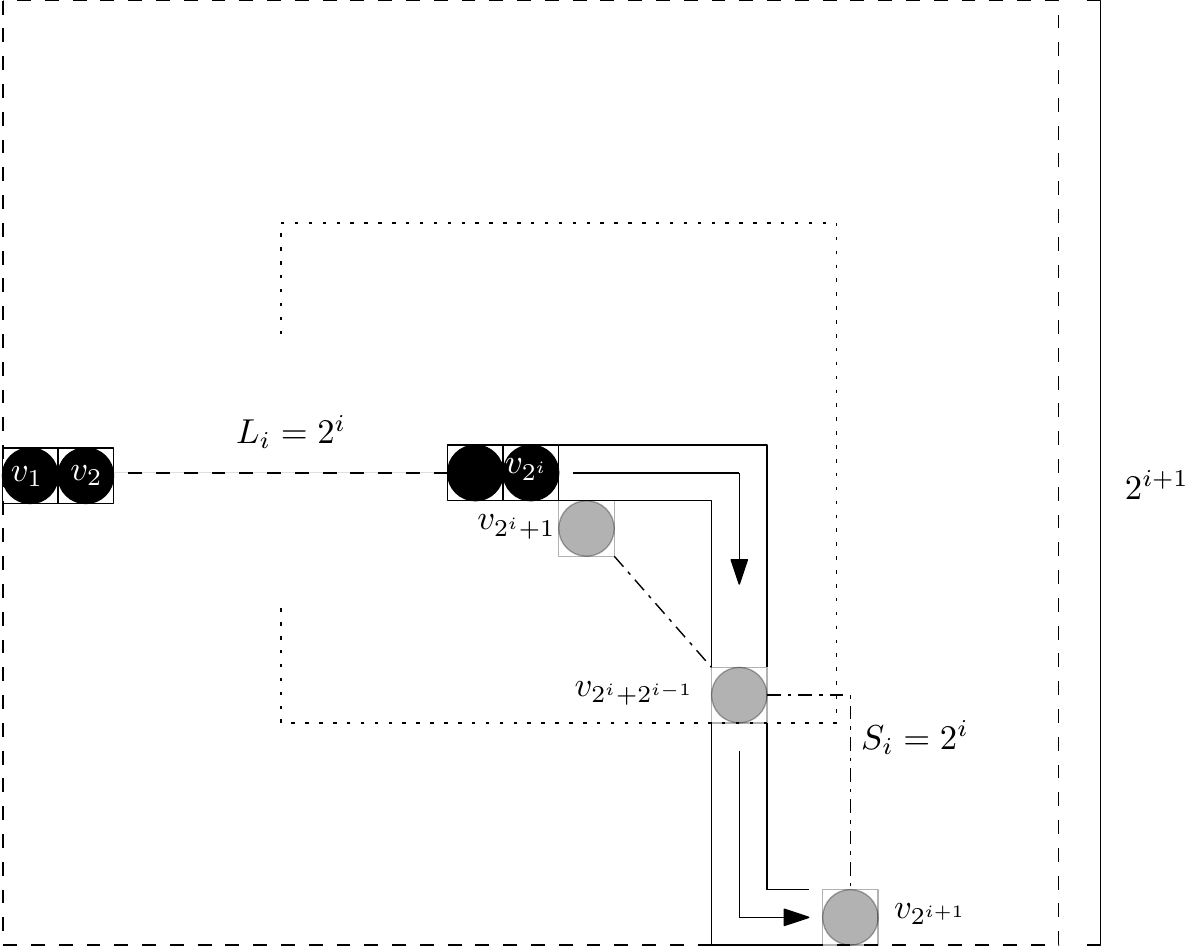}}
	\caption{The two cases of applying \textit{Line-Walk} operation on $L$. }
	\label{fig:StraightLine_SpanningLine}
\end{figure}

As a result of  Lemma \ref{lem:Walk-Through-Path} and \ref{lem:CorrectnessOf_LineWalk}, we obtain the following lemma: 

\begin{lemma} \label{lem:Walk-Through-PathConnectivity}
	Given an initial Hamiltonian shape  $S_{I} \in \mathcal{H}$ of order $n$, {\sc HamiltonianToLine} transforms $S_{I} $ into a straight line $S_L$ in $O(n\log n)$ moves, without breaking connectivity during the transformation. 
\end{lemma}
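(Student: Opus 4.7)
The plan is to combine the two already-established properties of {\sc HamiltonianToLine} with a recursive cost analysis of Algorithm \ref{algo2}. Correctness, namely that the algorithm transforms $S_I$ into a straight line $S_L$ of $n$ nodes, is exactly the content of Lemma \ref{lem:Walk-Through-Path}. Preservation of global connectivity at every intermediate step has already been verified, phase by phase, in Lemma \ref{lem:CorrectnessOf_LineWalk}. Thus the only remaining task is to bound the total number of line moves by $O(n\log n)$.

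To do so, I would fix a phase $i$ of the outer recursion and bound separately the cost of its three sub-steps, applied to a terminal line $L_i$ of length $2^i$ and the next Hamiltonian sub-path $S_i$ of the same length. In cases (1) and (2) of \emph{LineWalk} the cost is $O(2^i)$ by inspection; in the more interesting cases (3) and (4), $L_i$ follows an L-shaped path of length at most $2\cdot 2^i$ (resp.\ the concatenation of two such paths), each cell of advancement costing one line move plus a bounded additive overhead per turning corner (as in Figure \ref{fig:Line_Path_2}). Crucially, by Proposition \ref{prop:LineAlongPath} the density of the remaining shape along that path induces no asymptotic delay, so the total cost of \emph{LineWalk} in phase $i$ is $O(2^i)$. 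The subsequent combine step straightens the two perpendicular lines $L_i$ and $L^{\prime}_i$ into $L_{i+1}$ of length $2^{i+1}$ in a further $O(2^i)$ moves.

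Letting $T(n)$ denote the worst-case number of line moves made by {\sc HamiltonianToLine} on a Hamiltonian input of order $n$, the recursive structure of Algorithm \ref{algo2} then yields
\[
T(n) \;\le\; \sum_{i=0}^{\log n - 1} \bigl( T(2^i) + c\cdot 2^i \bigr)
\]
for a suitable constant $c>0$, with $T(1)=O(1)$. The final step is to verify by induction on $\log n$ that this recurrence resolves to $T(n)=O(n\log n)$; the inductive calculation is routine, using the standard identity $\sum_{i=0}^{k-1} i\cdot 2^i = \Theta(k\cdot 2^k)$.

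The main obstacle I anticipate lies not in the recurrence itself but in making sure that the additive overheads incurred at the corners of the \emph{LineWalk} path, and at the junction where $L_i$ and $L^{\prime}_i$ are combined, do not accumulate super-linearly within a single phase. This is exactly where transparency of line moves (Proposition \ref{prop:LineAlongPath}) is indispensable: it guarantees that traversing an arbitrarily dense configuration along $P$ costs asymptotically no more than traversing an empty $P$, so the per-phase bound $O(2^i)$ (plus the recursive term) is preserved even when every cell along the route is occupied by the rest of $S_I$.
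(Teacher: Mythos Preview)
Your proposal is correct and follows essentially the same route as the paper: invoke Lemma~\ref{lem:Walk-Through-Path} for correctness, Lemma~\ref{lem:CorrectnessOf_LineWalk} for connectivity, and then carry out a per-phase cost analysis of \emph{LineWalk} and \emph{combine} that leads to the recurrence $T(n)\le\sum_{i<\log n}\bigl(T(2^i)+c\cdot 2^i\bigr)$ resolving to $O(n\log n)$. The paper's own text for Lemma~\ref{lem:Walk-Through-PathConnectivity} merely cites Lemmas~\ref{lem:Walk-Through-Path} and~\ref{lem:CorrectnessOf_LineWalk} and defers the running-time computation to the subsequent Lemmas~\ref{lem:Walk-Through-PathRunningTime} and~\ref{lem:Walk-Through-PathTotalRunningTime}, which perform exactly the same phase-by-phase bounding and recursive summation you outline; your write-up simply folds those later lemmas into a single argument.
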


Now, we are ready to analyse the running time of {\sc HamiltonianToLine}. 

\begin{lemma} \label{lem:Walk-Through-PathRunningTime}
	By the end of phase $i$, for all $0\le i \le \log n$, {\sc HamiltonianToLine} forms a straight line $L$ of $2^i$ nodes in at most $O(n\log n)$ steps, without breaking connectivity of  the whole shape. 
\end{lemma}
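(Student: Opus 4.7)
The plan is to prove the running-time bound by induction on $n$, the order of the input Hamiltonian shape. Let $T(n)$ denote the worst-case number of moves of {\sc HamiltonianToLine} on such an input; the inductive hypothesis is $T(m) \le c\, m \log m$ for every $m < n$ and a suitable constant $c$. The correctness claims---that phase $i$ ends with a straight line of length $2^i$ and that connectivity is preserved throughout---follow directly from Lemmas \ref{lem:Walk-Through-Path} and \ref{lem:CorrectnessOf_LineWalk}, so I concentrate exclusively on the step count.

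The first step is to bound the non-recursive work in each phase $i$. At the start of phase $i$, a straight terminal line $L_i$ of length $2^i$ is attached to the next sub-path $S_i$ of length $2^i$. The \emph{LineWalk} sub-procedure moves $L_i$ either directly (cases (1)--(2)) or along one or two L-shaped trajectories (cases (3)--(4)), each of Manhattan length at most $2^i$; so $L_i$ traverses $O(2^i)$ cells with a constant number of turns. In a free grid, moving a line of length $2^i$ a distance $O(2^i)$ with $O(1)$ turns costs $O(2^i)$ moves. By Proposition \ref{prop:LineAlongPath}, this bound remains $O(2^i)$ even when the intervening configuration is densely occupied, since the line is not delayed by the encountered nodes. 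The subsequent \emph{combine} step merges the two perpendicular lines $L_i$ and $L'_i$, each of length $2^i$, into $L_{i+1}$ in $O(2^i)$ further moves. Thus the per-phase non-recursive cost is $O(2^i)$.

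Adding the cost of the recursive call on $S_i$, which has size $2^i$ and by the inductive hypothesis costs at most $T(2^i) = O(i \cdot 2^i)$, the total work obeys the recurrence
\begin{equation*}
T(n) \le \sum_{i=0}^{\log n - 1}\bigl[\,T(2^i) + O(2^i)\,\bigr] = O(n) + \sum_{i=0}^{\log n - 1} O\!\bigl(i \cdot 2^i\bigr) = O(n \log n),
\end{equation*}
using the identity $\sum_{i=0}^{k} i \cdot 2^i = \Theta(k \cdot 2^k)$. Choosing $c$ large enough closes the induction, and connectivity throughout is guaranteed by Lemma \ref{lem:CorrectnessOf_LineWalk} applied in every phase.

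The main obstacle is the per-phase \emph{LineWalk} bound: naively one might fear that each of the $O(2^i)$ potentially occupied cells on $L_i$'s trajectory adds $\Omega(2^i)$ work to accommodate a line of length $2^i$, yielding an $\Omega(2^{2i})$ per-phase cost and ruining the budget. Proposition \ref{prop:LineAlongPath}'s \emph{no delay} clause removes exactly this worry, guaranteeing that transit through an arbitrarily dense configuration is no more expensive (up to constants) than transit through empty space, while the \emph{no effect} and \emph{no break} clauses maintain the global invariants required for subsequent phases and for the recursive call on $S_i$ to proceed on an unperturbed sub-shape.
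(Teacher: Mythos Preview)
Your proposal is correct and follows essentially the same approach as the paper: both establish an $O(2^i)$ non-recursive cost per phase (LineWalk via Proposition~\ref{prop:LineAlongPath} plus the combine step), plug in the recursive call on the size-$2^i$ sub-shape $S_i$, and solve the resulting recurrence to obtain $O(n\log n)$. The only stylistic difference is that you set up a clean strong induction on the total cost $T(n)$ with hypothesis $T(m)\le c\,m\log m$, whereas the paper defines $T(i)$ as the cost of phase~$i$ (including its recursive call), unrolls the recursion $T(i)=O(2^i)+\sum_{j<i}T(j)$ by hand with explicit constants to get $T(i)=O(i\cdot 2^i)$, and then sums over phases in the subsequent lemma; your packaging is tighter but the content is the same.
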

\begin{proof}
	 The bound $O(n)$ trivially holds for case (1) and (2), so we analyse a worst-case in which the transformation matches the maximum running time in every phase $i$, for all $1 \le i \le \log n$. In phase $i$, a straight line $L_{i}$ of length $2^{i}$ traverses along a path of at most $2\cdot(2^{i}-2) = 2^{i+1} - 4$ cells in which $L_{i}$ changes its direction twice by at most $ 2^{i+2} - 4$ moves. There is an additive factor of 2 for the special-case of turning $L_{i}$ on a non-empty corner as in Figure \ref{fig:Line_Path_2}. Then the operation of \textit{Line-Walk} takes at total moves of at most:
	 
	 \begin{align*}
	 (t_1)_i  &= (2^{i+1} - 4) + (2^{i+2} - 4)  + 2=  6(2^i -1).
	 \end{align*}
	 
	Next, a recursive call of the algorithm, {\sc HamiltonianToLine}, on $S_i$ of  $2^{i}$ to transform it into a straight line $L^{\prime}_i$, requires  the total sum given by:
	
	\begin{align*}
	(t_2)_i  &= \sum_{i=1}^{i-1} T(i-1).
	\end{align*}	
	
	By the end of phase $i$,  $L_i$ and $L^{\prime}_i$ combine together into a straight line $L_{i+1}$ of  length $2^{i+1}$, in a total cost of at most:
	
	\begin{align*}
	(t_3)_i  &= 2(2^i - 1 ),
	\end{align*}
	
	steps. Hence, {\sc HamiltonianToLine} completes phase $i$ in a total moves $T(i)$ of at most:
	
	\begin{align*}
	T(i)&= (t_1)_i  + (t_2)_i  + (t_3)_i \\
	&= 6(2^i -1)+ \Big(\sum_{i=1}^{i-1} T(i-1)\Big)   +  2(2^i - 1 ) \\
	&\approx 2(2^i)+ \Big(\sum_{i=1}^{i-1} T(i-1)\Big)
	\end{align*}
	
	Now, we compute the recursion of $(t_2)_i $ as follows:
	\begin{align*}
	T(1)  &=  2(2)   \\
	T(2)  &= 2(2^2 )+2(2) = 2(2^2+2)\\  
	. \\
	. \\
	T(i-1) &= 2\Big(  2^{i -1}+ 2^{i-2} + 2(2^{i-3}) + 2^2 (2^{i-4}) + \ldots + 2^{i-4}(2^2)+2^{i-3}(2)\Big)\\
	&< 2\Big(2^{i -1} + 2^{i -1}  + 2^{i -1}  + 2^{i -1}  + \ldots + 2^{i -1} +2^{i -1} \Big)\\
	&= 2\big(  2^{i-1}(i -1) \big).        
	\end{align*}
	Finally, in phase $i$, {\sc HamiltonianToLine} takes a total moves $T(i)$ at most:
	\begin{align*}
	T(i)&= (t_1)_{i} + (t_2)_{i} + (t_3)_{i} \\
	&=   6(2^i -1)+ 2\big(  2^{i-1}(i -1)+ 2(2^i - 1 )\\
	&\le  2^{\log n-1}(\log n-1) - 2^{\log n} = \dfrac{n(\log n-1)}{2} - n =  \dfrac{n\log n-n}{2} - n \\
	&= O(n \log n),
	\end{align*}
	steps. 
\end{proof}

\begin{lemma}\label{lem:Walk-Through-PathTotalRunningTime}
	Given an initial Hamiltonian shape  $S_{I} \in \mathcal{H}$ of order $n$, {\sc HamiltonianToLine} transforms $S_{I} $ into a straight line $S_L$ in $O(n\log n)$ moves, without breaking connectivity during the transformation. 
\end{lemma}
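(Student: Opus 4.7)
The plan is to assemble the final statement from the three preceding results in this section rather than re-prove anything from scratch. Correctness, i.e.\ that the output is indeed a straight line $S_L$ of order $n$, is exactly Lemma \ref{lem:Walk-Through-Path}, which follows the recursive doubling: at the end of phase $\log n$ the terminal line has length $2^{\log n}=n$. Connectivity throughout the transformation reduces to connectivity within each phase. In a single phase, the only operations are a \emph{LineWalk} of $L_i$, a recursive invocation of {\sc HamiltonianToLine} on the next terminal sub-path $S_i$, and a final combine of the two perpendicular lines $L_i$ and $L^\prime_i$; Lemma \ref{lem:CorrectnessOf_LineWalk} handles the first by invoking the transparency Proposition \ref{prop:LineAlongPath}, while the other two are themselves sequences of line moves on sub-shapes that satisfy the same inductive hypothesis. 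Hence by induction on the phase number, global connectivity is preserved throughout.

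For the time bound I would let $T(m)$ denote the total number of moves spent by {\sc HamiltonianToLine} on a Hamiltonian shape of order $m=2^k$, and set up the recurrence
\begin{equation*}
T(m) \;\le\; T(m/2) \;+\; c\, m,
\end{equation*}
with $T(1)=0$. Here the $T(m/2)$ term is the single recursive call on the terminal sub-path $S_i$ of length $2^{k-1}$ performed in the top-level phase $k-1$, and the additive $c\,m$ bounds both the \emph{LineWalk} of the current terminal line of length $2^{k-1}$ and the concluding combine step; Lemma \ref{lem:Walk-Through-PathRunningTime} already carries out exactly this bookkeeping, absorbing the two-turn path of length $O(2^i)$ and the constant corrections from turning on non-empty corners (Figure \ref{fig:Line_Path_2}) into the $6(2^i-1)$ estimate for $(t_1)_i$. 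Unfolding the recurrence yields $T(n)=O(n\log n)$, which matches the per-phase telescoping $\sum_{i=0}^{\log n-1} O(2^i)$ across $\log n$ phases. When $n$ is not a power of two, rounding up to the next power of two loses only a constant factor, so the asymptotic bound is unchanged.

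The main obstacle I anticipate is not arithmetic but making the recurrence faithfully reflect the recursive structure of Algorithm \ref{algo2}: the cost $(t_1)_i$ charged to the top-level \emph{LineWalk} at phase $i$ must stay $O(2^i)$ regardless of the density of cells that $L_i$ traverses, which is precisely what Proposition \ref{prop:LineAlongPath}\emph{(i)} (no delay) guarantees, and the recursive contribution $(t_2)_i$ must be billed to $T(2^i)$ exactly once per phase rather than duplicated across phases. Once these accounting points are pinned down, the combination of Lemmas \ref{lem:Walk-Through-Path}, \ref{lem:CorrectnessOf_LineWalk} and \ref{lem:Walk-Through-PathRunningTime} yields the stated $O(n\log n)$ connectivity-preserving transformation, concluding the proof.
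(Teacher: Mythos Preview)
Your assembly of the correctness and connectivity parts from Lemmas \ref{lem:Walk-Through-Path} and \ref{lem:CorrectnessOf_LineWalk} is fine and matches what the paper does. The problem is in your running-time recurrence. You write $T(m)\le T(m/2)+cm$ and justify the single $T(m/2)$ as ``the single recursive call on the terminal sub-path $S_i$ of length $2^{k-1}$ performed in the top-level phase $k-1$''. But Algorithm \ref{algo2} is a \emph{for} loop over all phases $i=0,\ldots,\log|S|-1$, and \emph{every} phase issues its own recursive call {\sc HamiltonianToLine}$(S_i)$ on a sub-path of size $2^i$. Your recurrence accounts only for the last phase and drops the cost of phases $0,\ldots,k-2$ entirely. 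Equivalently: the work done in phases $0$ through $k-2$ to build the line $L_{k-1}$ of length $m/2$ is itself another $T(m/2)$, so the correct recurrence is
\[
T(m)\;\le\;2\,T(m/2)+c\,m,
\]
which indeed solves to $O(m\log m)$. Your stated recurrence $T(m)\le T(m/2)+cm$ unfolds to $cm+cm/2+cm/4+\cdots=O(m)$, not $O(m\log m)$; the sentence ``which matches the per-phase telescoping $\sum_{i=0}^{\log n-1}O(2^i)$'' confirms the slip, since that sum is $O(n)$.

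The paper's own proof does not set up a clean divide-and-conquer recurrence; instead it bounds the per-phase cost $T(i)=(t_1)_i+(t_2)_i+(t_3)_i$ directly in Lemma \ref{lem:Walk-Through-PathRunningTime}, expanding the recursive term $(t_2)_i=\sum_{j<i}T(j)$ by hand, and then in the present lemma sums $\sum_{i=1}^{\log n}T(i)$. Your approach via $T(m)=2T(m/2)+O(m)$ is cleaner and would be a perfectly good alternative once the missing factor of $2$ is restored.
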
     	
\begin{proof}
	By Lemma \ref{lem:Walk-Through-PathRunningTime}, we use induction to analyse the running time of this transformation. The base case is  holds trivially for the first phase. Assume that it holds for phase $i$, and we prove this must hold also for phase $i+1$.  
	\begin{align*}
	T(i+1)&= (2^{(i+1)-1}\big((i+1)-1\big) - 2^{i+1} =  2^i(i) - (2^i\cdot 2) = 2^i (i -2) \\
	&\le 2^{\log n}(\log n -2) = n \log n - 2n\\
	&= O(n \log n).
	\end{align*} 
	The assumption is also true for phase $i$.  Hence, {\sc HamiltonianToLine} makes a total number of moves  bounded by:
	\begin{align*}
	T &= \sum_{i=1}^{\log n} T(i) 
	= \sum_{i=1}^{\log n}  2^{i-1}(i-1) - 2^i 
	= \sum_{i=1}^{\log n-1}(i-2)2^i - 2^{\log n}\\
	&\le \sum_{i=1}^{\log n-1} i \cdot 2^i - n
	\le  \sum_{j=1}^{\log n} \sum_{i=j}^{\log n} 2^i -n 
	\le \sum_{j=1}^{\log n} n -  n \le n \log n - n\\
	&\le O(n\log n).
	\end{align*}

\end{proof}

Finally, reversibility of  line moves \cite{AMP19}, Lemmas \ref{lem:Walk-Through-PathConnectivity} and \ref{lem:Walk-Through-PathTotalRunningTime} together imply that:

\begin{theorem} \label{theo:Walk_Through}
	For any pair of Hamiltonian shapes $S_I, S_F \in \mathcal{H}$ of the same order $ n $, \textit{Walk-Through-Path} transforms $S_I$ into $S_F$ (and $S_F$  into $S_I$) in $ O(n\log n) $ moves, while preserving connectivity of the shape during its course.
\end{theorem}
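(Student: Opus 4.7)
The plan is to reduce the bidirectional transformation between $S_I$ and $S_F$ to two one-way reductions to a straight line, glued together via reversibility of line moves. First I would invoke {\sc HamiltonianToLine} on $S_I$; by Lemmas \ref{lem:Walk-Through-PathConnectivity} and \ref{lem:Walk-Through-PathTotalRunningTime}, this produces a straight line $S_L^I$ of length $n$ using $O(n\log n)$ moves while keeping the shape connected at every intermediate step. I would then run the same procedure on $S_F$ to obtain a second straight line $S_L^F$, under identical guarantees.

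Since $S_L^I$ and $S_L^F$ are both straight lines of length $n$, they are congruent; if they differ in position or orientation, the gap can be closed by $O(n)$ elementary moves that translate or rotate a single straight line as a rigid body. Connectivity is preserved trivially in this middle segment because each such step is itself a permissible line move acting on the entire line at once, and every intermediate configuration is again a straight line.

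To finish, I would appeal to reversibility of line moves as established in \cite{AMP19}: every permissible line move has a permissible inverse, so the sequence produced by {\sc HamiltonianToLine}$(S_F)$ can be executed in reverse order, yielding a transformation $S_L^F \to S_F$ in $O(n\log n)$ moves. This reversed run inherits the connectivity guarantee from the forward run, since every shape visited in the forward direction is, by Lemma \ref{lem:Walk-Through-PathConnectivity}, connected, and reversing the order of moves only reverses the order in which these connected configurations are traversed. Concatenating the three segments $S_I \to S_L^I \to S_L^F \to S_F$ therefore yields a connectivity-preserving transformation of total length $O(n\log n) + O(n) + O(n\log n) = O(n\log n)$. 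For the opposite direction $S_F \to S_I$, I would simply reverse this entire concatenated sequence by one further application of reversibility.

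I do not anticipate any serious obstacle: the only mildly delicate point is the gluing in the middle, but the objects involved there are straight lines, for which rigid translations and rotations are single line moves that cannot disconnect anything. The rest is standard additive bookkeeping of the three segment complexities, all of which are dominated by the $O(n\log n)$ cost of the two invocations of {\sc HamiltonianToLine}.
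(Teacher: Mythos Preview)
Your proof is correct and follows exactly the paper's approach, which derives the theorem in one line from reversibility of line moves \cite{AMP19} together with Lemmas~\ref{lem:Walk-Through-PathConnectivity} and~\ref{lem:Walk-Through-PathTotalRunningTime}. Your treatment is in fact more explicit than the paper's about the intermediate bridging of the two straight lines; the one small imprecision is that rotating a straight line by $90^\circ$ is not a single line move acting on the entire line at once but rather a sequence of $O(n)$ moves on sub-lines (as in the paper's turn and \emph{combine} operations), which does not affect the overall $O(n\log n)$ bound.
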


\section{$O(n \sqrt{n})$-time Universal Transformation}
\label{sec:nsqrtn_Universal_Transformation}

In this section, we introduce a transformation that solves the {\sc UniversalConnected} problem in $O(n \sqrt{n})$ moves. It is called \emph{UC-Box} and transforms any pair of connected shapes $(S_I, S_F)$ of the same order to each other, while preserving \textit{connectivity} during its course.

Starting from the initial shape $S_I$ of order $n$ with an associated graph $G(S_I)$, compute a spanning tree $T$ of $G(S_I)$. Then enclose the shape into an $n\times n$ square box and divide it into $\sqrt{n}\times \sqrt{n}$ square sub-boxes. Each occupied sub-box contains one or more maximal sub-trees of $T$. Each such sub-tree corresponds to a sub-shape of $S_I$, which from now on we call a \emph{component}. Pick a leaf sub-tree $T_l$, let $C_l$ be the component with which it is associated, and $B_l$ their sub-box. Let also $B_p$ be the sub-box adjacent to $B_l$ containing the unique parent sub-tree $T_p$ of $T_l$. Then compress all nodes of $C_l$ into $B_p$ through line moves, while keeping the nodes of $C_p$ (the component of $T_p$) within $B_p$. Once compression is completed and $C_p$ and $C_l$ have been \emph{combined} into a single component $C^{\prime}_{p}$, compute a new sub-tree $T^{\prime}_{p}$ spanning $G(C^{\prime}_{p})$. Repeat until the whole shape is compressed into a $\sqrt{n}\times \sqrt{n}$ square. The latter belongs to the family of \textit{nice} shapes (a family of connected shapes introduced in \cite{AMP19}) and can, thus, be transformed into a straight line in linear time.  

Given that, the main technical challenges in making this strategy work universally is that a connected shape might have many different configurations inside the sub-boxes it occupies, while the shape needs to remain connected during the transformation. In the following, we describe the \textit{compression} operation, which successfully tackles all of these issues by exploiting the linear strength of line moves.  

\textbf{\textit{Compress.}} Let  $C_l\subseteq S_I$  be a leaf  component containing nodes  $v_1,\ldots, v_{k}$  inside a sub-box $B_l$ of size $\sqrt{n} \times \sqrt{n} $, where $1\le k \le n$, and $C_p\subseteq S_I$ the unique parent component of $C_l$ occupying an adjacent sub-box $B_p$. If the direction of connectivity between $B_l$ and $B_p$ is vertical or horizontal, push all lines of $C_l$  one move towards $B_p$ sequentially one after the other, starting from the line furthest from $B_p$. Repeat the same procedure to first align all lines perpendicularly to the boundary between $B_l$ and $B_p$ (Figure \ref{fig:DTransferB1ToB2}(b)) and then to transfer them completely into $B_p$ (e.g., Figure \ref{fig:DTransferB1ToB2}(c)). Hence, $C_l$ and $C_p$ are combined into $C^{\prime}_{p}$, and the next round begins.  The above steps are performed in a way which ensures that all lines (in $C_l$ or $C_p$) which are being pushed by this operation do not exceed the boundary of $B_p$ (e.g., Figure \ref{fig:DTransferB1ToB2}(d)). While $C_l$ compresses vertically or horizontally, it may collide with a component $C_r \subseteq S_I$  inside $B_l$. In this case $C_l$ stops compressing and combines with $C_r$ into $C^{\prime}_{r}$. Then the next round begins. If $C_l$ compresses diagonally towards $C_p$ (vertically then horizontally or visa versa) via an intermediate adjacent sub-box $B_m$ and collides with $C_m \subseteq S_I$ inside $B_m$, then $C_l$ completes compression into $B_m$  and combines with $C_m$ into $C^{\prime}_{m}$.  Figure \ref{fig:DTransferB1ToB2} shows how to compress a leaf component into its parent component occupying a diagonal adjacent sub-box.

Examples \ref{ex:HandVCompress} and \ref{ex:DiaginalCompress} depict the compression in different directions. The formal description of \emph{UC-Box} is illustrated in Algorithm \ref{algo}.    

\begin{example} [Horizontal and vertical compression] \label{ex:HandVCompress}
	Let $C_l$ and $C_p$ be components occupying two horizontal sub-boxes, $B_l$ and $B_p$, respectively. $C_l$ transfers completely to join $C_p$ in $B_p$, as in Figure \ref{fig:TransferB1ToB2}. The vertical compression holds by rotating the system $90\degree$ clockwise or counter-clockwise. 
	\begin{figure}[th!]
		\centering	
		\subcaptionbox{}	
		{\includegraphics[scale=0.39]{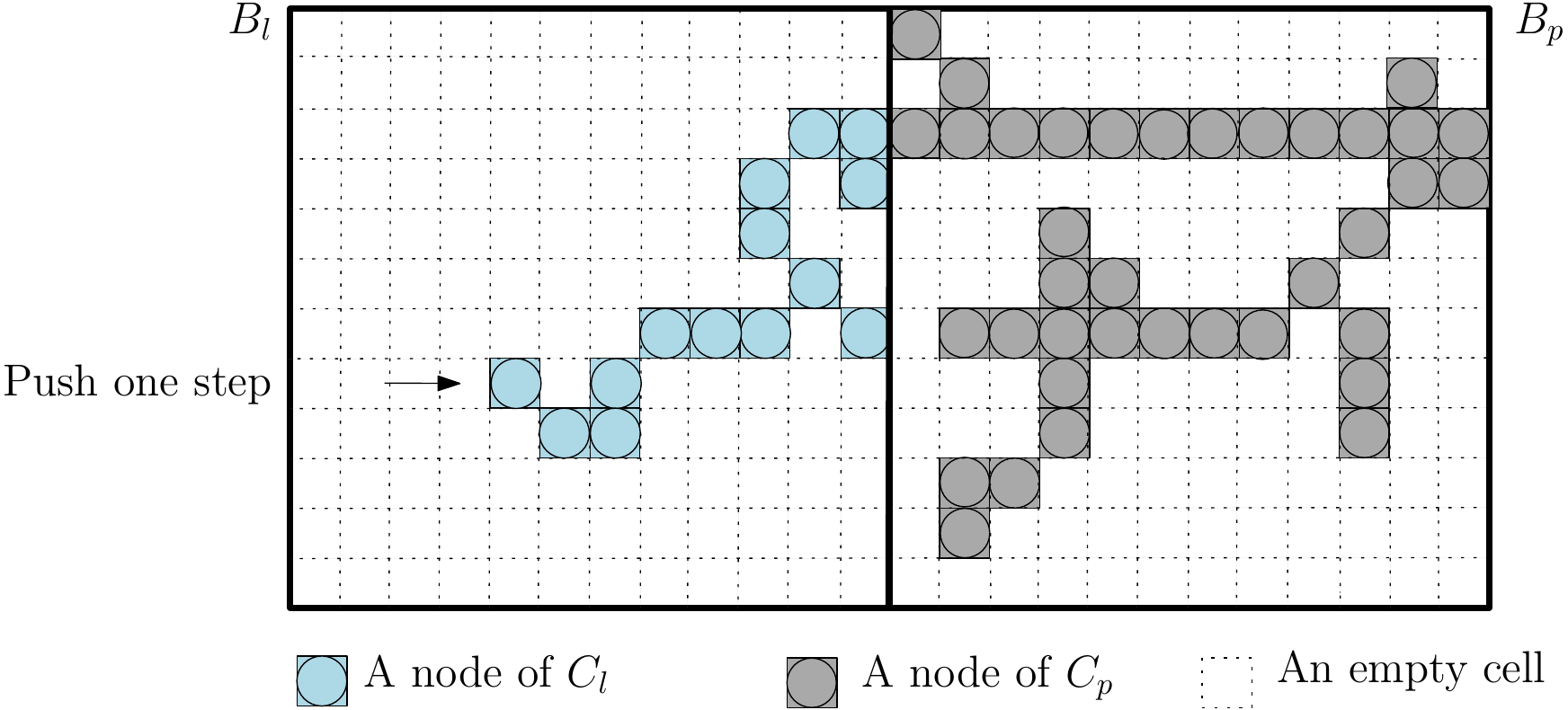}}
		\subcaptionbox{}
		{\includegraphics[scale=0.39]{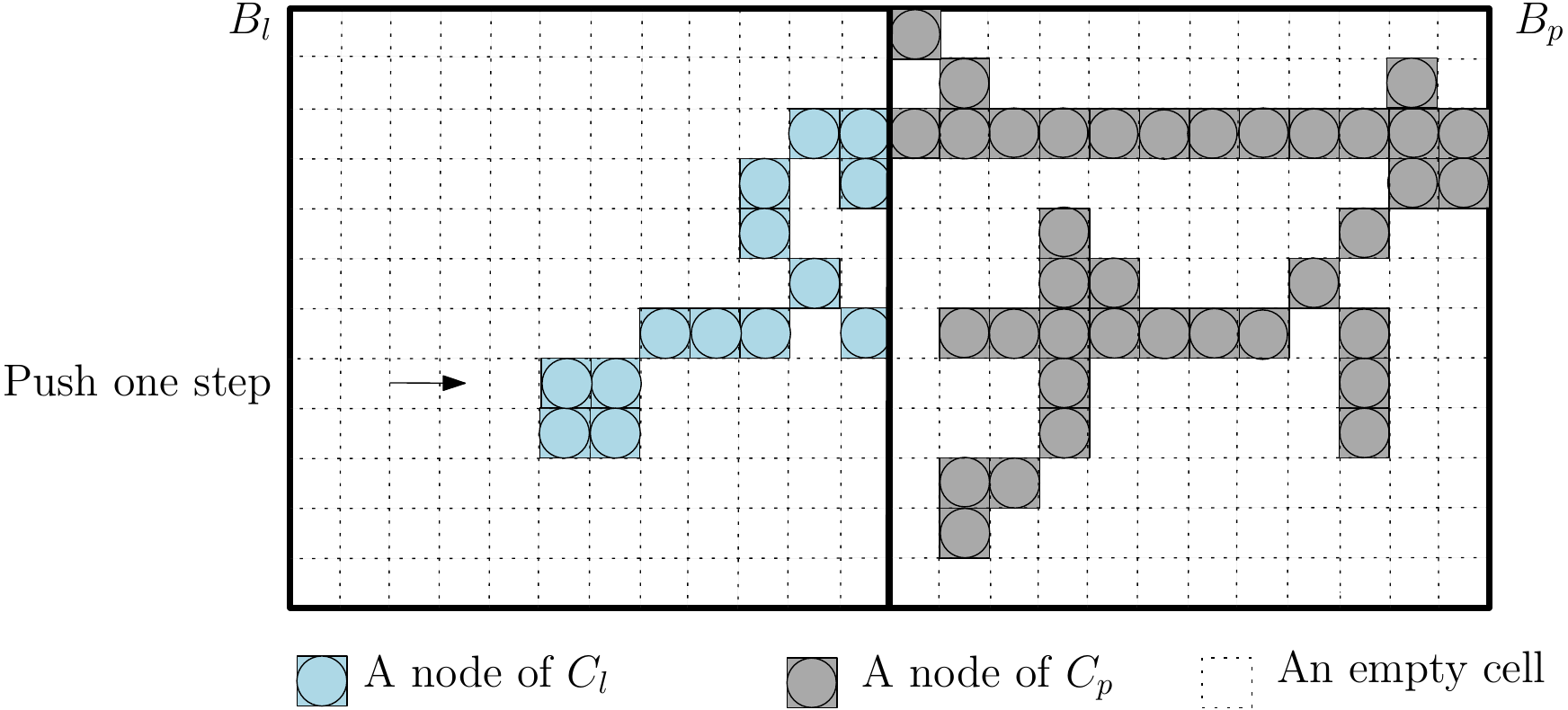}}
		\subcaptionbox{}	
		{\includegraphics[scale=0.38]{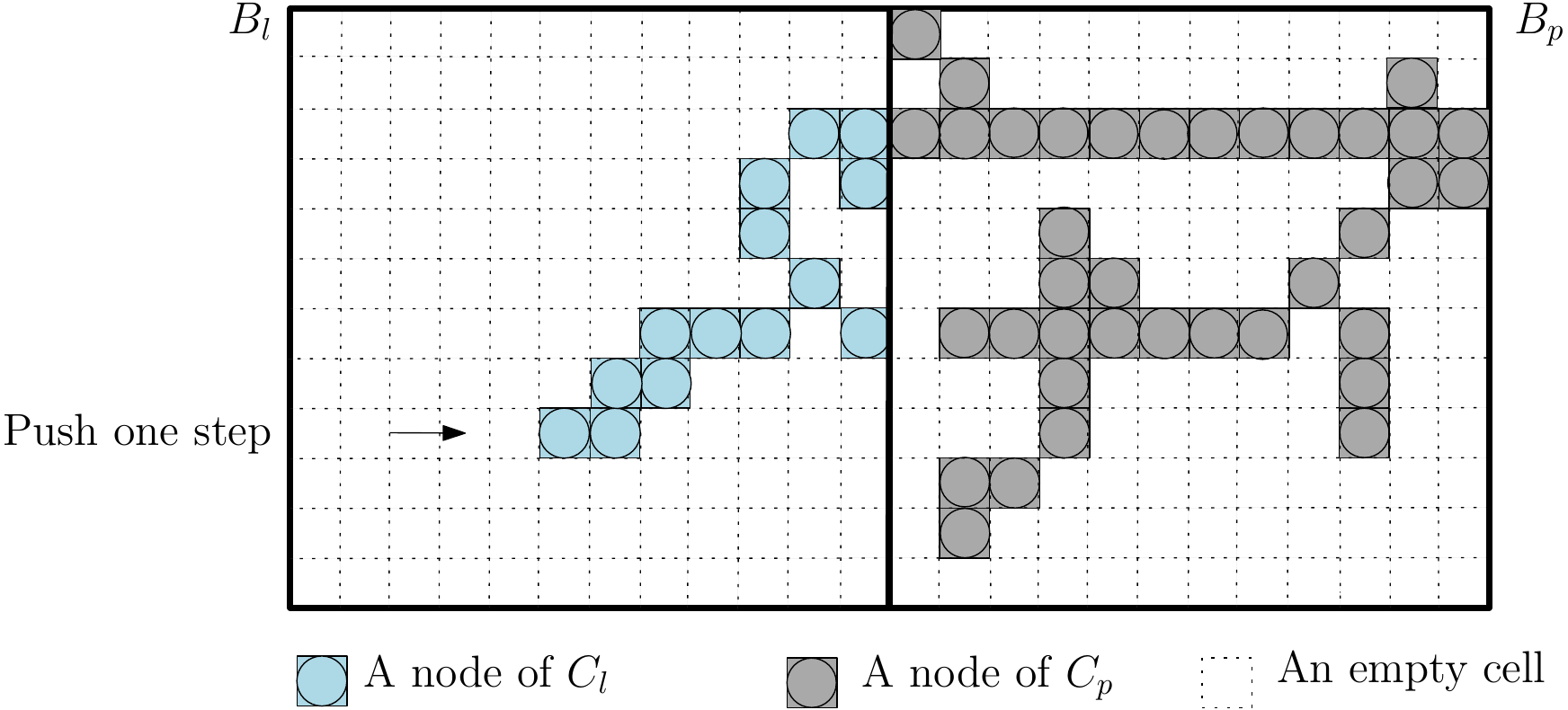}}
		\subcaptionbox{}	
		{\includegraphics[scale=0.38]{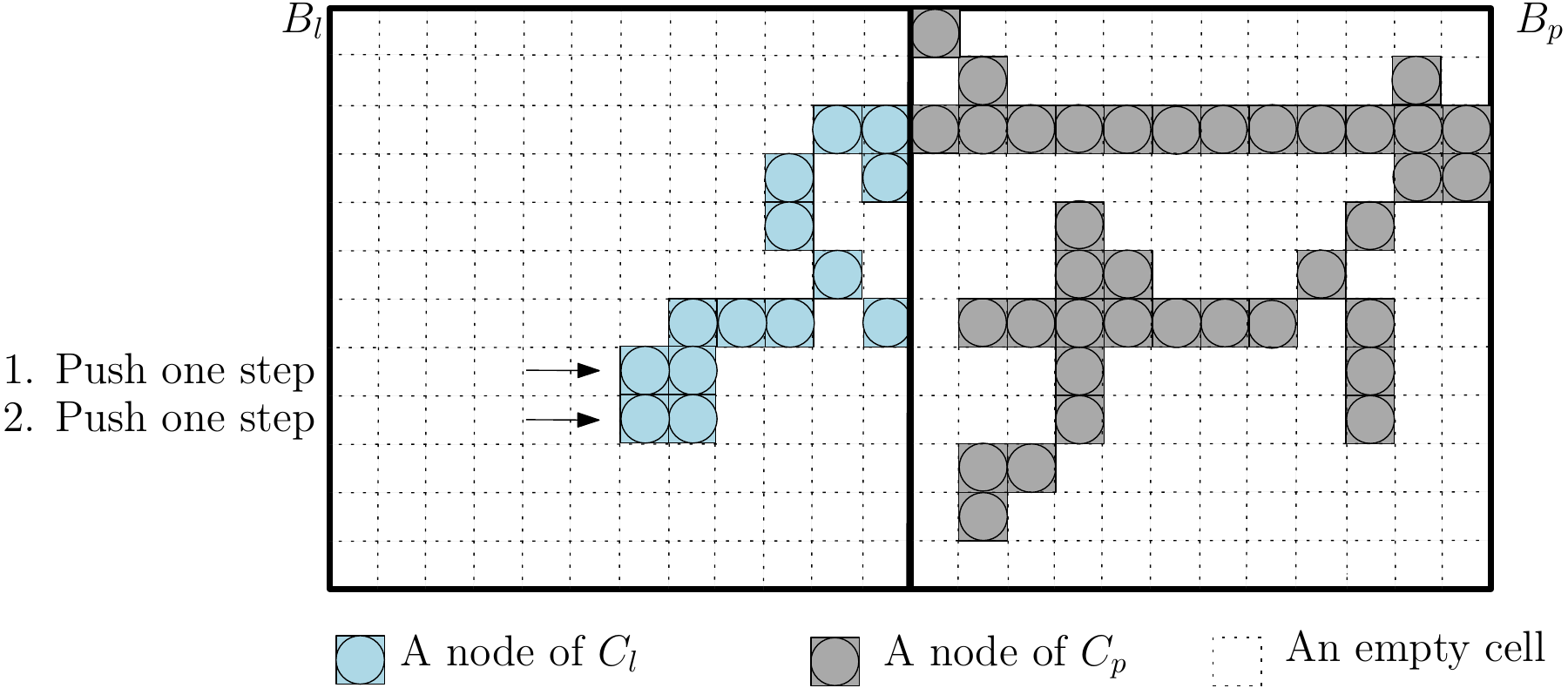}}	
		\subcaptionbox{}
		{\includegraphics[scale=0.38]{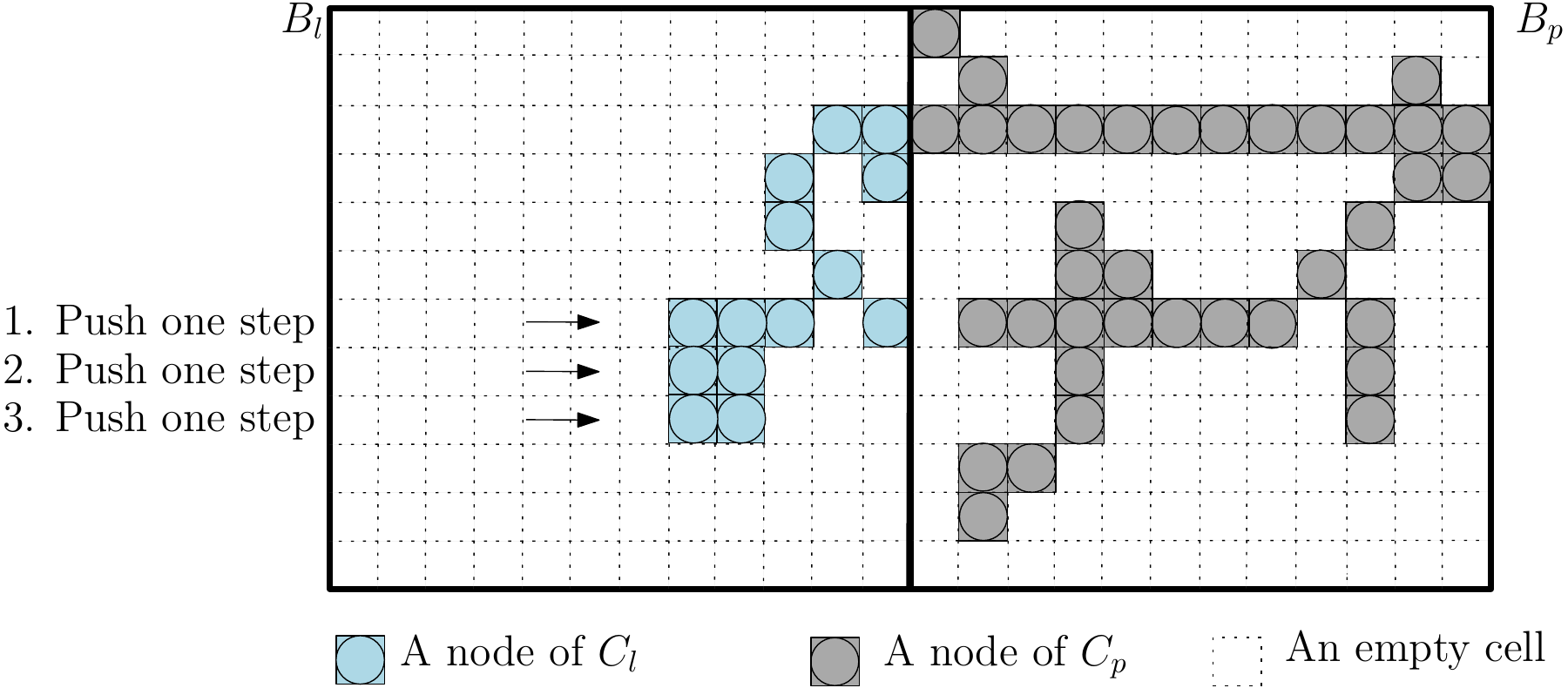}}
		\subcaptionbox{}
		{\includegraphics[scale=0.38]{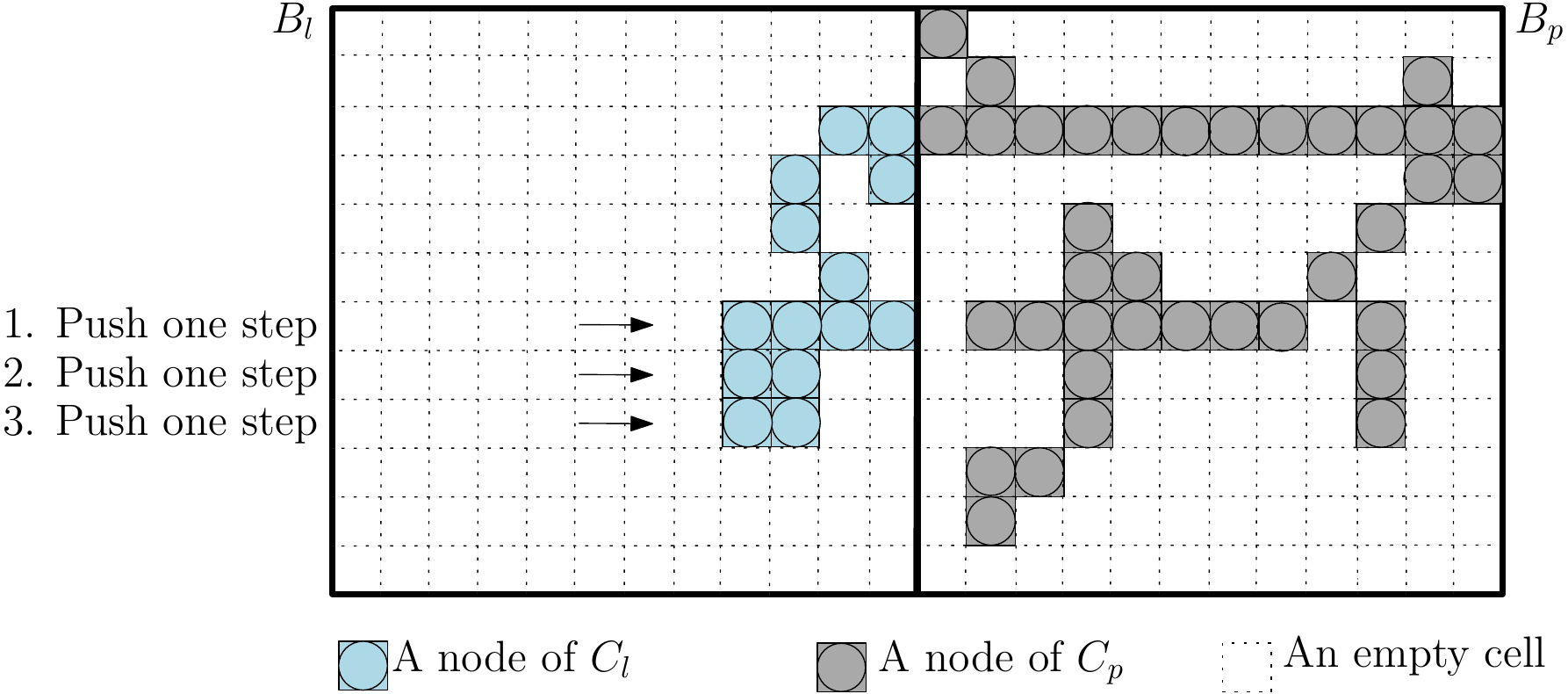}}
		\subcaptionbox{}
		{\includegraphics[scale=0.38]{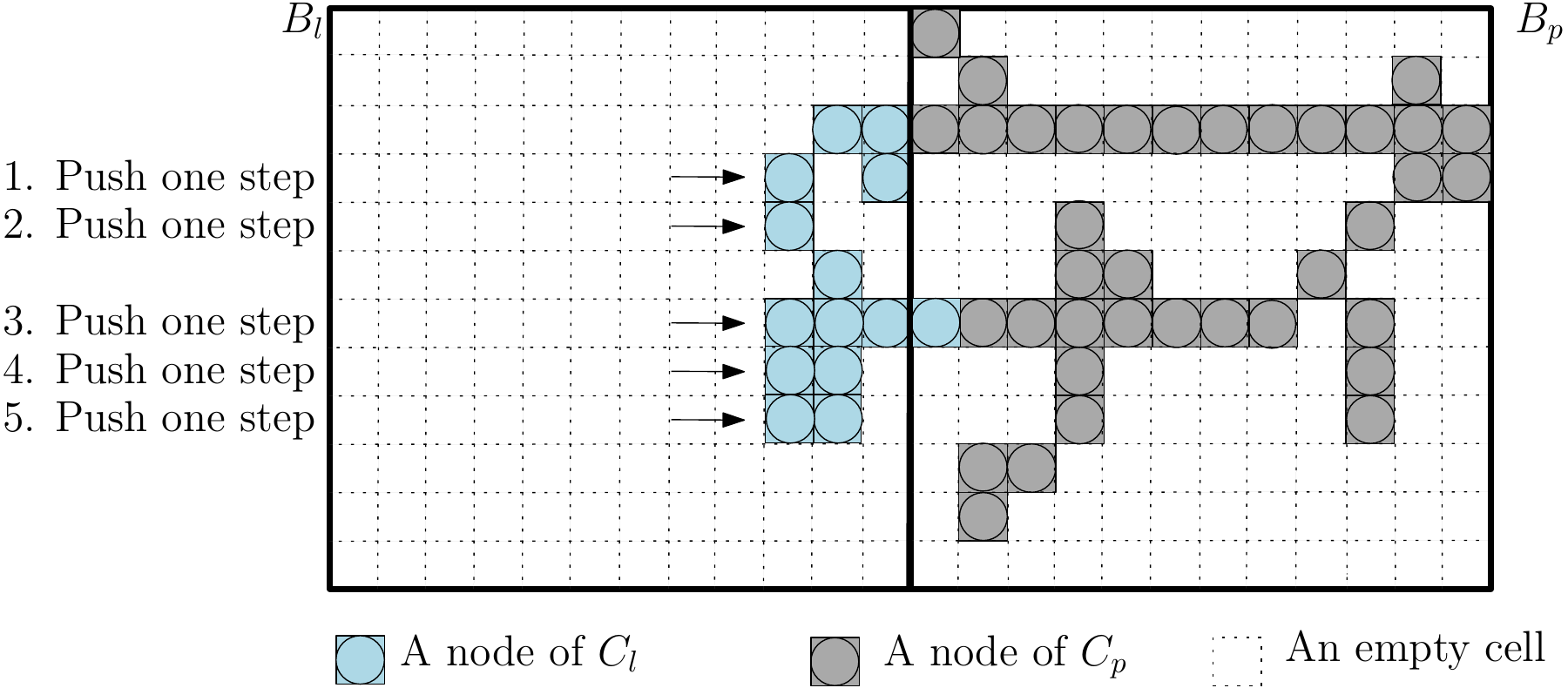}}
		\subcaptionbox{}
		{\includegraphics[scale=0.38]{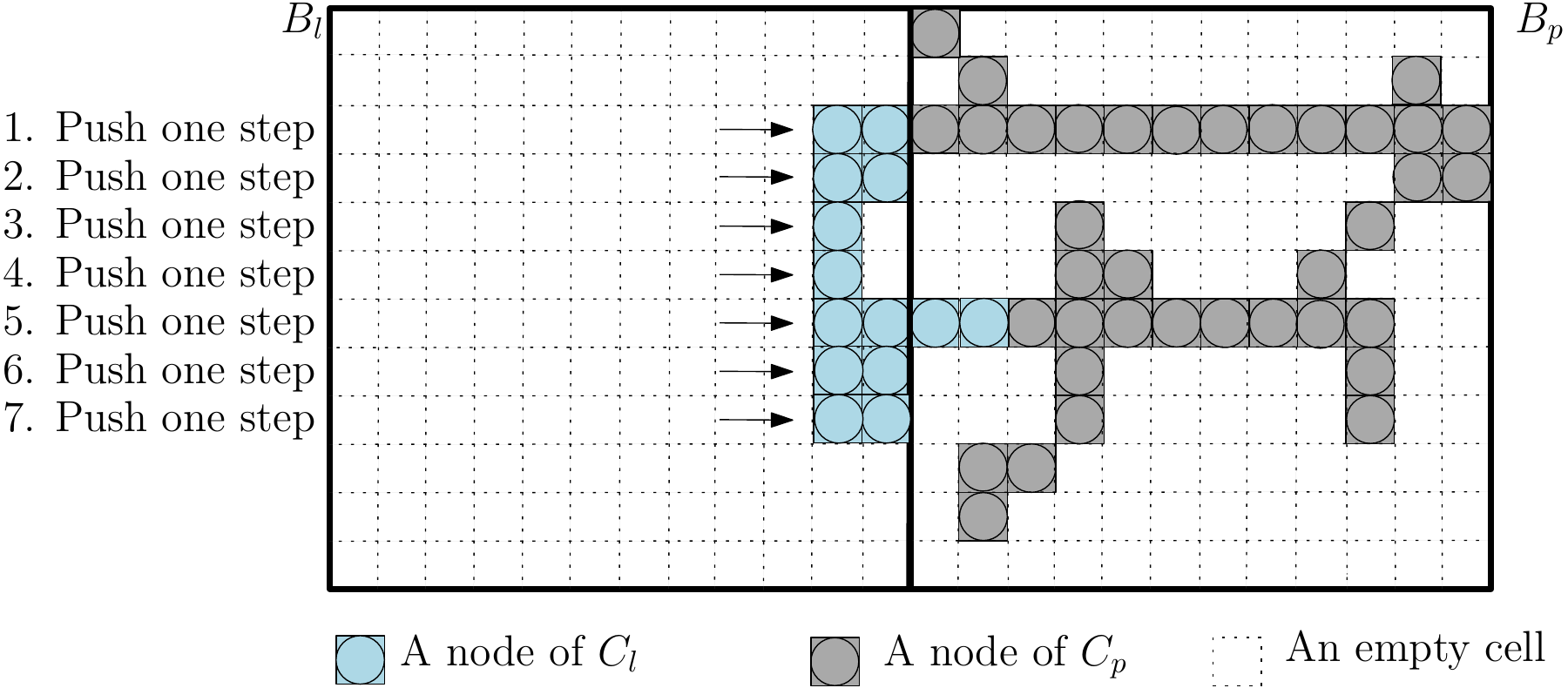}}
		\subcaptionbox{}
		{\includegraphics[scale=0.38]{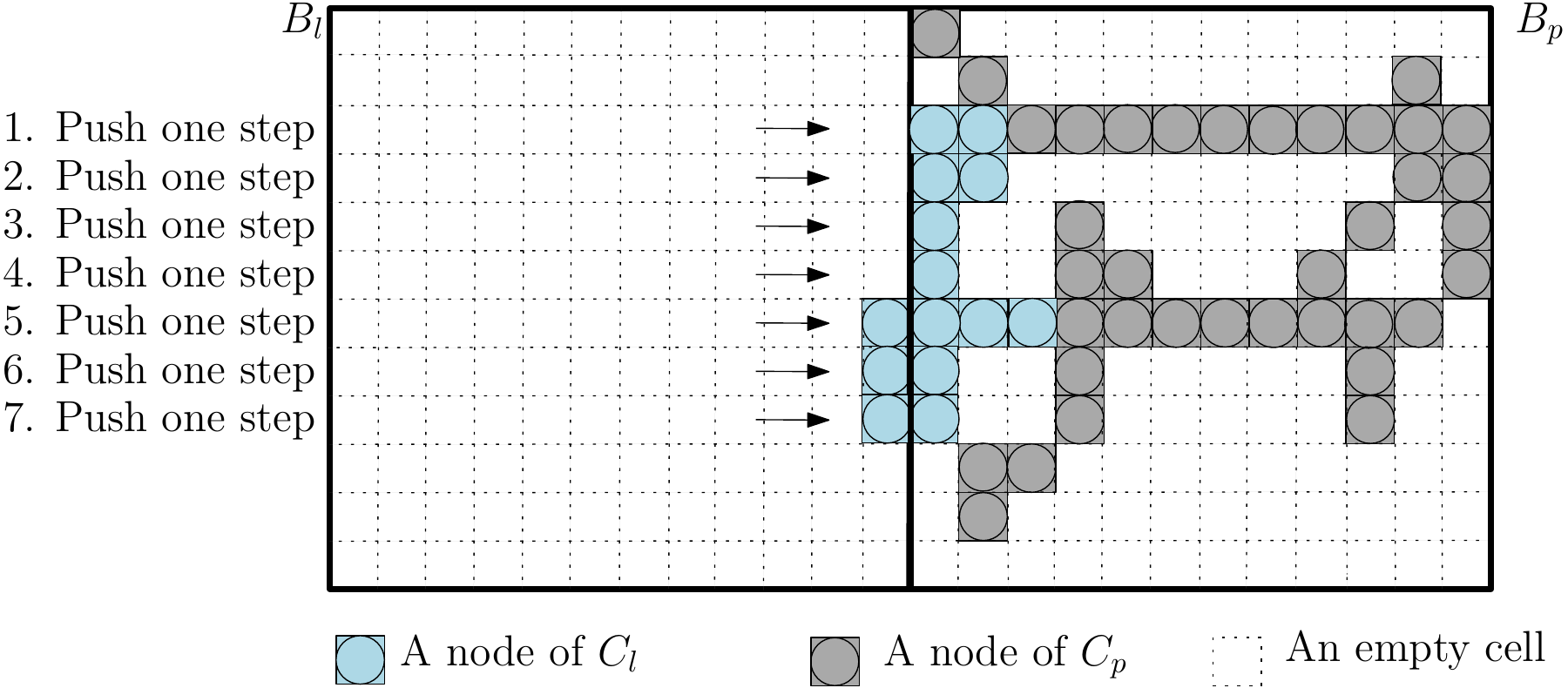}}
		\subcaptionbox{}
		{\includegraphics[scale=0.38]{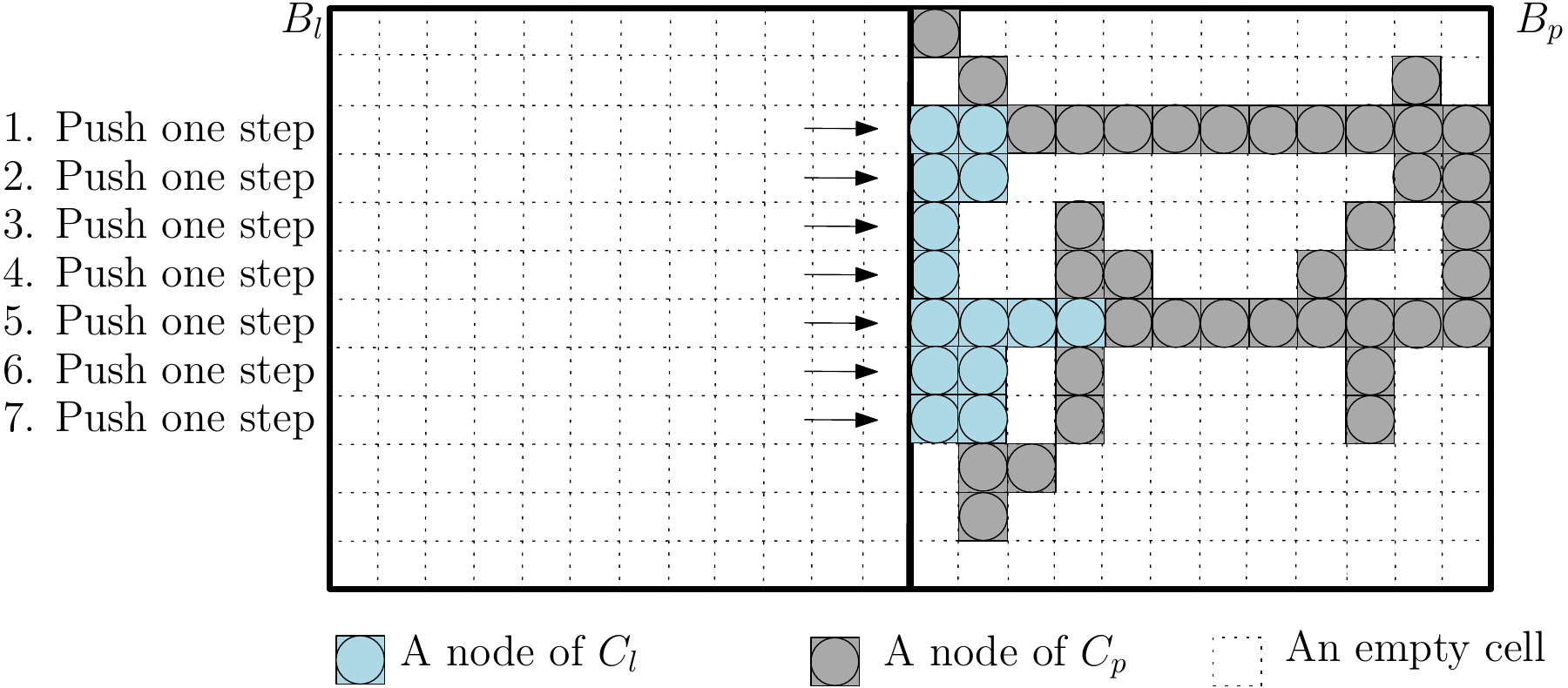}}
		\caption{Horizontal and vertical compression.}
		\label{fig:TransferB1ToB2}
	\end{figure}		
\end{example}
\begin{example}  [Diagonal compression] \label{ex:DiaginalCompress}
	Let $C_l$ and $C_p$ be components occupying two sub-boxes, $B_l$ and $B_p$ that are connected diagonally, respectively. $C_l$ transfers completely via an intermediate sub-box $B_m$, as shown in Figure \ref{fig:DTransferB1ToB2}.
	\begin{figure}[th!]
		\centering		
		\subcaptionbox{}
		{\includegraphics[scale=0.35]{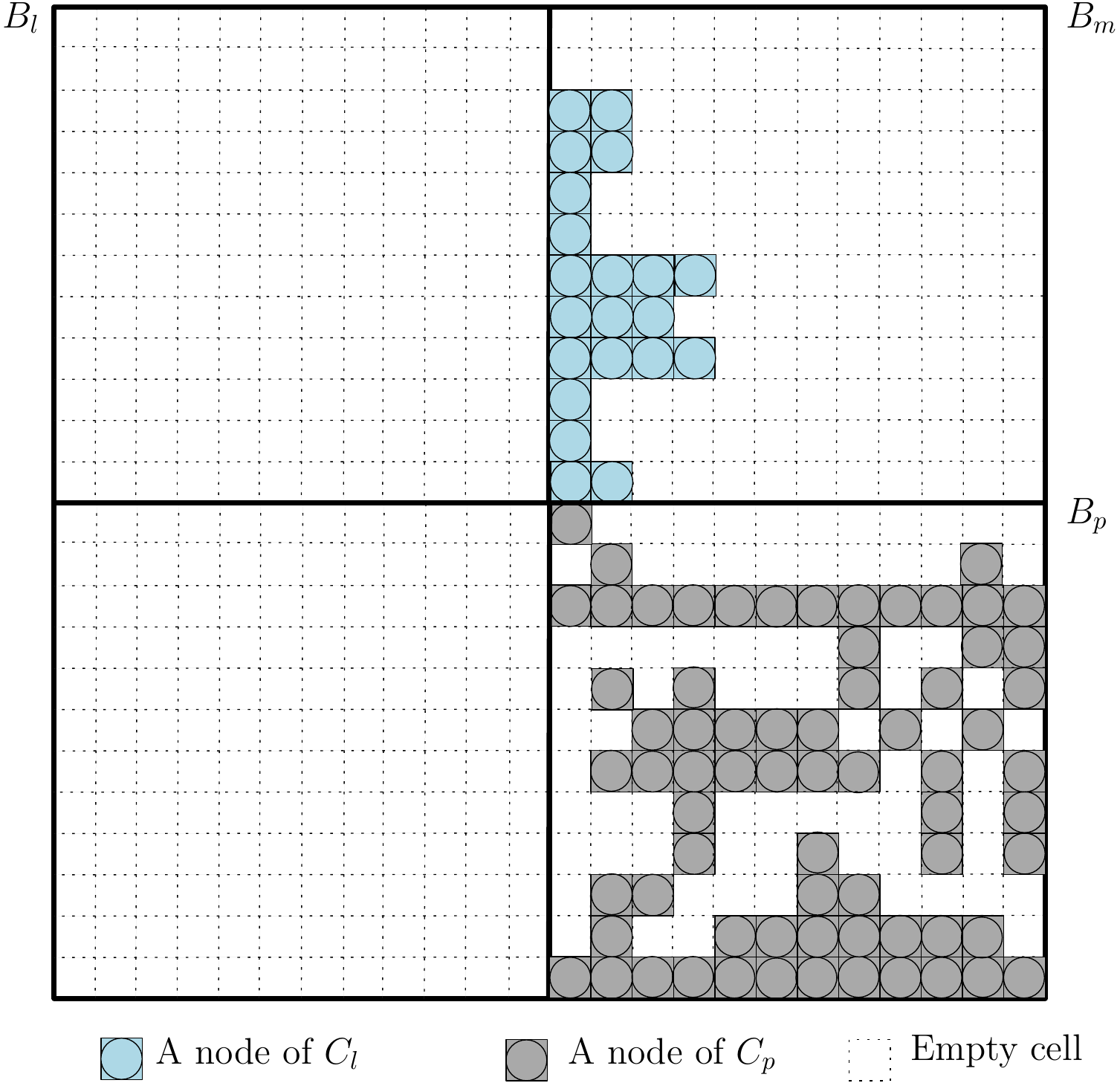}} \qquad 
		\subcaptionbox{}
		{\includegraphics[scale=0.35]{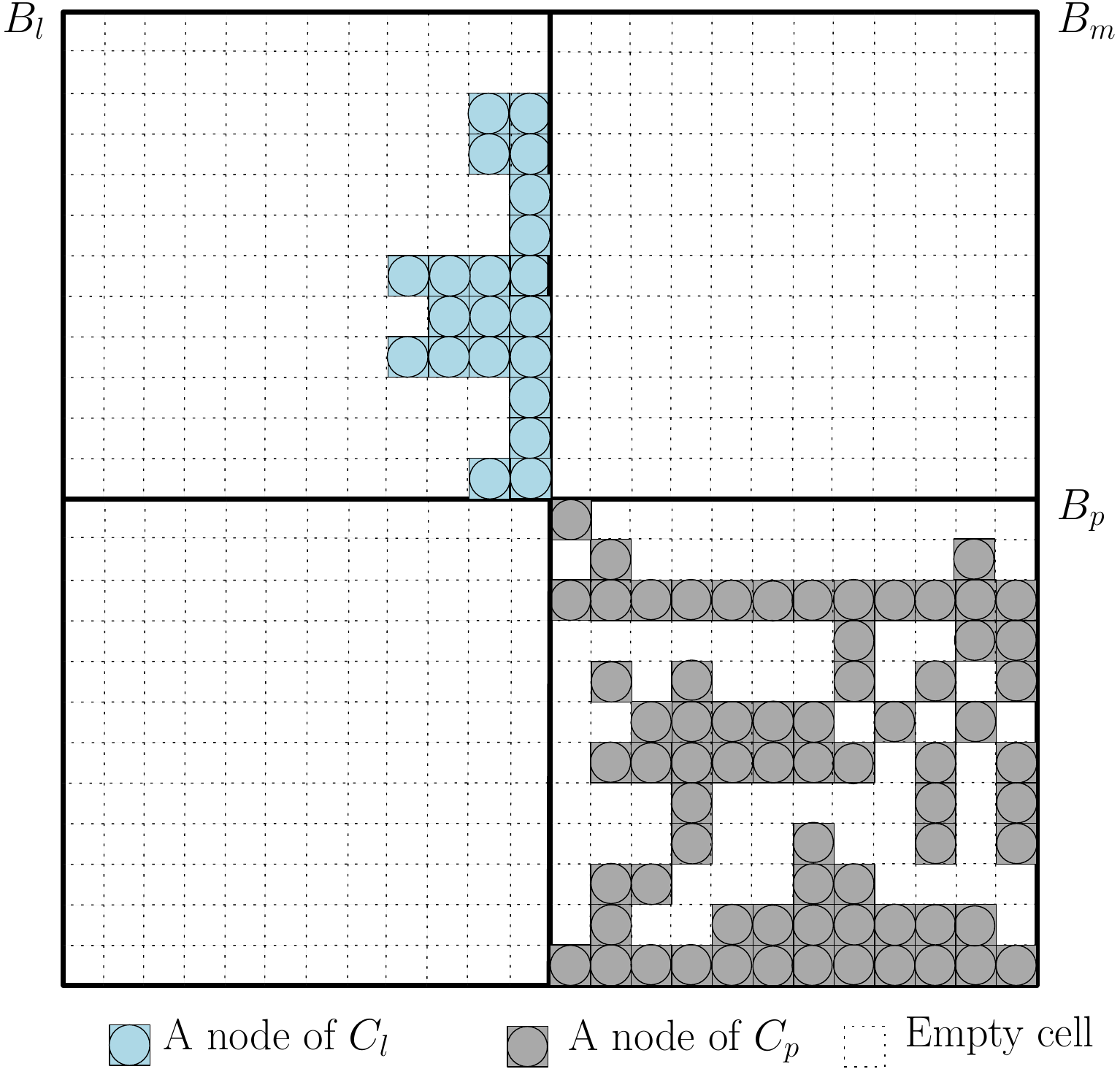}} 
		\subcaptionbox{}
		{\includegraphics[scale=0.35]{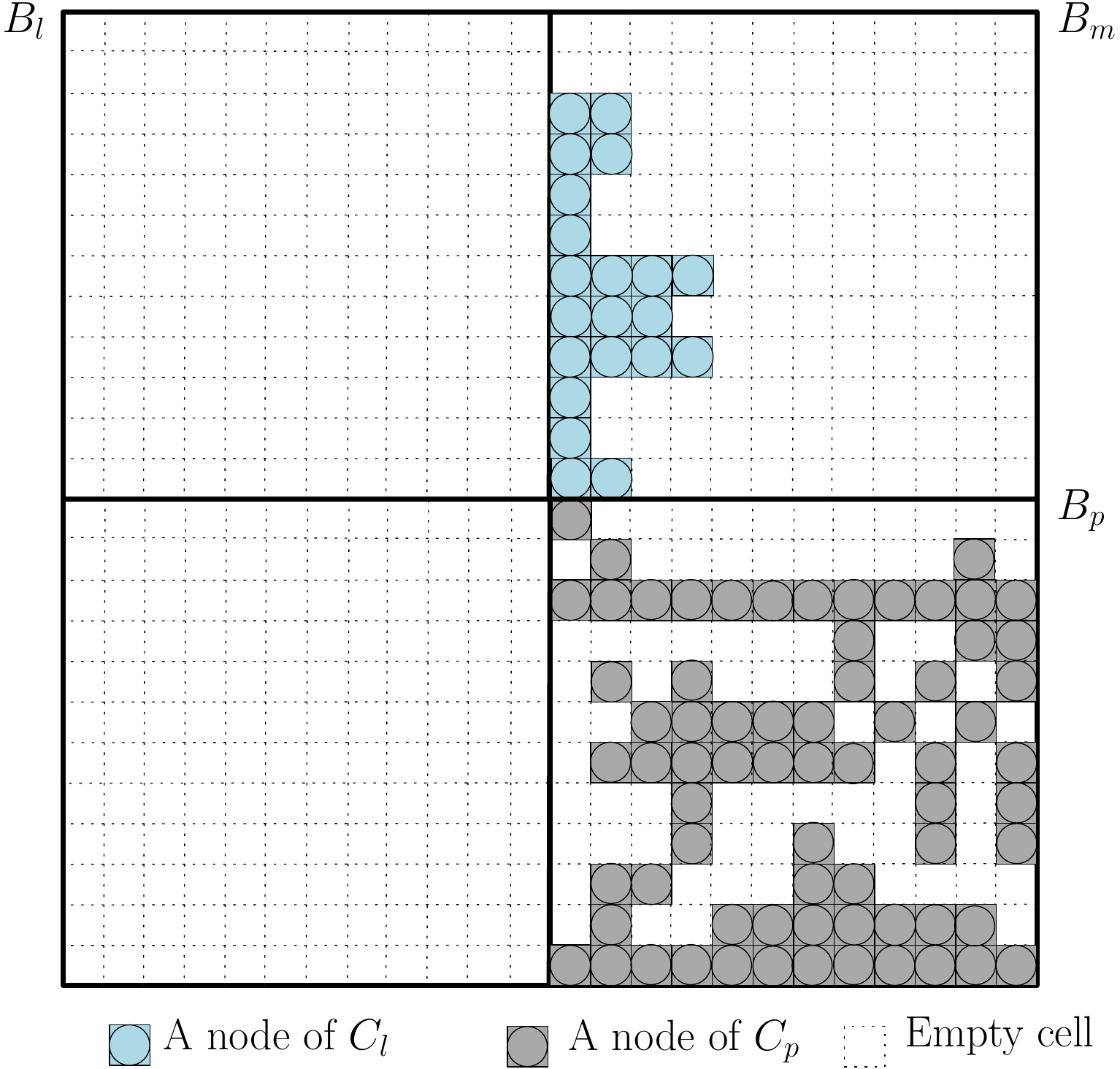}} \qquad 
		\subcaptionbox{}	
		{\includegraphics[scale=0.35]{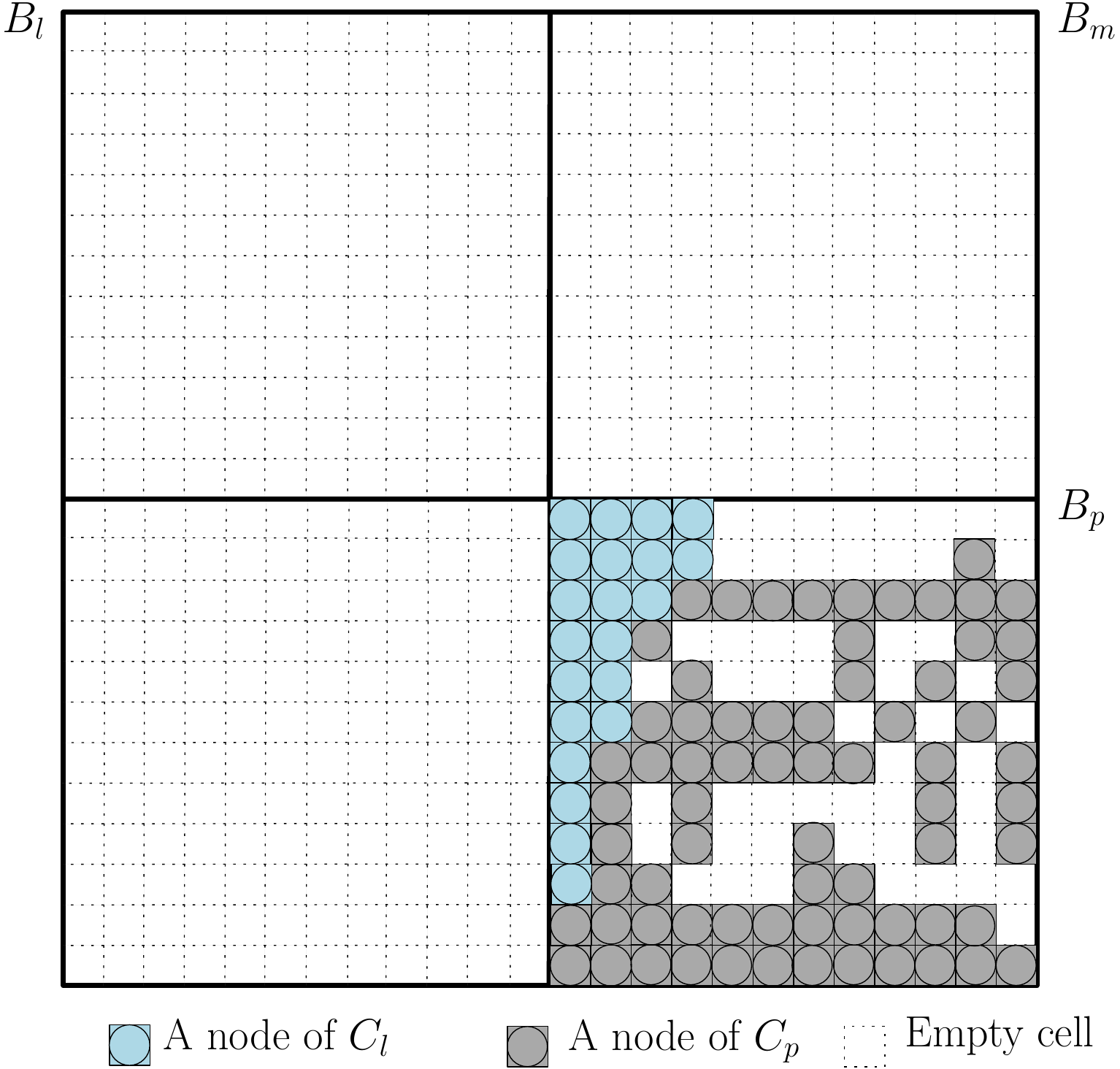}}	
		\caption{Diagonal compression.}
		\label{fig:DTransferB1ToB2}
	\end{figure}		
\end{example}

\begin{algorithm}[H]
	$ S  = (u_1,u_2,...,u_{|S|}) $ is a connected shape, $T$ is a spanning tree of $G(S)$
	\SetAlgoLined
	\DontPrintSemicolon
	
	\vspace{7pt} 
	\Repeat{\textsf{the whole shape is compressed into a $\sqrt{n}\times \sqrt{n}$ square}}
	{
		$C_l$ $\gets$ pick($T_l$)  \tcp{select a leaf component associated with a leaf sub-tree}
		Compress($C_l$)        \tcp{start compressing the leaf component}
		\eIf{$C_l$ collides} 
		{$C^{\prime}_{r}$ $\gets$ combine($C_r, C_l$)  or  $C^{\prime}_{m}$ $\gets$ combine($C_m, C_l$) \tcp{as described in text} }
		{$ C^{\prime}_{p}$ $\gets$ combine($C_p, C_l$) \tcp{combine $C_l$ with a parent component}} 
		update($T$)   \tcp{update sub-trees and remove cycles after compression}
	}
	
	\KwOut{a square shape $S_C$}
	\caption{{\sc Compress}($ S $)}
	\label{algo}
\end{algorithm}

Algorithm \ref{algo}, {\sc Compress}, provides a universal procedure to transform an initial connected shape $S_I$ of any order into a compressed square shape of the same order. It takes two arguments: $S_I$ and the spanning tree $T$ of the \textit{associated graph} $G(S_I)$. In any round: Pick a leaf sub-tree of $T_l$ corresponding to $C_l$ inside a sub-box $B_l$. Compress $C_l$ into an adjacent sub-box $B_p$ towards its parent component $C_p$ associated with parent sub-tree $T_p$. If $C_l$ compressed with no collision, perform combine($C_p, C_l$) which combines $C_l$ with $C_p$ into one component $ C^{\prime}_{p} $. If $C_l$ collides with another component $C_r $ inside $B_l$, then perform combine($C_r, C_l$) into $C^{\prime}_{r}$. If not, as in the diagonal compression in which $C_l$ collides with $C_m $  in an intermediate sub-box $B_m$, then $C_l$ compresses completely into $B_m$ and performs combine($C_m, C_l$) into  $C^{\prime}_{m}$. Once compression is completed, update($T$) computes a new sub-tree and removes any cycles. The algorithm terminates when $T$ matches a single component of $n$ nodes compressed into a single sub-box.

\subsection{Correctness}
In this section, we show that all properties of connectivity-preserving, transformability and universality hold in \emph{UC-Box}, which is capable to transform any pair of connected shapes $(S_I, S_F)$ of the same order $n$ to each other, without breaking connectivity during its course. 

Given a an initial connected shape $S_I$ holding $n$ nodes, then $S_I$ can be always bounded by a square box of size $n \times n$, placed in a appropriate position to include all nodes in $S_I$. This box can be divided into at most $\sqrt{n}$ sub-boxes (proved in \cite{AMP19}), $B_1, B_2, \cdots , B_{\sqrt{n}}$, of size $\sqrt{n} \times \sqrt{n}$, each occupied sub-box may contain one or more sub-shapes (called components) of at least one node $u \in S_I$.  As the shape $S_I$ is \emph{connected}, all occupied sub-boxes are \emph{connected} too. This relation of connectivity can be defined as follows;

\begin{definition}[Connectivity of sub-boxes]\label{def:Sub-boxesConnectivity} 
	By the above partitioning, two occupied sub-boxes, $B_1$ and $B_2$, are \textit{connected} iff there are two distinct nodes $u_1,u_2 \in S_I$, such that $u_1$ occupies $B_1$  and $u_2$ occupies $B_2$ where $u_1$ and $u_2$ are two adjacent neighbours connected vertically,  horizontally or diagonally. 
\end{definition}

Next, we define connectivity between the components (sub-shapes).

\begin{definition}[Connectivity of components]\label{def:ConnectivityBetweenComponenets} 
	By the above partitioning, two connected components, $C_1, C_2  \in S_1$ are \textit{connected} iff there are two distinct elements $u \in C_1$ and $v \in C_2$, such that $u$ and $v$ are two adjacent neighbours connected vertically,  horizontally or diagonally. 
\end{definition} 

\begin{corollary} \label{oberv:ShapeToTree} 
	Given the above partitioning dividing $S_I$ into a number of components. Then, it holds that all components can be computed into a spanning tree $T$.           	
\end{corollary}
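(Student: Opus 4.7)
The plan is to build the spanning tree at the coarser \emph{component} level and then appeal to the standard fact that every connected graph admits a spanning tree. Concretely, I would introduce an auxiliary graph $H$ whose vertex set is the collection of components $\{C_1,C_2,\ldots,C_m\}$ produced by the partitioning of $S_I$ into $\sqrt{n}\times\sqrt{n}$ sub-boxes (each component being a maximal sub-shape of $S_I$ lying in a single sub-box), and whose edges are precisely the pairs $\{C_i,C_j\}$ satisfying the component-adjacency relation of Definition \ref{def:ConnectivityBetweenComponenets}, i.e., there exist $u\in C_i$, $v\in C_j$ that are vertical, horizontal or diagonal neighbours in the grid.

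The central step is to show that $H$ is connected. Fix two components $C_a,C_b$ and pick any $u\in C_a$, $v\in C_b$. Since $S_I$ is connected, $G(S_I)$ contains a path $u=w_0,w_1,\ldots,w_\ell=v$. Each $w_k$ belongs to exactly one component; call it $C(w_k)$. Two consecutive vertices $w_k,w_{k+1}$ are neighbours in $G(S_I)$, so either $C(w_k)=C(w_{k+1})$, or $C(w_k)\neq C(w_{k+1})$ in which case the pair $(w_k,w_{k+1})$ directly witnesses an edge of $H$ between $C(w_k)$ and $C(w_{k+1})$. Deleting consecutive repetitions from the sequence $C(w_0),C(w_1),\ldots,C(w_\ell)$ therefore yields a walk in $H$ from $C_a$ to $C_b$, proving $H$ is connected.

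Once $H$ is connected, any spanning-tree construction (e.g.\ BFS or DFS on $H$, which runs in polynomial time since $m\le n$ and the adjacencies can be read off from the grid in polynomial time) produces a tree $T$ that spans all components of $S_I$. This $T$ is exactly the object required by the corollary: its nodes are the components obtained from the partitioning, and its edges encode the component-adjacency relation inherited from Definition \ref{def:ConnectivityBetweenComponenets}. In the algorithm of Section~\ref{sec:nsqrtn_Universal_Transformation}, $T$ then serves as the parent/child structure along which \emph{Compress} operates, and it can be refreshed after each merge by re-running the same argument on the updated component set.

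The main obstacle I anticipate is not the combinatorics but a subtle point of bookkeeping: ensuring that the component-adjacency edges of $H$ actually correspond to sub-box pairs $(B_l,B_p)$ that \emph{Compress} knows how to handle (horizontal, vertical, or diagonal neighbours of sub-boxes). This is handled by observing that any two grid-adjacent nodes $u\in C_i$, $v\in C_j$ with $C_i\neq C_j$ must lie in grid cells at distance at most $1$ in each coordinate, so their enclosing sub-boxes $B_i,B_j$ are either equal (impossible, since a sub-box's intersection with $S_I$ is split into components only when internally disconnected, but such components are still picked up later as leaves of $T$) or share a side or a corner, which matches exactly the three cases treated in Examples \ref{ex:HandVCompress} and \ref{ex:DiaginalCompress}.
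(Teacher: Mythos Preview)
The paper states this corollary without proof, and the reason is that in the paper's setup the result is essentially a tautology: a spanning tree $T$ of $G(S_I)$ is fixed \emph{first}, and a ``component'' is then \emph{defined} to be a maximal sub-tree of $T$ lying inside a single sub-box. Contracting each such sub-tree to a point in $T$ automatically yields a tree on the set of components (contracting disjoint connected subgraphs of a tree always gives a tree), so there is nothing further to argue.

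Your argument is correct but follows a genuinely different route. You take ``component'' to mean a maximal connected sub-shape of $S_I$ inside a sub-box, build the auxiliary adjacency graph $H$ on these, lift a path in $G(S_I)$ to show $H$ is connected, and then extract a spanning tree of $H$. This is sound and self-contained, and it has the advantage of not depending on a pre-chosen node-level spanning tree. The trade-off is that your components need not coincide with the paper's: two nodes in the same sub-box that are adjacent in $G(S_I)$ but whose $T$-path leaves the sub-box lie in one component under your definition and in two under the paper's. For the corollary itself this is harmless, since both notions yield a tree on components; but it is worth being aware that the surrounding algorithm (picking a ``leaf sub-tree $T_l$'') is phrased in terms of the paper's sub-tree-based components, so your final paragraph's claim that two adjacent distinct components can never share a sub-box holds for your definition but not for the paper's.
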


In the following lemmas, we prove that any connected shape $S_1$ of $n$ nodes can be compressed into a square box of dimension $ \sqrt{n} $.

\begin{lemma} \label{lem:ConCompInSubBox}
	Any square box of size $ \sqrt{n} $ can hold at most $2\sqrt{n}$ connected components.    
\end{lemma}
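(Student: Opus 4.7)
The plan is to prove the stated bound by a boundary-counting argument that exploits the spanning-tree structure. First I would dispense with the degenerate case: if all $n$ nodes of $S_I$ lie inside the $\sqrt{n}\times\sqrt{n}$ sub-box $B$, then by the connectivity of $S_I$ (and hence of the spanning tree $T$) there is exactly one component in $B$, and $1\le 2\sqrt{n}$ holds trivially. So I may assume that at least one node of $S_I$ lies outside $B$.

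Next comes the key reduction. Because $T$ is a spanning tree of the connected graph $G(S_I)$ and there is at least one node outside $B$, every component inside $B$ (equivalently, every maximal sub-tree of $T$ restricted to the nodes of $B$) must contain at least one node that is adjacent in $G(S_I)$, via an edge of $T$, to some node outside $B$. That node therefore lies on the \emph{outer ring} of $B$, i.e.\ on one of the four sides of the sub-box. Hence every component contributes at least one cell to this outer ring, and it suffices to bound how many distinct components can touch it. Each side of $B$ contains $\sqrt{n}$ cells, and any two cells on the same side at unit distance are neighbours (in the king's sense of Section~\ref{sec:Prelim}) and therefore belong to the same component of $G(S_I)[B]$. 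Consequently the boundary cells of distinct components along a single side must be separated by at least one gap, so each side exposes at most $\lceil\sqrt{n}/2\rceil$ distinct components. Summing over the four sides, and absorbing the $O(1)$ correction coming from the four corners (which are shared between adjacent sides but belong to a single component), yields the claimed bound of at most $2\sqrt{n}$ components inside $B$.

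The main obstacle is the corner bookkeeping: diagonal king's-adjacency at the four corners of $B$ can collapse what appear to be separate components on adjacent sides into a single component, while the horizontal/vertical separation argument used for cells along the same side does not immediately apply to cells on two meeting sides. I would handle this with a short case analysis around each corner, showing that any such diagonal adjacency only strengthens the bound by identifying components already counted on one of the two adjacent sides. The rest is routine geometric bookkeeping, and the spanning-tree property of $T$ is what ensures that no component can hide in the interior of $B$ without surfacing on its boundary.
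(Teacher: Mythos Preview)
Your proposal is correct and follows essentially the same boundary-counting argument as the paper: both show that every component inside $B$ must touch one of the four length-$\sqrt{n}$ sides (since $S_I$ is globally connected), then bound the number of distinct components along a single side by $\sqrt{n}/2$ via the alternating/gap observation, and multiply by four to obtain $2\sqrt{n}$. Your version is somewhat more careful about the degenerate case and the corner bookkeeping, which the paper simply omits, but the core idea is identical.
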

\begin{proof}
	Assume $S_I$ is a connected shape enclosed by a box of size $n$ that is partitioned into $\sqrt{n}$ square sub-boxes of dimension $\sqrt{n}$. Then, a component $C\subseteq S_I$ of at least 1 node can occupy a sub-box, $B$. The component $C$ must be connected to one of the four length-$ \sqrt{n} $ boundaries of $B$. Assume for the sake of contradiction that  $C$ is not connected to any boundaries. This means that $S_1$ is disconnected and therefore $C \nsubseteq S_I$, which contracts our assumption. Observe that based in our setting, $C$ can be connected via a path to any of the four length-$ \sqrt{n} $ boundaries through at most $ \sqrt{n}/2 $ cells, as shown in Figure \ref{fig:ConCompInSub-box}. Thus, one boundary can hold $ \sqrt{n}/2  $ distinct components, resulting in $ 2\sqrt{n} $ for the four boundaries.  Therefore, the sub-box $B$ can contain at most $2\sqrt{n} $ disconnected components. 
	\begin{figure}[th!]
		\centering
		{\includegraphics[scale=0.6]{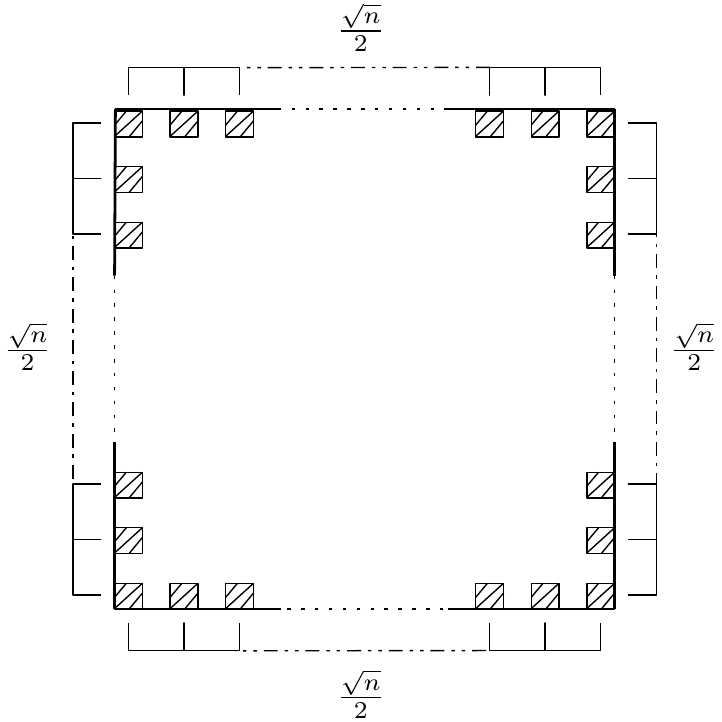}}	
		\caption{A square box of four length-$ \sqrt{n} $ boundaries, each  can hold up to $2\sqrt{n}$ different components.}
		\label{fig:ConCompInSub-box}
	\end{figure}
\end{proof} 

\begin{lemma} \label{lem:PossibleityofCom}
	Let $S_{I}$ be a connected shape of order $n$ occupies $ \sqrt{n}$  sub-boxes of size $ \sqrt{n} \times \sqrt{n} $ each. Then, it is always possible to compress all $n$ nodes into a single sub-box.    
\end{lemma}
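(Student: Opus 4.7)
The plan is to proceed by induction on the number of components in the spanning tree $T$ maintained by the algorithm, showing that each round of \textit{Compress} reduces the component count by exactly one while preserving connectivity, until only a single component remains inside a single sub-box. The base case is trivial: if $T$ already consists of one component confined to one sub-box, we are done. For the inductive step, assume $T$ contains $m>1$ components. Since $T$ is a tree (Corollary \ref{oberv:ShapeToTree}), it has at least one leaf sub-tree $T_l$ associated with a leaf component $C_l$ in some sub-box $B_l$. I would then argue that invoking \textit{Compress} on $C_l$ always terminates in one of the three outcomes described in Algorithm \ref{algo}: merging into the parent component $C_p$ inside $B_p$; merging with a collided component $C_r$ inside $B_l$; or merging with a collided component $C_m$ inside an intermediate diagonal sub-box $B_m$. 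In each case the component count of the updated spanning tree strictly decreases, so the inductive hypothesis applies to the resulting configuration.

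The main technical obstacle is to certify that \textit{Compress} always succeeds geometrically: namely, the lines of $C_l$ can be pushed sequentially toward the parent without ever overflowing the destination sub-box and without ever disconnecting the shape. The key invariant I would maintain is that at every round the total content of the destination sub-box after compression is at most $n$ nodes (since this is the total order of $S_I$), which is exactly the capacity of a $\sqrt{n}\times\sqrt{n}$ sub-box. Under this capacity invariant, I would follow the three-stage sequential pushing scheme illustrated in Examples \ref{ex:HandVCompress}--\ref{ex:DiaginalCompress}: (i) push the lines of $C_l$ one step each toward the shared boundary in far-to-near order, so that no pushed line crosses forbidden cells; (ii) re-align them perpendicularly to the boundary, so that each becomes a single horizontal or vertical segment; (iii) push them across the boundary into the parent sub-box. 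Connectivity is maintained throughout by appealing to Proposition \ref{prop:LineAlongPath}: any line that has to walk through a dense intermediate region does so transparently, never permanently displacing the nodes of $C_p$ nor disconnecting the shape.

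Finally, I would observe termination. By Lemma \ref{lem:ConCompInSubBox} each sub-box contains at most $2\sqrt{n}$ components and at most $\sqrt{n}$ sub-boxes are occupied, so the initial spanning tree $T$ has only polynomially many components and the induction must terminate after finitely many rounds. The final configuration consists of a single component containing all $n$ nodes inside a single $\sqrt{n}\times\sqrt{n}$ sub-box, as required. I expect the hard part to be making the geometric argument of the second paragraph fully rigorous, particularly in the diagonal case where $C_l$ must be routed through an intermediate sub-box $B_m$ whose own component may have to be absorbed mid-compression while the invariant on sub-box capacity and on global connectivity is simultaneously respected.
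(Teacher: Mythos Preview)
Your proposal is correct in spirit but vastly over-engineered relative to what this particular lemma asserts, and relative to how the paper proves it. In the paper, Lemma~\ref{lem:PossibleityofCom} is a pure \emph{capacity} statement: a $\sqrt{n}\times\sqrt{n}$ sub-box contains exactly $n$ cells, and the shape has exactly $n$ nodes, so all of them fit. That is the entire proof---one sentence. The lemma is not claiming that \textsc{Compress} succeeds as an algorithm; it is only establishing that the target configuration (all $n$ nodes in one sub-box) is geometrically feasible. The algorithmic correctness you are trying to establish here---termination, connectivity preservation, non-overflow during intermediate steps---is the content of the surrounding Lemmas~\ref{lem:CompNice} and~\ref{lem:CompressingPreservesConnectivity}, not of this one.

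One technical point worth flagging in your argument itself: you appeal to Proposition~\ref{prop:LineAlongPath} (transparency of line moves) to justify that pushing lines of $C_l$ through dense regions preserves connectivity. But the \textsc{Compress} procedure of Section~\ref{sec:nsqrtn_Universal_Transformation} does \emph{not} rely on transparency; that tool is specific to the Hamiltonian-shape transformation of Section~\ref{sec:Hamiltonian_Shapes}. The connectivity argument for \textsc{Compress} is instead the explicit case analysis of Lemma~\ref{lem:CompressingPreservesConnectivity} (Cases~1--5), which handles each line-push configuration directly rather than by invoking the transparency property. So if you were to carry your inductive argument through for the later lemmas, that appeal would need to be replaced.
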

\begin{proof}
	It is sufficient to show that the number of cells inside any sub-box, $\sqrt{n} \times \sqrt{n} = n$ is enough to be filled by at most $n$ nodes.
\end{proof} 

Now, we show that transformation \emph{UC-Box} form a nice shape by the end of the final phase. 

\begin{lemma} \label{lem:CompNice}
	Starting from any connected shape $S_{I}$ of order $n$, {\sc Compress} forms a nice shape of order $n$.   
\end{lemma}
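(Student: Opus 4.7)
The plan is to proceed by induction on the number of components of the current spanning tree $T$, showing that each iteration of the main loop of \textsc{Compress} strictly reduces this number while preserving both the total number of nodes $n$ and the global connectivity of the shape. Once this is established, the loop terminates after a finite number of iterations with a single component of $n$ nodes inside a single $\sqrt{n}\times\sqrt{n}$ sub-box, which is exactly a compressed square shape and hence a nice shape in the sense of \cite{AMP19}.

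First I would analyse the \emph{Compress} sub-routine case by case. In the horizontal or vertical case, the leaf component $C_l$ is pushed line-by-line into the adjacent sub-box $B_p$ starting from the line furthest from $B_p$; by Lemma~\ref{lem:PossibleityofCom}, $B_p$ has enough cells to accommodate $|C_l|+|C_p|\le n$ nodes, while the ``nearest-line-first'' schedule guarantees that no pushed line ever exceeds the boundary of $B_p$ and that at least one node of $C_l$ remains adjacent to a node of $C_p$ throughout, so neither $C_l$, $C_p$, nor their union becomes disconnected. When $C_l$ collides with a previously disjoint component $C_r$ inside $B_l$ (or with $C_m$ inside an intermediate $B_m$ in the diagonal case), the collision itself supplies the adjacency that merges them into a single connected sub-shape. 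In every case, two distinct components of $T$ are merged into one, so the number of maximal components of $T$ strictly decreases; since line moves conserve cardinality, $n$ is preserved.

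Termination and the final structural claim then follow. By Corollary~\ref{oberv:ShapeToTree} the initial partition yields a finite tree whose number of components is bounded above by $\sqrt{n}\cdot 2\sqrt{n}=2n$ (via Lemma~\ref{lem:ConCompInSubBox}), and each iteration decreases this count by at least one, so the loop halts. The termination condition of the loop coincides with all $n$ nodes lying inside a single $\sqrt{n}\times\sqrt{n}$ sub-box, which is always reachable because $\sqrt{n}\cdot\sqrt{n}=n$ cells suffice to hold $n$ nodes (Lemma~\ref{lem:PossibleityofCom}). The main obstacle I anticipate is verifying that the \emph{update} step produces a valid spanning tree after each combine, so that the next leaf selection is well-defined; this is immediate, however, since the associated graph of the newly combined component is connected and finite and any spanning tree of it can be computed in polynomial time, as noted in Section~\ref{sec:Prelim}. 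The resulting connected shape of $n$ nodes inside a $\sqrt{n}\times\sqrt{n}$ bounding box is precisely a compressed square, hence a nice shape of order $n$, as required.
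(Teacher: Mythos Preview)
Your argument is correct and ends at the same point as the paper: once all $n$ nodes occupy a single $\sqrt{n}\times\sqrt{n}$ sub-box, the shape is a compressed square and hence a nice shape. The paper's own proof of this lemma is far terser---it simply asserts that the procedure eventually compresses all nodes into one sub-box and observes that the resulting $\sqrt{n}\times\sqrt{n}$ square is nice, without spelling out the termination argument or the connectivity invariant.

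What you add is a proper termination proof (induction on the number of components, each \emph{Compress} call strictly decreasing it) together with a sketch of why each merge preserves connectivity. Both additions are sound, but note that the connectivity-preservation content you develop in the second paragraph is exactly what the paper isolates in the subsequent Lemma~\ref{lem:CompressingPreservesConnectivity}; for the present lemma only the structural claim about the final configuration is needed. One small slip: you first (correctly) say lines are pushed ``starting from the line furthest from $B_p$'' but then refer to a ``nearest-line-first'' schedule; the algorithm is furthest-first, and that is what your connectivity reasoning actually uses.
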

\begin{proof}
	The strategy will eventually compress all components of $ n \in S_I$ nodes into a $\sqrt{n} \times \sqrt{n}$ square sub-box. Regardless on which sub-box the final compressing is, the resulting final shape will be a compressed square of size $\sqrt{n}$, which is a nice shape. 
\end{proof}

\begin{lemma} \label{lem:CompressingPreservesConnectivity}
	Starting from an initial connected shape $S_I$ of order $n$ divided into $\sqrt{n}$ square sub-boxes of size $\sqrt{n}$, {\sc Compress} compresses a leaf component $C_l \subseteq S_I$ of $k \ge 1$ nodes, while preserving the global connectivity of the shape. 
\end{lemma}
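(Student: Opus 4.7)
The plan is to verify connectivity preservation case-by-case along the three situations that the \emph{Compress} operation distinguishes (horizontal/vertical, diagonal via $B_m$, collision), while exploiting the spanning-tree structure to reduce the argument to a small controllable region. The first observation I would make is that because $T_l$ is a \emph{leaf} sub-tree of the current spanning tree $T$, the nodes in $S_I \setminus C_l$ remain connected on their own (they are spanned by $T \setminus T_l$), and at the start of the operation there is at least one pair of adjacent nodes $u \in C_l$, $v \in C_p$ witnessing the unique tree edge between $T_l$ and $T_p$. Consequently, to establish global connectivity it suffices to preserve two invariants at every intermediate step: (I1) the nodes of $C_l$ currently on the grid form a connected subgraph, and (I2) at least one node of $C_l$ is adjacent to a node of $C_p$ (or, later, of the combined component).

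For the horizontal/vertical case I would analyse the two sub-phases shown in Figure \ref{fig:TransferB1ToB2}: (a) the pre-alignment, in which the lines of $C_l$ are pushed one at a time in order of decreasing distance from $B_p$ so as to become perpendicular to the $B_l/B_p$ boundary, and (b) the transfer, which advances those perpendicular lines into $B_p$. In phase (a), since each pushed line is maximal and the next (closer) line has not yet moved, the pushed line keeps at least one neighbour in $C_l$ after its single-cell shift, so (I1) persists; the bridge pair $(u,v)$ belongs to the closest line, which is the last to move, so (I2) is untouched during (a). In (b), as the perpendicular lines cross the boundary, (I2) is carried by the line currently containing $u$, and (I1) is maintained because perpendicular lines keep pairwise adjacency across the boundary while being transferred in order. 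Lemma \ref{lem:PossibleityofCom} together with the $n$-cell capacity of $B_p$ guarantees that no line is ever forced past the opposite boundary of $B_p$.

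For the diagonal case, I would treat the move as two consecutive axis-aligned compressions through the intermediate sub-box $B_m$: first along one axis, then along the other. Each of these is an instance of the previous case, so (I1)--(I2) are preserved throughout. For the collision sub-cases, I would simply note that ``collision'' produces additional adjacencies between $C_l$ and the component it meets ($C_r$ inside $B_l$, or $C_m$ inside $B_m$), so merging into $C^{\prime}_{r}$ or $C^{\prime}_{m}$ can only strengthen connectivity; the subsequent \textsf{update}$(T)$ step of Algorithm \ref{algo} keeps the tree invariant intact.

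The main obstacle I expect is the bookkeeping needed to show (I1) across the pre-alignment sub-phase when $C_l$ has an arbitrary connected internal configuration: a poorly chosen push order could temporarily detach, for instance, an L-shaped fragment. The clean way around this is to commit to the ordering prescribed in the text (farthest line first), and to use the maximality of each pushed line to argue that after its one-cell shift it still shares at least one endpoint-adjacency with the line that was previously next to it inside $C_l$. With that ordering fixed, the three case analyses above carry (I1) and (I2) through the whole operation, which in conjunction with the tree-based observation about $S_I\setminus C_l$ yields the claim.
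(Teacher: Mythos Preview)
Your invariant-based plan is cleaner than the paper's argument, but it rests on a premise that is not actually true for this operation. You write that ``the nodes in $S_I\setminus C_l$ remain connected on their own (they are spanned by $T\setminus T_l$)'' and then reduce everything to (I1)--(I2). That reduction would be valid only if the nodes of $C_p$ stayed put. They do not: once lines of $C_l$ cross into $B_p$, a push along a row of $B_p$ moves whatever $C_p$-nodes already sit in that row (this is exactly the paper's Cases~4 and~5), and when a row of $B_p$ becomes full the algorithm performs a non-trivial \emph{turn} to fill empty boundary cells of $B_p$ (Case~5, Figures~\ref{fig:CompressingPreservesConnectivity_3}--\ref{fig:CompressingPreservesConnectivity_4}). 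Both of these displace nodes of $C_p$, so $T\setminus T_l$ no longer certifies connectivity of $S_I\setminus C_l$, and neither (I1) nor (I2) says anything about $C_p$'s internal connectivity or about $C_p$'s edges to components in sub-boxes other than $B_l$. That is the real work in this lemma, and your plan does not touch it.

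The paper proceeds quite differently: rather than maintaining high-level invariants about $C_l$ and $C_p$ as sets, it analyses a \emph{single push} of one line and argues, by an exhaustive case split on how that line sits in $B_l$ or $B_p$ (five cases, including the full-row turning mechanism), that the one-step move cannot disconnect the shape; the lemma then follows by iterating over the sequence of pushes. If you want to keep your invariant framework, you would need at minimum an additional invariant (I3) that $C_p$ (as it is being rearranged inside $B_p$) stays connected and keeps every boundary adjacency it had to components outside $B_l\cup B_p$, and you would have to verify (I3) through the Case~5 turning step---which effectively brings you back to the paper's per-move case analysis.
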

\begin{proof}
	Given an initial connected shape $S_{I}$ of order $n$ enclosed into a box of length $n$, which is divided into  $\sqrt{n} $ sub-boxes of size $\sqrt{n} \times \sqrt{n}$, each occupied sub-box contains at least one component of a total $C$, for all  $1 \le C \le n$. By Corollary \ref{oberv:ShapeToTree}, $S_I$ is computed into a spanning tree $T = (V, E)$ of its associated graph $G(S_I)$, where $V$ represents components $C$ inside the sub-boxes and $E$ is the \textit{neighbouring relation of connectivity} between those sub-boxes (see Definitions \ref{def:Sub-boxesConnectivity} and \ref{def:ConnectivityBetweenComponenets}). Say that a component $C_l \in C$, occupies a sub-box $B_l$ and represented by a leaf $v \in V$, compresses into a parent component $C_p$ occupies an adjacent sub-box $B_p$ and corresponds to a parent $u \in V$. We shall discuss all possible cases of moving all $k \in C_l$ lines from $B_l$ towards $B_p$ vertically, horizontally and diagonally, for all $1 \le k \le \sqrt{n}$. 
	Due to symmetry, we only present all transformations in one direction, which holds for all other directions by rotating the shape 90\degree, 180\degree, and 270\degree. 
	
	Assume a left $B_l$ and right sub-box $B_p$ are connected horizontally. Then, all horizontal lines (rows) $k \in C_l$ push a single move right towards $B_p$ sequentially one after the other, starting from the furthest line from the boundary between $B_l$ and $B_p$. A single line $l \in k$ of length $i$, $1 \le i \le \sqrt{n}$, can occupy a row in $B_l$ in one of the following cases: 
	
	\begin{itemize}
		\item \textbf{Case 1.} The line $l$ of length $\sqrt{n}$ starts from the left and finishes at the right boundary of $B_l$. Regardless of the current configuration, $l$ pushes one move the right from $(x,y), \ldots, (x+\sqrt{n}, y)$ to $(x+1,y), \ldots, (x+\sqrt{n}+1, y)$ and decreases its length by 1. This move is just like simple position permutations of the $l$'s elements to their right neighbours positions. As a result, $l$ stays connected to any nodes at cells $(x,y\pm1), \ldots, (x+\sqrt{n}, y\pm1)$, creates an empty cell at $(x+1,y)$ and dose not break \textit{connectivity} of all other lines in $S_I$. See an example in Figure \ref{fig:CompressingPreservesConnectivity} (a) and (b).
		\begin{figure}[th!]
			\centering
			\subcaptionbox{A line $l$ of length $\sqrt{n}$ occupyis a whole row in $B_l$.}
			{\includegraphics[scale=0.4]{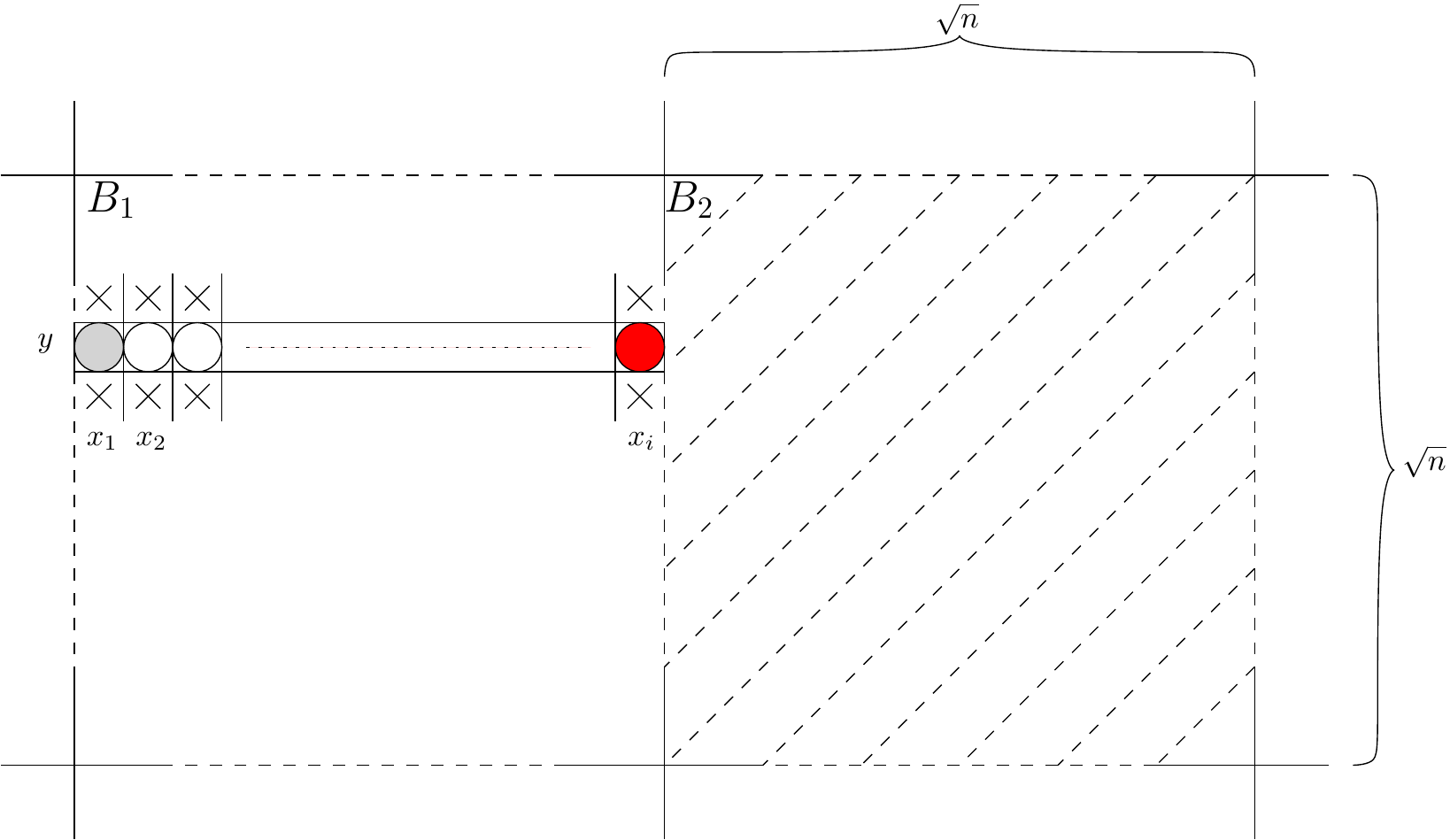}}  \quad 
			\subcaptionbox{$l$ pushes one move towards $B_l$.}
			{\includegraphics[scale=0.4]{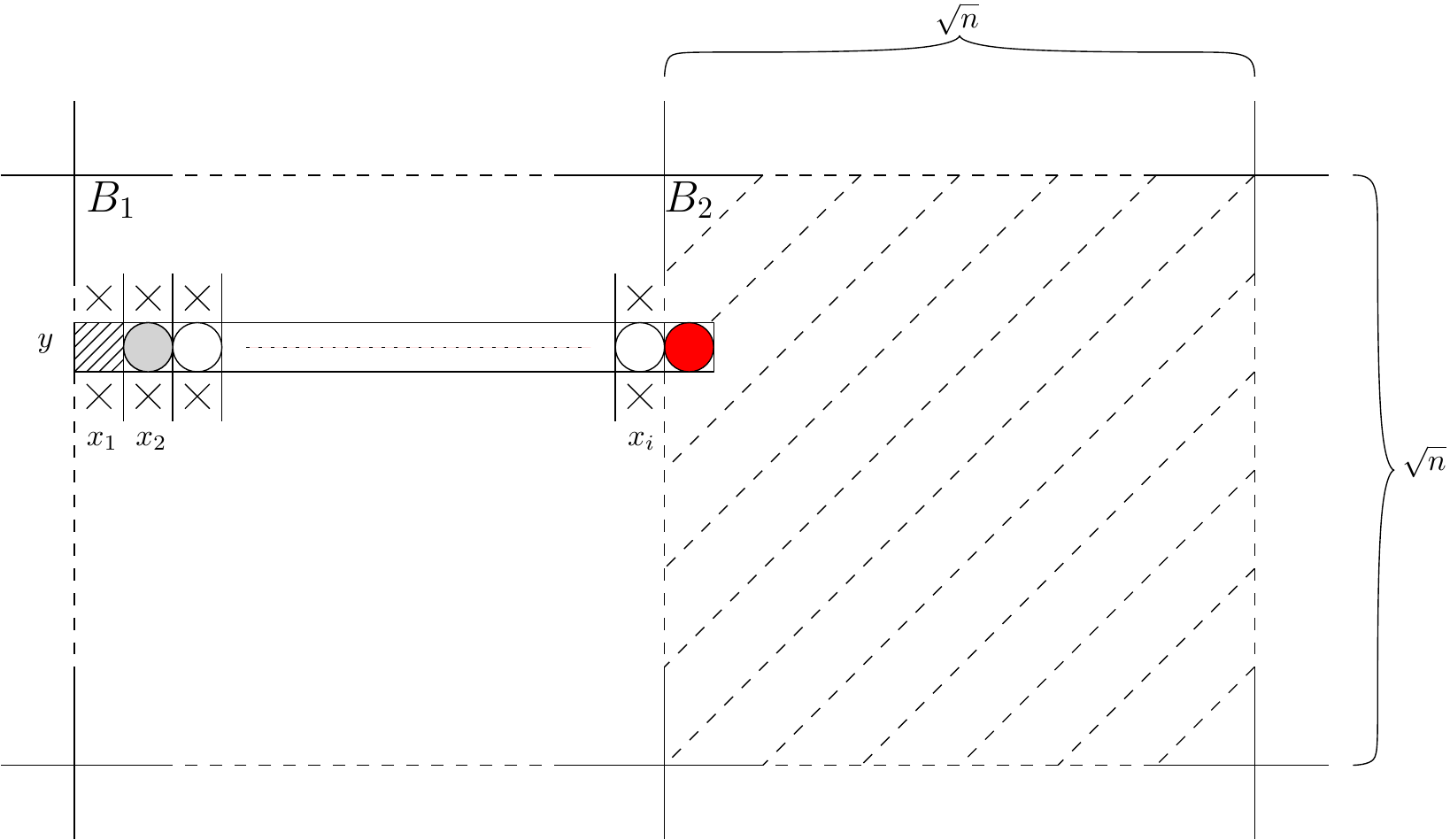}} 
			\caption{Case 1. A line $l$ of length $\sqrt{n}$ of a leaf component that occupies the whole dimension of a sub-box.}
			\label{fig:CompressingPreservesConnectivity}
		\end{figure} 
		\item \textbf{Case 2.} Similar of \textbf{Case 1} but with a line $l$ of length less than $\sqrt{n}$. $l$ pushes one move the right, and the length of $l$ dose not decrease in this case. Therefore, the whole connectivity of the shape is not effected. See Figure \ref{fig:CompressingPreservesCase1b}.
		\begin{figure}[th!]
			\centering
			\subcaptionbox{A line $l$ of length $i <\sqrt{n}$}
			{\includegraphics[scale=0.4]{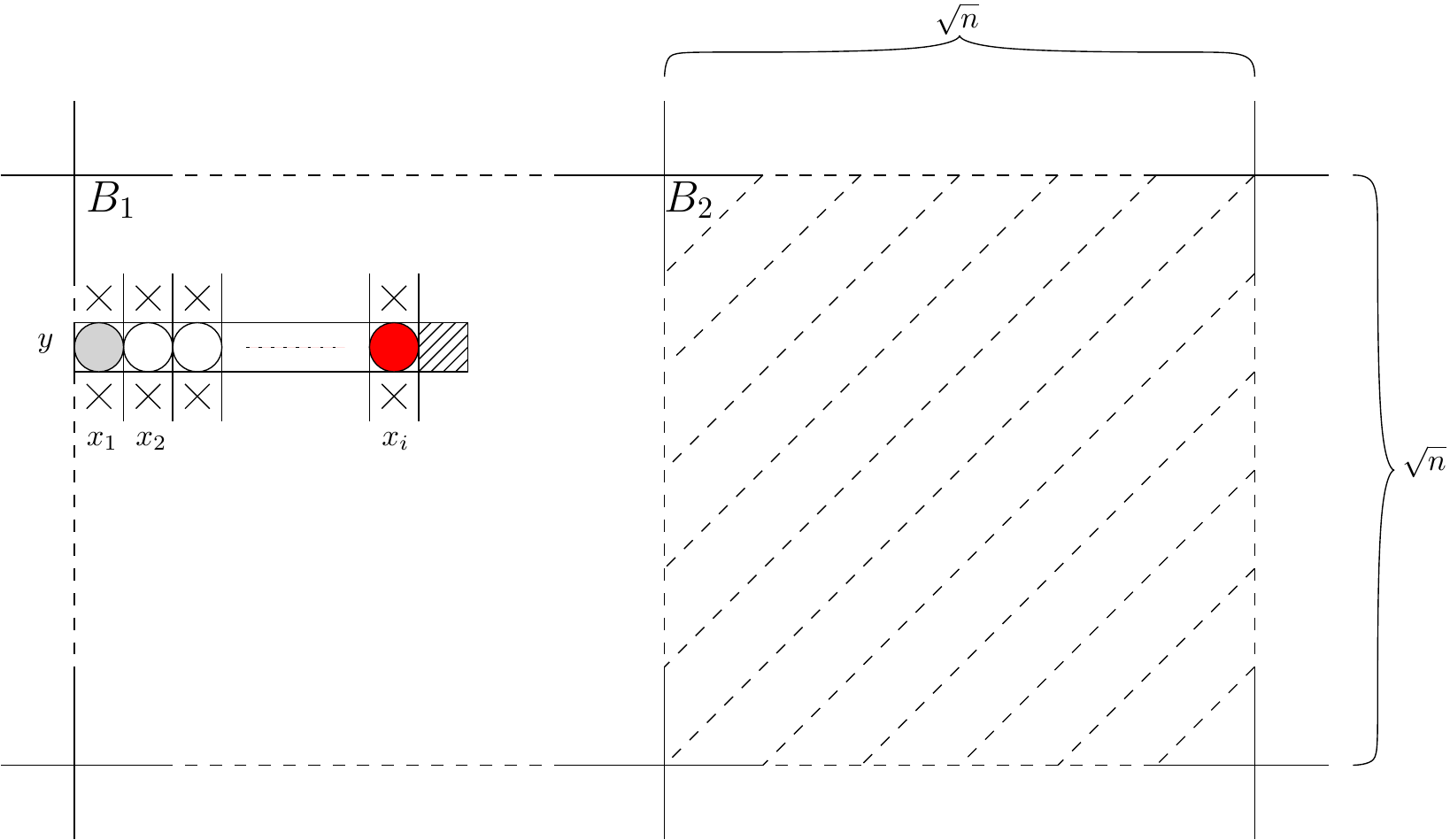}}  \quad 
			\subcaptionbox{$l$ pushes one move towards $B_1$.}
			{\includegraphics[scale=0.4]{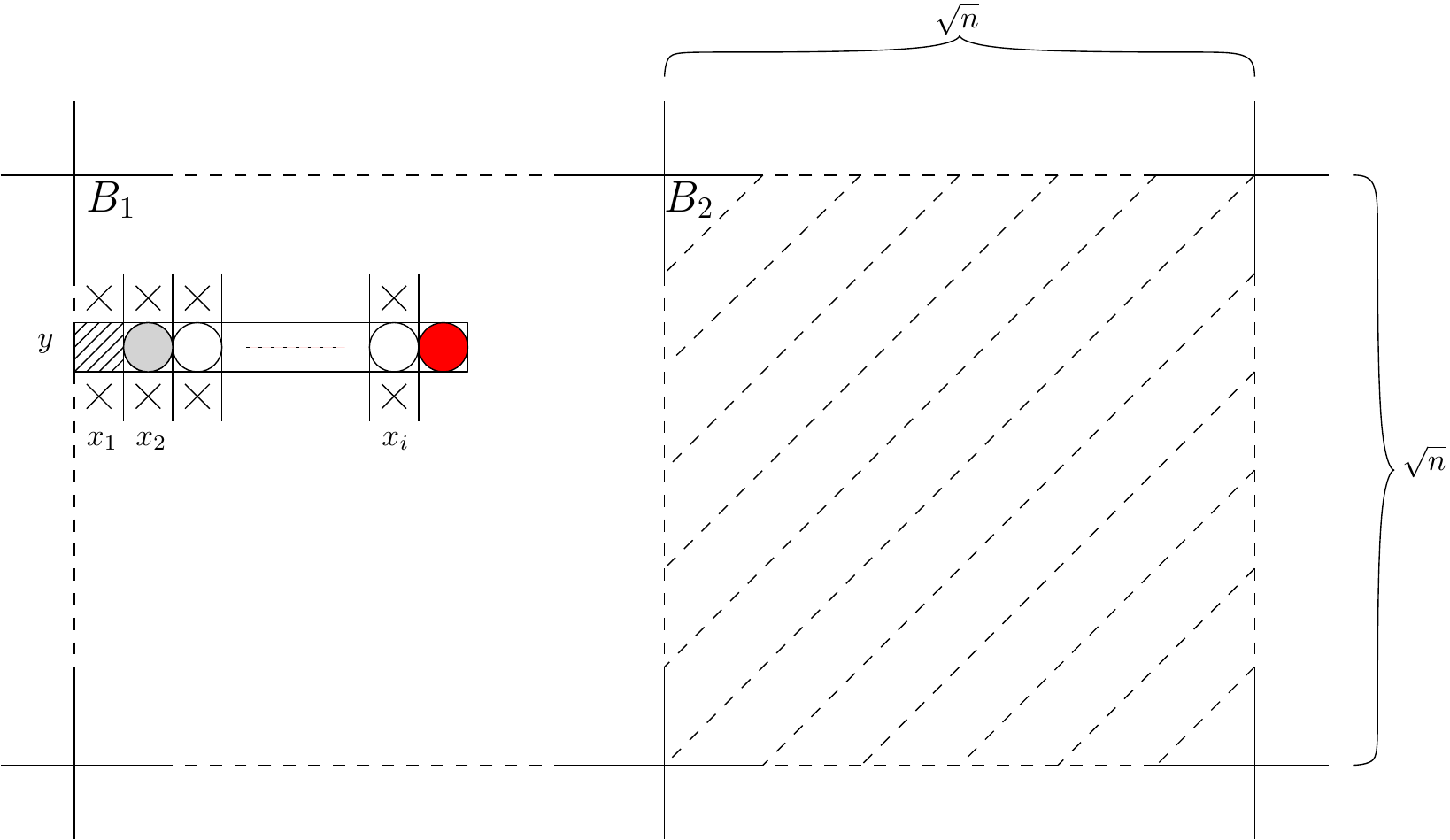}}
			\caption{Case 2. A line $l$ of length $i <\sqrt{n}$ of a child  component.}
			\label{fig:CompressingPreservesCase1b}
		\end{figure}		
		\item \textbf{Case 3.} Similar of \textbf{Case 2} in which there is two horizontal lines, $l_1$ and $l_2$, where $l_1$ starts from the leftmost column $x$ and ends at $x+i$ of $B_l$, and $l_2$ occupies $(x+i+2,y), \ldots ,(x + \sqrt{n},y)$. Now, $l_1$ pushes one move to fill the empty cell $(x+i+1, y)$,  a new empty cell has been created at $(x, y)$ and then both lines combines into a single line in of length $\sqrt{n}-1$, as in Figure \ref{fig:CompressingPreservesCase1c}. Still,  this move dose not violate \textit{connectivity} of the whole shape. 
		\begin{figure}[th!]
			\centering
			\subcaptionbox{Two horizontal lines occupy row $y$, both of lengths less than $\sqrt{n}$.}
			{\includegraphics[scale=0.4]{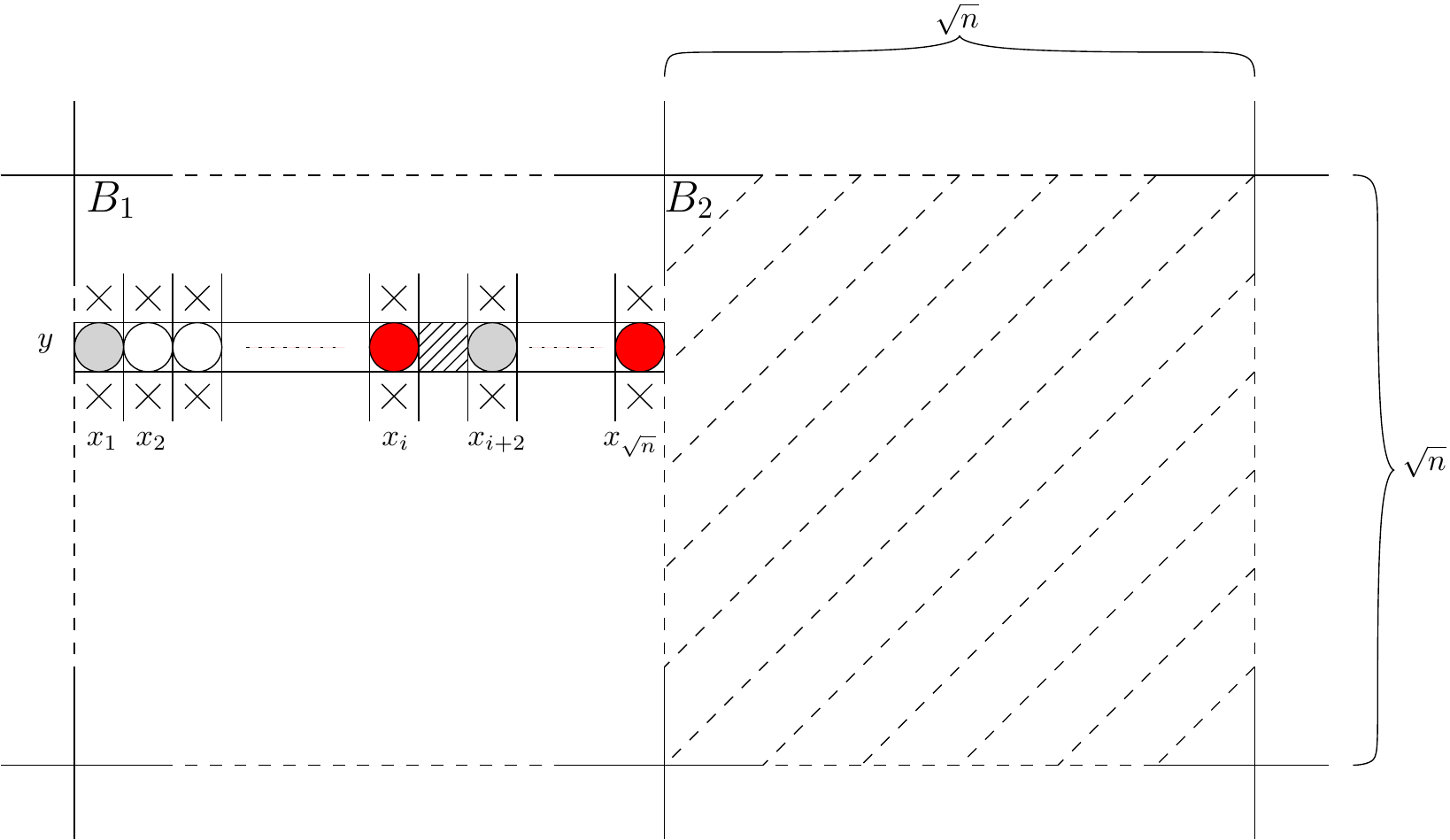}}  \quad 
			\subcaptionbox{$l$ pushes one move towards $B_l$.}
			{\includegraphics[scale=0.4]{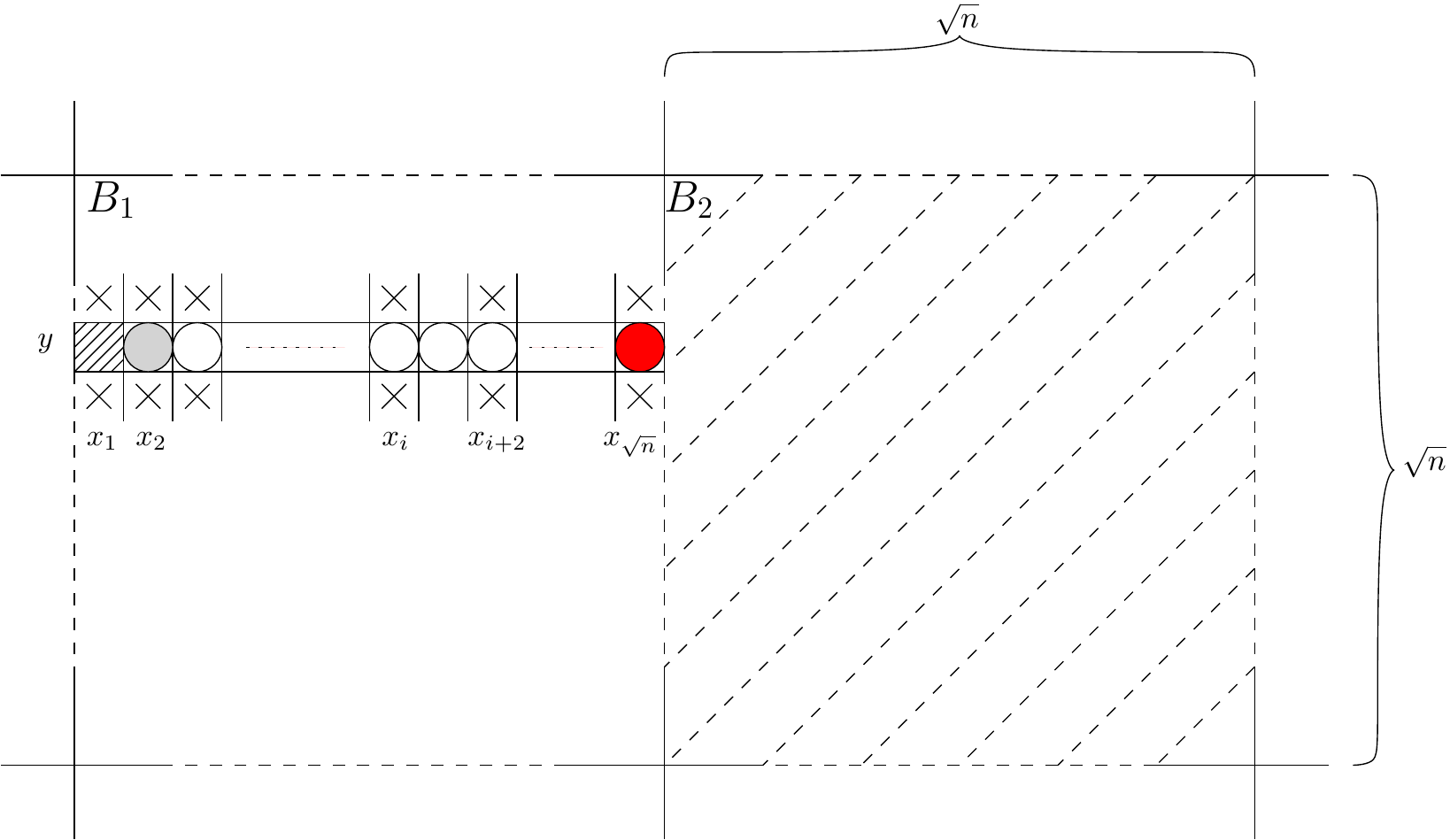}}
			\caption{Case 3. Two lines of a child component occupy a row, both of lengths less than $\sqrt{n}$.}
			\label{fig:CompressingPreservesCase1c}
		\end{figure} 
	\end{itemize}
	
	As mentioned earlier, when a component $C_l \in B_l$ moves to merge with its parent  $C_p \in B_p$, no line exceeds the four boundary of $B_p$. This shall preserves connectivity as well, and the following cases show how {\sc Compress} accomplishes this task, if $l$ occupies a row in $B_p$.   
	
	\begin{itemize}    	
		\item \textbf{Case 4.} The line  $l $ of length $i < \sqrt{n}$  starts from the leftmost column $x$ and ends at $x + i$, where there is an empty cell to the right at $(x+i+1, y)$. Once $l$ is pushed a single move to the right, $l$  fills in that empty cell and occupies positions $(x+1, y) , \ldots, (x+i+1, y)$. Therefore, the length of $l$ increases by 1, while the connectivity is preserved. See an example of this move in Figure \ref{fig:CompressingPreservesConnectivity_2}. 		
		\begin{figure}[th!]
			\centering
			\subcaptionbox{The line $l$ starts from a boundary between $(B_l,B_p)$ and ends at $(x+i, y)$, where $i < \sqrt{n}$.}
			{\includegraphics[scale=0.4]{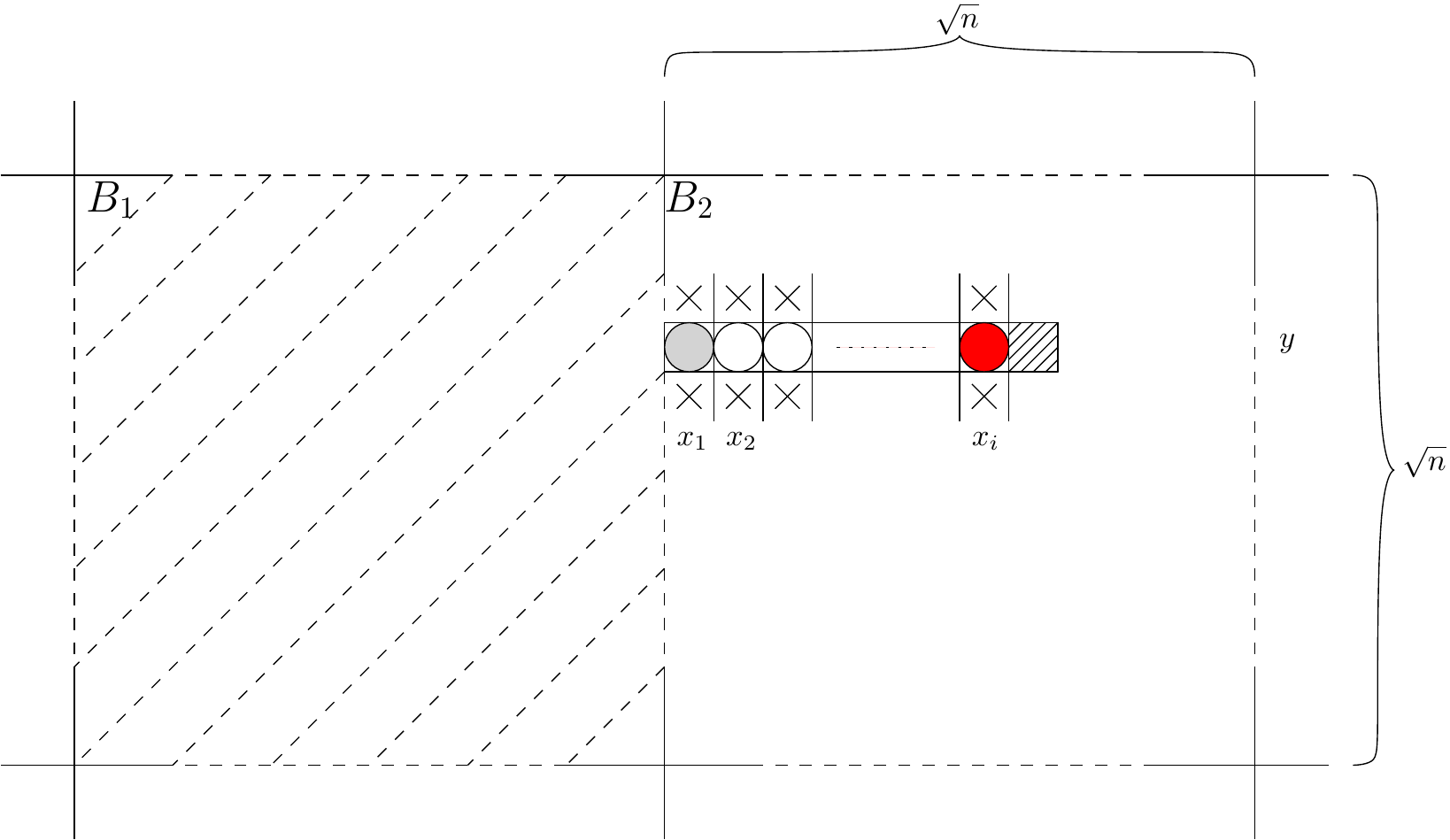}} \quad
			\subcaptionbox{$l$ is moved one positon right to occupy the empty cell to its right.}
			{\includegraphics[scale=0.4]{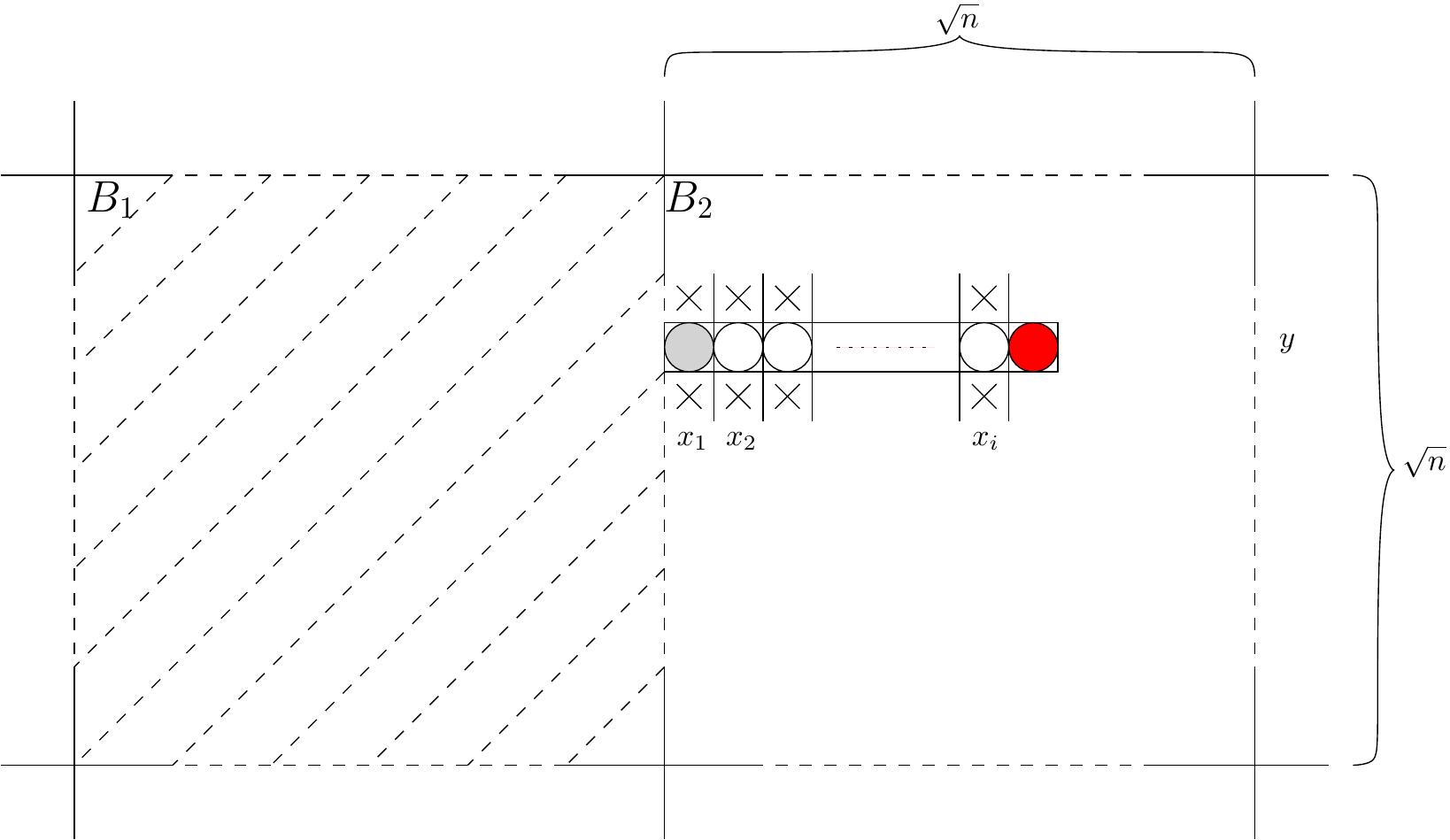}}
			\caption{Case 4. A line $l$ of length $i <\sqrt{n}$ in a parent component.}
			\label{fig:CompressingPreservesConnectivity_2}
		\end{figure}
		
		\item \textbf{Case 5.} The line $l$ of length $\sqrt{n}$ starts from the left and finishes at the right boundary of $B_p$. Once $l$ is pushed towards the right, it turns to fill empty cells at the right boundary of $B_p$, starting from the rightmost column to the left. The line $l$ needs two moves per node to change its orientation. Figures \ref{fig:CompressingPreservesConnectivity_3} and  \ref{fig:CompressingPreservesConnectivity_4} depicts two different examples of filling a boundary. Hence, this case preserves \textit{connectivity} of the whole shape. 
		\begin{figure}[th!]
			\centering
			\subcaptionbox{A line $l$ of length $i =\sqrt{n}$ occupyis a whole row in $B_p$.}
			{\includegraphics[scale=0.38]{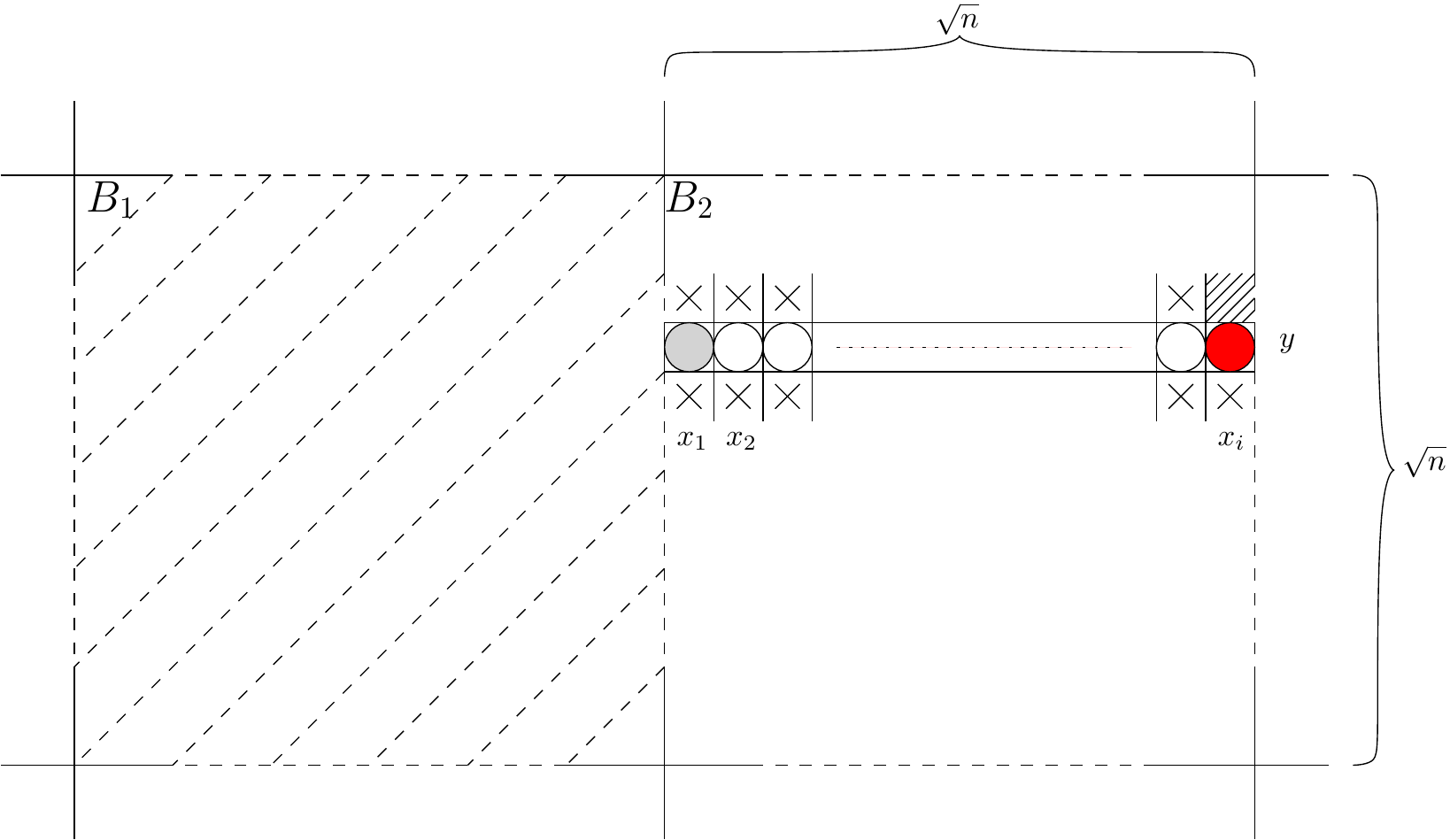}} \qquad 
			\subcaptionbox{$l$ starts to fill in an empty cell $(x+i, y+1)$ at the boundary of $B_p$ by pushing the node at $(x+i,y)$ one move up.}
			{\includegraphics[scale=0.38]{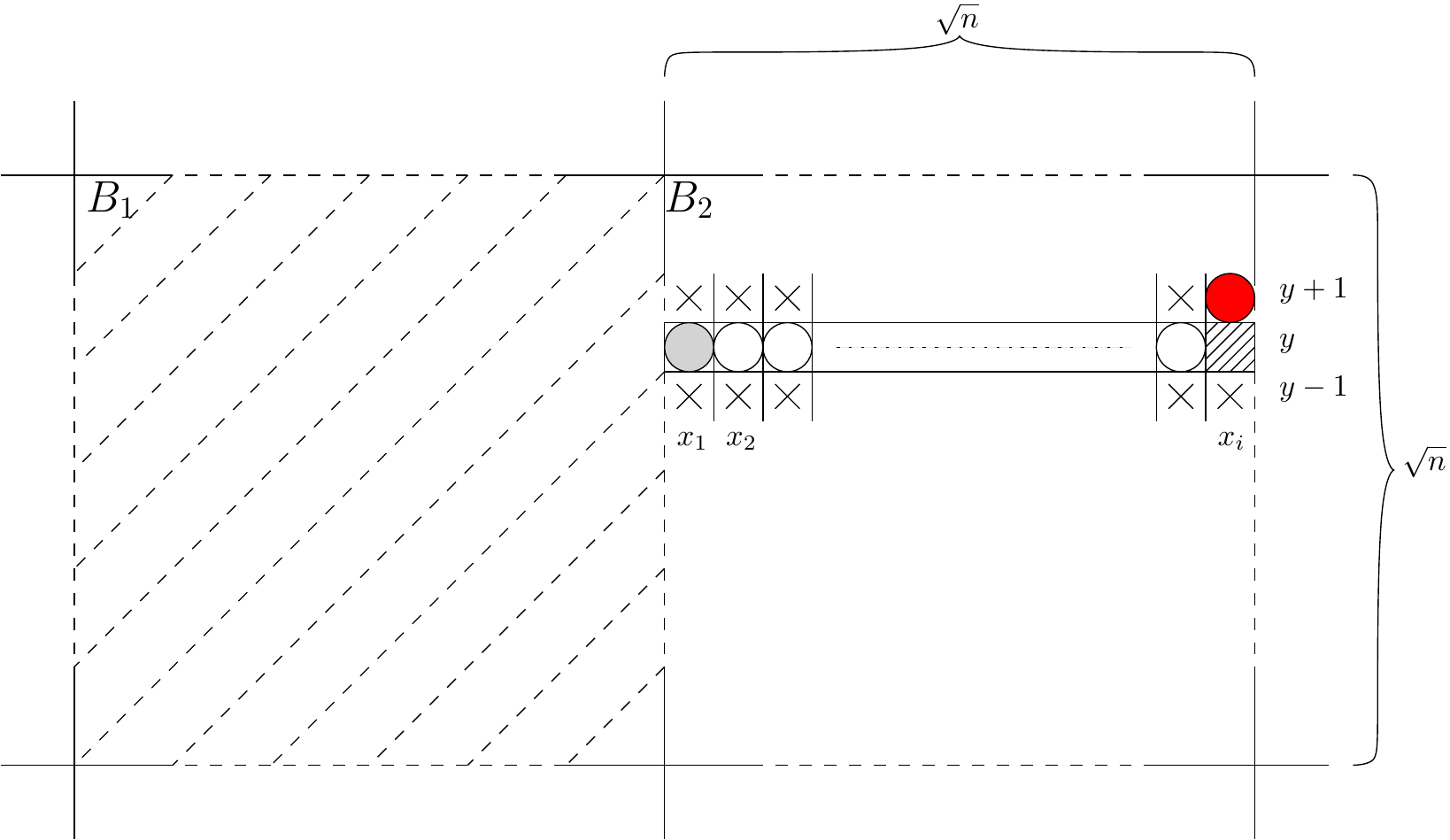}} \\
			\subcaptionbox{$l$ pushes one move right.}
			{\includegraphics[scale=0.4]{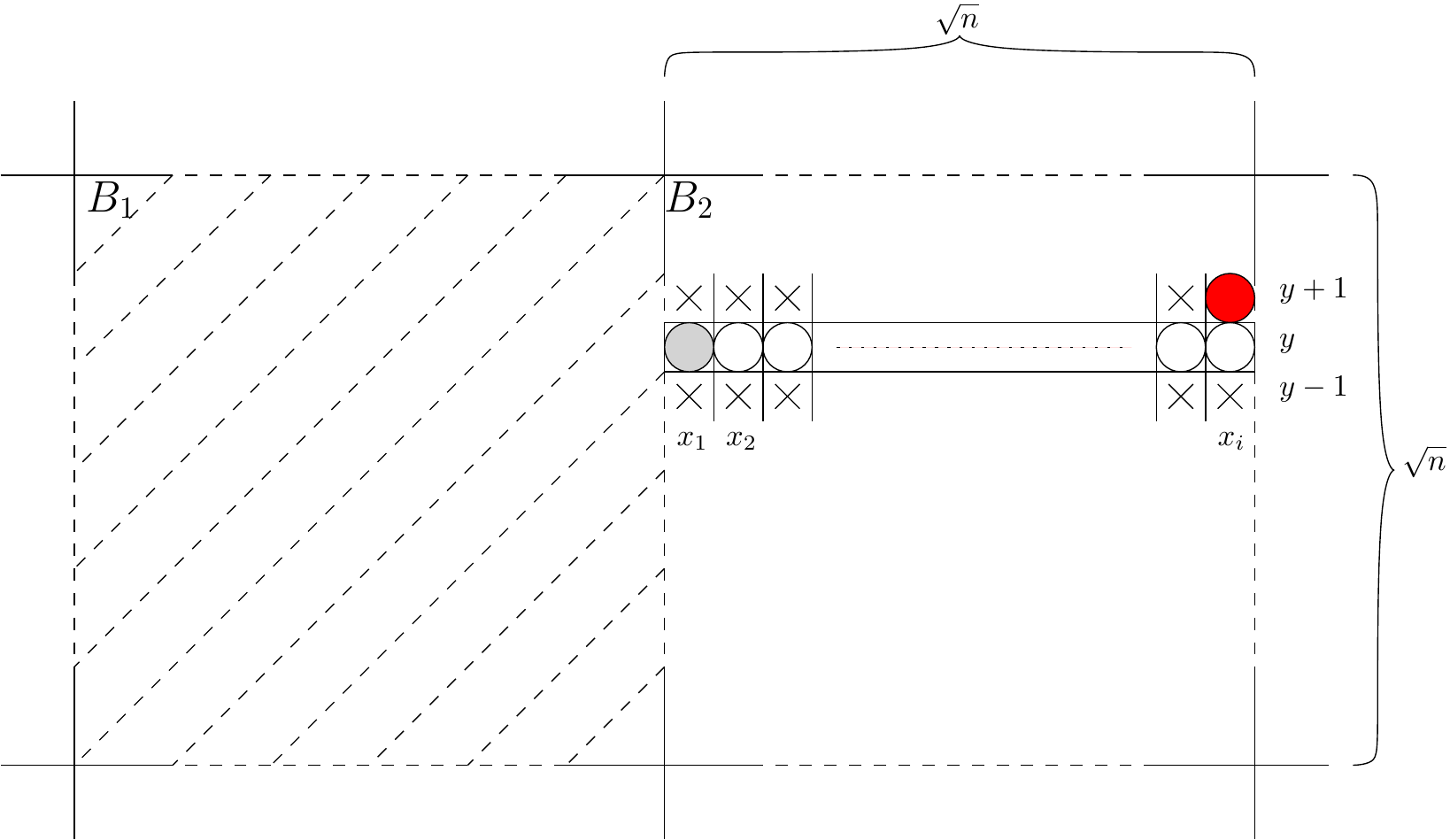}} 	   	
			\caption{Case 5 - Example 1. A line $l$ of length $\sqrt{n}$ of a parent component occupies the whole dimension of a sub-box., where there is empty cell at the rightmost column.}
			\label{fig:CompressingPreservesConnectivity_3}
		\end{figure}
		\begin{figure}[th!]
			\centering
			\subcaptionbox{A line $l$ of length $i =\sqrt{n}$ occupyis a whole row in $B_p$.}
			{\includegraphics[scale=0.38]{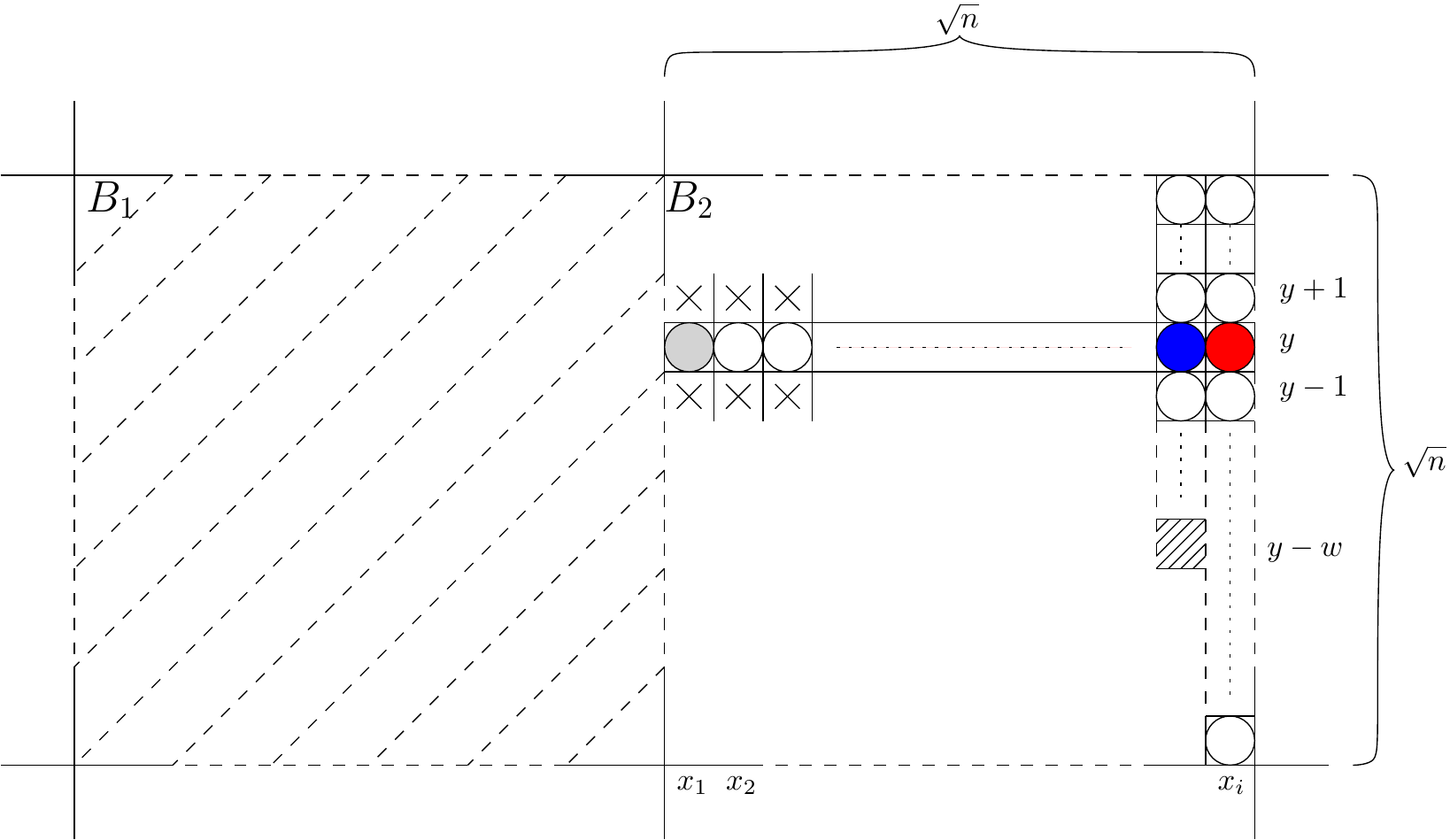}} \qquad 
			\subcaptionbox{$l$ turns to fill in the first empty cell at the right boundary of $B_p$, cell $(x+i-1, y+w)$. Then, $l$ pushes one move right to occupy the new empty cell $(x+i-1, y)$. }
			{\includegraphics[scale=0.38]{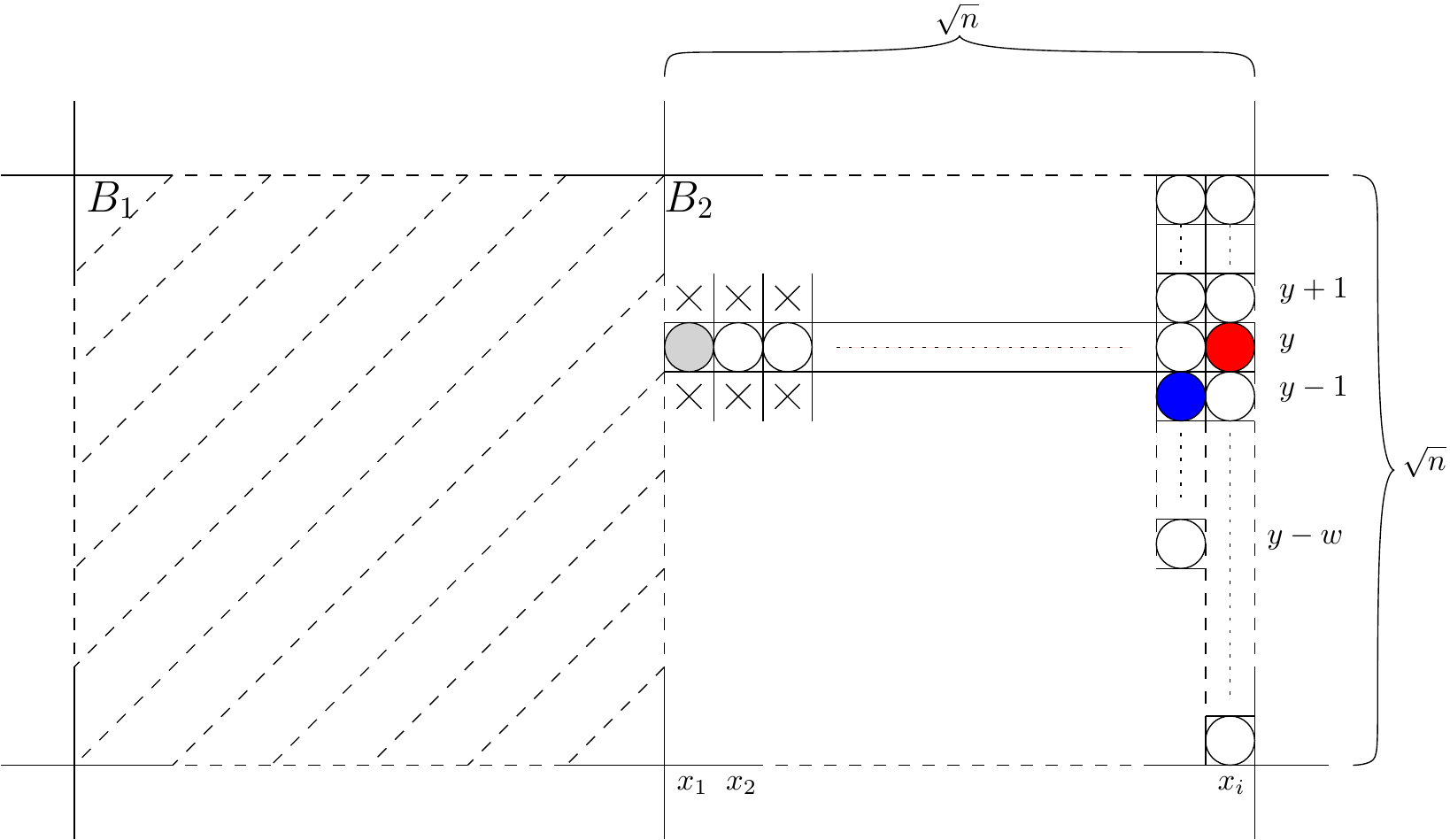}}
			\caption{Case 5 - Example 2. A line $l$ of length $\sqrt{n}$ of a parent component occupies the whole dimension of a sub-box., where there is no empty cell at the rightmost column. In this case,  $l$ fills in an empty cell at the column $x+i-1 $ of $B_p$.}
			\label{fig:CompressingPreservesConnectivity_4}
		\end{figure}
	\end{itemize}
	Finally, in all above cases, $l$ pushes one move towards the right without breaking \textit{connectivity} of $S_I$.  As an immediate observation: whenever a line $l \subset S_I$ inside a sub-box of dimension $\sqrt{n}$, for all $1 \le l \le \sqrt{n}$, that starts (\textit{perpendicularly}) from a boundary \textit{pushes} one move towards the opposite boundary between $(B_l, B_p)$, the global \textit{connectivity} of the whole shape is preserved. Further, this holds also for all $l$ lines that are pushing one move from $B_l$ towards $B_p$, sequentially one after another at any order, starting from the furthest-to-nearest line from that boundary between $B_l$ and $B_p$. Therefore, this must hold for a finite number of line moves a leaf $C_l$ requires to merge with its parent $C_p$ in $B_p$.
\end{proof} 

\newpage
\subsection{Running Time} 

Now, we are ready to analyse the time complexity of {\sc Compress}. The following lemmas provide a rough upper bound for all possible shape configurations. Given a uniform partitioning of any initial connected shape $S_1$ of order $n$, let us first show the total steps required to compress a leaf component $C_l$ in a sub-box $B_l$ into a parent $C_p$ occupying an adjacent sub-box $B_p$, in a worst-case.

\begin{lemma} \label{lem:RunTime_Of_One_Component}
	Given a pair of components $C_l, C_p$ of $k_l$ and $k_p$ nodes, $1 \le k_l+k_p \le n$, occupying adjacent sub-boxes $B_l, B_p$ of size $\sqrt{n}$ each, receptively.  Then, $C_l$ requires at most $O(n)$ steps to move from $B_l$ and compress into $C_p$ in $B_p$, without breaking connectivity. 
\end{lemma}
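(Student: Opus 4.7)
The plan is to bound the total number of line moves needed to compress the leaf component $C_l$ into its parent $C_p$ inside $B_p$ by analysing the work row by row (or column by column, depending on the orientation of the compression).

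First I would reduce to the horizontal case: assume without loss of generality that $B_l$ lies to the left of $B_p$ and they share a vertical boundary, so compression is performed by pushing lines rightward; all other directions follow by rotating the whole grid by $90\degree$, $180\degree$, or $270\degree$. Under this convention I would decompose the work by rows of $B_l$. Since $B_l$ has side length $\sqrt{n}$, there are at most $\sqrt{n}$ such rows, and each row contains at most $\sqrt{n}$ nodes of $C_l$.

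Next I would show that transferring all $C_l$-nodes of a single row across the shared boundary into $B_p$ and settling them costs $O(\sqrt{n})$ line moves. Every push moves a maximal horizontal segment by exactly one cell, and the distance from any cell of $B_l$ to the shared boundary is at most $\sqrt{n}$, so the traversal part costs $O(\sqrt{n})$ per row. Even if a row is split into several short segments separated by gaps, Case~3 of Lemma~\ref{lem:CompressingPreservesConnectivity} guarantees that sequential rightward pushes of the leftmost segment merge it with the next segment at no extra cost, so the merged line thereafter moves as one and the per-row push count stays $O(\sqrt{n})$. The subsequent settling inside $B_p$, in which the incoming line may have to be turned along a perpendicular boundary of $B_p$ to avoid crossing the far side, contributes at most another $2\sqrt{n}$ moves per row by Case~5 of Lemma~\ref{lem:CompressingPreservesConnectivity}. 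Collisions with another component $C_r$ inside $B_l$ only halt the process earlier, so the bound is preserved.

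Summing over the $O(\sqrt{n})$ rows then yields $\sqrt{n}\cdot O(\sqrt{n}) = O(n)$ line moves, as required. For the diagonal case, in which $B_l$ and $B_p$ share only a corner and compression is routed through an intermediate sub-box $B_m$, I would simply apply the horizontal/vertical bound twice, once for $B_l\to B_m$ and once for $B_m\to B_p$, keeping the total at $O(n)$. The main obstacle I expect is verifying that the per-row bound of $O(\sqrt{n})$ genuinely holds across every configuration of segments and gaps, and when a line has to be re-oriented at a boundary; both rely critically on the transparency of line moves (Proposition~\ref{prop:LineAlongPath}) together with the combining and turning behaviour established in Cases~3 and~5 of Lemma~\ref{lem:CompressingPreservesConnectivity}, which together prevent the cost from blowing up beyond $O(n)$ even when many short segments are present.
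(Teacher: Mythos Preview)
Your argument is correct and follows essentially the same route as the paper: both count at most $\sqrt{n}$ lines, each travelling distance $O(\sqrt{n})$ to cross into $B_p$, add an $O(\sqrt{n})$ overhead per line for turning/filling at the far boundary (Case~5), and handle the diagonal case by routing through $B_m$ and paying the bound twice, arriving at $O(n)$ in total. One minor remark: the appeal to Proposition~\ref{prop:LineAlongPath} (transparency) is unnecessary here, since compression only pushes lines within and between the two sub-boxes and the connectivity and per-move accounting are already fully covered by the cases of Lemma~\ref{lem:CompressingPreservesConnectivity}; transparency is the tool for the Hamiltonian \emph{LineWalk}, not for \emph{UC-Box}.
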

\begin{proof}
	Assume $B_l, B_p$ are connected diagonally (see Definition \ref{def:Sub-boxesConnectivity}), where a component $C_l$ occupies $\sqrt{n}$ lines in $B_l$ and $C_p$ consists of $\sqrt{n}$ lines in $B_p$ as well. $C_l$ pushes from $B_i$ via an intermediate sub-box $B_m$ towards $B_p$. Then, the $\sqrt{n}$ lines of $C_l$ moves a distance of at most $\sqrt{n}$ to cross the boundary between $B_l$ and $B_p$, in a total of at most $n$ moves to completely occupy $B_m$. Again, $C_l$ takes additional $n$  to move into $B_p$ and join $C_p$. Moreover, assume that $C_l$ requires additional $2n$ steps to fill in a boundary at $B_p$. Therefore and by Lemma \ref{lem:CompressingPreservesConnectivity},   $C_l$ compresses into $C_p$ in a total of at most:
	\begin{align*}
	t &= n + n + 2n = 4n \\
	&= O(n),
	\end{align*}  
	moves, while preserving connectivity of the shape during transformations.  
\end{proof}

The compression cost of this transformation could be very low taking only one move or being very high in some cases up to linear steps. To simplify the analysis, we divide the total cost of \emph{UC-Box} into charging phases. We then manage to upper bound the cost of each charging phase independently of the sequential order of compressions.

\begin{lemma} \label{lem:RunningTime}
	{\sc Compress} compresses any connected shape $S_I$ of order $n$ into a $\sqrt{n}\times\sqrt{n}$ square shape, in $O(n\sqrt{n})$ steps without breaking connectivity. 
\end{lemma}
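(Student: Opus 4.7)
The plan is to bound the total running time via an amortised \emph{charging phase} argument, decomposing the execution of Algorithm \ref{algo} into at most $\sqrt{n}$ phases of cost $O(n)$ each. A \emph{charging phase} is a maximal contiguous sequence of iterations of the main loop during which one occupied leaf sub-box $B_l$ is emptied, i.e., all of its components have been moved into adjacent sub-boxes. Since $S_I$ initially occupies at most $\sqrt{n}$ sub-boxes of the $\sqrt{n}\times\sqrt{n}$ partition (as per the partitioning argument above and \cite{AMP19}), and each phase strictly decreases the count of occupied sub-boxes by at least one, the total number of phases is at most $\sqrt{n}$.

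The central step is to bound the cost of one phase by $O(n)$. Within a single phase, the content of one leaf sub-box $B_l$ is moved over a distance of at most $O(\sqrt{n})$ cells, either directly into an axis-adjacent parent sub-box $B_p$ or diagonally via an intermediate sub-box $B_m$. Since $B_l$ has area $\sqrt{n}\cdot\sqrt{n}=n$, it contains at most $n$ nodes, so the aggregate node-displacement required in one phase is at most $n\cdot O(\sqrt{n}) = O(n\sqrt{n})$. By Lemma \ref{lem:CompressingPreservesConnectivity}, the \emph{Compress} operation executes this displacement through line moves that push up to $\sqrt{n}$ nodes at once (the maximum line length inside a sub-box), so the number of line moves per phase is at most $O(n\sqrt{n})/\Theta(\sqrt{n}) = O(n)$. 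The per-component cost $O(n)$ of Lemma \ref{lem:RunTime_Of_One_Component} is used as the building block, and the $O(n)$ per-phase bound then follows by amortising all components of $B_l$ against the sub-box's common area budget.

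Combining the two bounds gives $O(\sqrt{n})\cdot O(n) = O(n\sqrt{n})$ total line moves. Connectivity of the shape is maintained throughout by Lemma \ref{lem:CompressingPreservesConnectivity}, and Lemma \ref{lem:CompNice} guarantees termination at a $\sqrt{n}\times\sqrt{n}$ compressed (nice) square shape. The main obstacle I anticipate is making the per-phase amortisation rigorous when $B_l$ contains up to $2\sqrt{n}$ distinct components (Lemma \ref{lem:ConCompInSubBox}): a naive summation of Lemma \ref{lem:RunTime_Of_One_Component} over individual components would yield an excessive $O(n\sqrt{n})$ per phase, so this must be replaced by a joint accounting that exploits the shared area of $B_l$. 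A secondary subtlety is the diagonal-compression case, where a compressing leaf may collide with components in the intermediate sub-box $B_m$; such collisions merely merge two components prematurely, so they only reduce the remaining work and are benign for the upper bound.
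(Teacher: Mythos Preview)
Your proposal follows essentially the same charging-phase decomposition as the paper: at most $O(\sqrt{n})$ occupied sub-boxes (from \cite{AMP19}), each emptied at cost $O(n)$ via Lemma~\ref{lem:RunTime_Of_One_Component}, for a total of $O(n\sqrt{n})$; connectivity via Lemma~\ref{lem:CompressingPreservesConnectivity}. Your framing of a phase as ``emptying one leaf sub-box'' coincides with the paper's informal accounting (the paragraph immediately following the proof makes exactly this sub-box-colouring argument), and you correctly flag the multiple-components-per-sub-box subtlety that the paper also leaves loosely handled.

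One point to tighten: your step ``aggregate displacement $O(n\sqrt{n})$ divided by line length $\Theta(\sqrt{n})$ gives $O(n)$ moves'' is not sound as written, because lines inside a sub-box need not have length $\Theta(\sqrt{n})$ --- they can be length~$1$. The paper (through Lemma~\ref{lem:RunTime_Of_One_Component}) instead bounds the number of line moves directly: a $\sqrt{n}\times\sqrt{n}$ sub-box contributes at most $\sqrt{n}$ rows (or columns) of lines, each pushed at most $O(\sqrt{n})$ times to traverse into the adjacent box, yielding $O(n)$ moves irrespective of how many nodes sit on each line. Recasting your per-phase bound in those terms (counting rows pushed rather than node-displacement divided by line length) closes the gap and aligns with the paper's argument.
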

\begin{proof}
   Let us compute a spanning tree $T=(V,E)$ of the associated graph $G(S_I)$, where nodes $V$ correspond to elements linked by edges $E$ representing \textit{relation connectivity} between them. Recall that the partitioning process of $S_I$ into small $ \sqrt{n} \times \sqrt{n} $ sub-boxes shall provide at most $O(\sqrt{n})$ occupied sub-boxes (proved in \cite{AMP19}). Observe that each component inside theses occupied sub-boxes matches a subtree in $T$. In each charging phase, the strategy compresses a single or multiple components of at most $O(\sqrt{n})$ nodes distance $O(\sqrt{n})$, which incurring a total cost of at most $O(n)$ (the worst-case is analysed in Lemma \ref{lem:RunTime_Of_One_Component}). Once this computed, a single or multiple subtrees of $\sqrt{n}$ nodes are removed form $T$. By repeating the same argument for at most $O(\sqrt{n})$ charging phases, then we arrive at the case where all nodes are removed from $T$,  which means that all components have been compressed into a single sub-box in a total cost at most $O(n\sqrt{n})$ moves, while the whole connectivity of the shape is not broken (consult Lemma \ref{lem:CompressingPreservesConnectivity}).   
\end{proof}

Similar to Lemma \ref{lem:RunningTime} but of different perspective, assume that $S_I$ is hidden of which we cannot see the actual configuration. Colour black all the $O(\sqrt{n})$ occupied sub-boxes by $S_I$. Each black sub-box consists of $n$ cells in a total of $n\sqrt{n}$ cells for all black occupied sub-boxes. Given that, in each charging phase the strategy moves $\sqrt{n}$ lines $\sqrt{n}$ distance of a total cost at most $O(n)$ moves to compress all nodes inside a black sub-box. This might happen in any order throughout the transformation. As the cost $O(n)$  is mostly sufficient to compress all nodes inside a single black sub-box and by Lemma \ref{lem:CompressingPreservesConnectivity},  a total of at most $O(\sqrt{n})$ charging  phases are fairly enough to compress all components inside the $O(\sqrt{n})$ occupied black sub-boxes, in a maximum total cost $O(n\sqrt{n})$ moves, while preserving connectivity during the transformations.    

There are a number of connected shapes which can be divided, by some partitionings, into $n$ connected components. This kind of dividing brings a wort-case complexity in which {\sc Compress} meets its maximum cost, due to several reasons. First, it splits the shape into  the maximum possible number of components $n$. Moreover, the diameter of the shape is spread over the largest space to cover $n$ (rows or columns). Unlike other dense connected shapes of shorter diameters, outspread shapes are harder to compress due to the lack of long lines and the additional cost required for individuals and short lines. The following lemma shows that there are a finite number of specific shapes that has  $n$ components produced by some artificial partitionings.

\begin{lemma} \label{lem:Finit_Worst_shapes}
	There are a finite number of initial shapes denoted $ \mathscr{S_I} $ that can be divided into $n$ components by some uniform partitionings. It holds that {\sc Compress} compresses any instance  $A \in \mathscr{S_I}$ into a single square sub-box in a total of $O(n\sqrt{n})$ steps, while preserving connectivity during its course.  
\end{lemma}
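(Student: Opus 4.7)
My plan is to first characterise the family $\mathscr{S_I}$ and then recycle the charging-phase analysis of Lemma \ref{lem:RunningTime} with a sharpened per-component bound. Since any $A \in \mathscr{S_I}$ holds $n$ nodes split into $n$ components, each component must be a single node, and any two nodes inside the same $\sqrt{n}\times\sqrt{n}$ sub-box must be pairwise non-adjacent with respect to the 8-neighbourhood relation. Combined with global connectivity of $A$, which forces each sub-box's singletons to be cross-boundary adjacent to singletons in neighbouring sub-boxes, this restricts $\mathscr{S_I}$ to a bounded collection of configurations for every fixed $n$, since there are only finitely many ways to place $n$ pairwise 8-isolated nodes inside an $n\times n$ bounding box that realise a connected shape.

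Next the plan is to run Algorithm \ref{algo} on $A$ as usual: compute a spanning tree $T$ of $G(A)$, then iteratively pick a leaf component and compress it into its parent sub-box. Because every component is a single node (a length-$1$ line), the worst-case bound of Lemma \ref{lem:RunTime_Of_One_Component} improves: moving a singleton from $B_l$ to an adjacent $B_p$ traverses at most $O(\sqrt{n})$ cells and therefore costs $O(\sqrt{n})$ line moves rather than $O(n)$. Connectivity during each such move follows from Lemma \ref{lem:CompressingPreservesConnectivity} specialised to $k=1$, so the global invariant is maintained throughout.

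For the total running time the charging-phase decomposition of Lemma \ref{lem:RunningTime} applies almost verbatim. In each charging phase, all singletons residing in one occupied sub-box are swept in order of decreasing distance from the boundary shared with their parent sub-box, eventually merging with the parent component in $B_p$. By Lemma \ref{lem:ConCompInSubBox} a sub-box carries at most $O(\sqrt{n})$ singletons, and each one costs $O(\sqrt{n})$ moves, so one phase contributes $O(n)$ moves. The $O(\sqrt{n})$ occupied sub-boxes yielded by the partitioning (from \cite{AMP19}) then give $O(\sqrt{n})$ phases, summing to $O(n\sqrt{n})$ moves.

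The main obstacle will be scheduling the singletons inside a single sub-box so that no intermediate move disconnects the shape or collides destructively with a singleton belonging to a different sub-tree of $T$. The plan is to adopt the same furthest-first ordering used in Lemma \ref{lem:CompressingPreservesConnectivity}, and whenever a compressing singleton collides with a component in its own or an intermediate sub-box, to invoke the \emph{combine} branch of Algorithm \ref{algo} and update $T$ accordingly. Since every combine operation strictly shrinks $T$, progress is guaranteed within the stated $O(\sqrt{n})$ charging phases, and the connectivity invariant is inherited at every single step from Lemma \ref{lem:CompressingPreservesConnectivity}.
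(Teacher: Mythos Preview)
Your charging-phase framework mirrors the paper's: bound the number of occupied sub-boxes by $O(\sqrt{n})$, bound the cost of emptying each by $O(n)$, and multiply. The paper derives the $O(\sqrt{n})$ sub-box count directly from the component count (with $n$ singleton components and at most $2\sqrt{n}$ components per sub-box by Lemma~\ref{lem:ConCompInSubBox}, one needs $\Theta(\sqrt{n})$ sub-boxes), then computes a concrete per-phase sum and multiplies by $2\sqrt{n}$.

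There is, however, a real gap in your per-phase accounting. Your $O(n)$ per-phase bound rests on the claim that every component in the sub-box being emptied is a singleton, so that ``at most $O(\sqrt{n})$ singletons, each costing $O(\sqrt{n})$ moves'' gives $O(n)$. This is valid only at the start. Once a leaf sub-box $B_l$ has been compressed into its parent $B_p$, the component sitting in $B_p$ is no longer a singleton; when $B_p$ in turn becomes a leaf and must be compressed, your sharpened $O(\sqrt{n})$ per-component bound no longer applies. In a chain of $\sqrt{n}$ sub-boxes your accounting covers only the first hop of each node, not the subsequent ones. The paper sidesteps this by arguing, essentially as in Lemma~\ref{lem:RunTime_Of_One_Component}, that however large the component inside a $\sqrt{n}\times\sqrt{n}$ sub-box has become, it still consists of at most $\sqrt{n}$ lines, each pushed distance $O(\sqrt{n})$, so every phase stays $O(n)$. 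Replace your singleton-based count with this line-based one, or simply invoke Lemma~\ref{lem:RunTime_Of_One_Component} uniformly rather than trying to sharpen it.

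A minor point on your characterisation of $\mathscr{S_I}$: the requirement that nodes in the same sub-box be pairwise $8$-non-adjacent is stronger than necessary. Components are maximal sub-trees of the chosen spanning tree $T$, so two $8$-adjacent nodes may lie in distinct components if the tree path between them leaves the sub-box. This does not affect the running-time bound but does affect the finiteness discussion.
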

\begin{proof}
	Given $A \in \mathscr{S_I}$ of $n$ nodes with a particular partitioning positioned to divide $A$ into $n$ connected components. See partitioning examples of a zigzag line in Figure \ref{fig:UnusualShapeOfNnodes} and diagonal zigzag in Figure \ref{fig:worstCaseSahpe}. By Lemma \ref{lem:ConCompInSubBox}, a sub-box can have at most $2\sqrt{n}$ components, and with a given partitioning, $A$ can occupy at most $2\sqrt{n}/n = 2\sqrt{n} = O(\sqrt{n})$  sub-boxes. As $A$ is connected, each occupied sub-box contains at most $\sqrt{n}/2$ components of size 1 each. 
	
	\begin{figure}[th!]
		\centering
		{\includegraphics[scale=0.5]{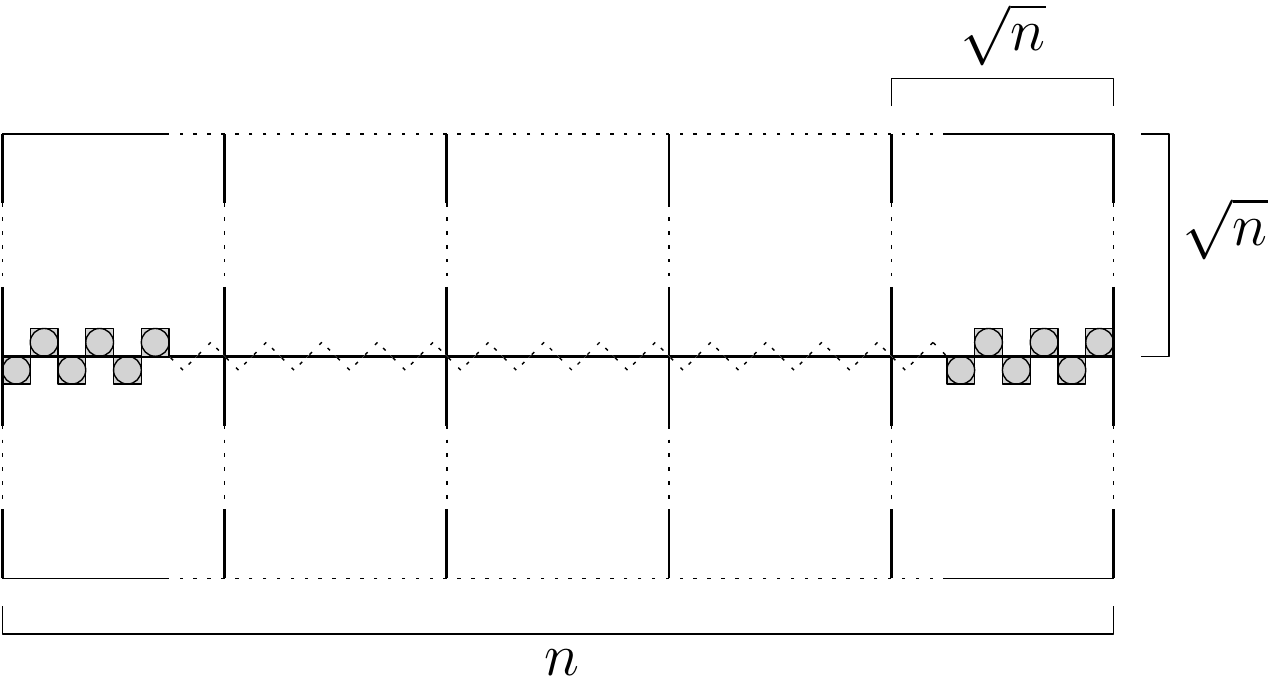}}	
		\caption{A zigzag line with a partitioning positioned to cross the middle through every two nodes of $A \in \mathscr{S_I}$.}
		\label{fig:UnusualShapeOfNnodes}
	\end{figure}
	
	\begin{figure}[th!]
		\centering
		{\includegraphics[scale=0.5]{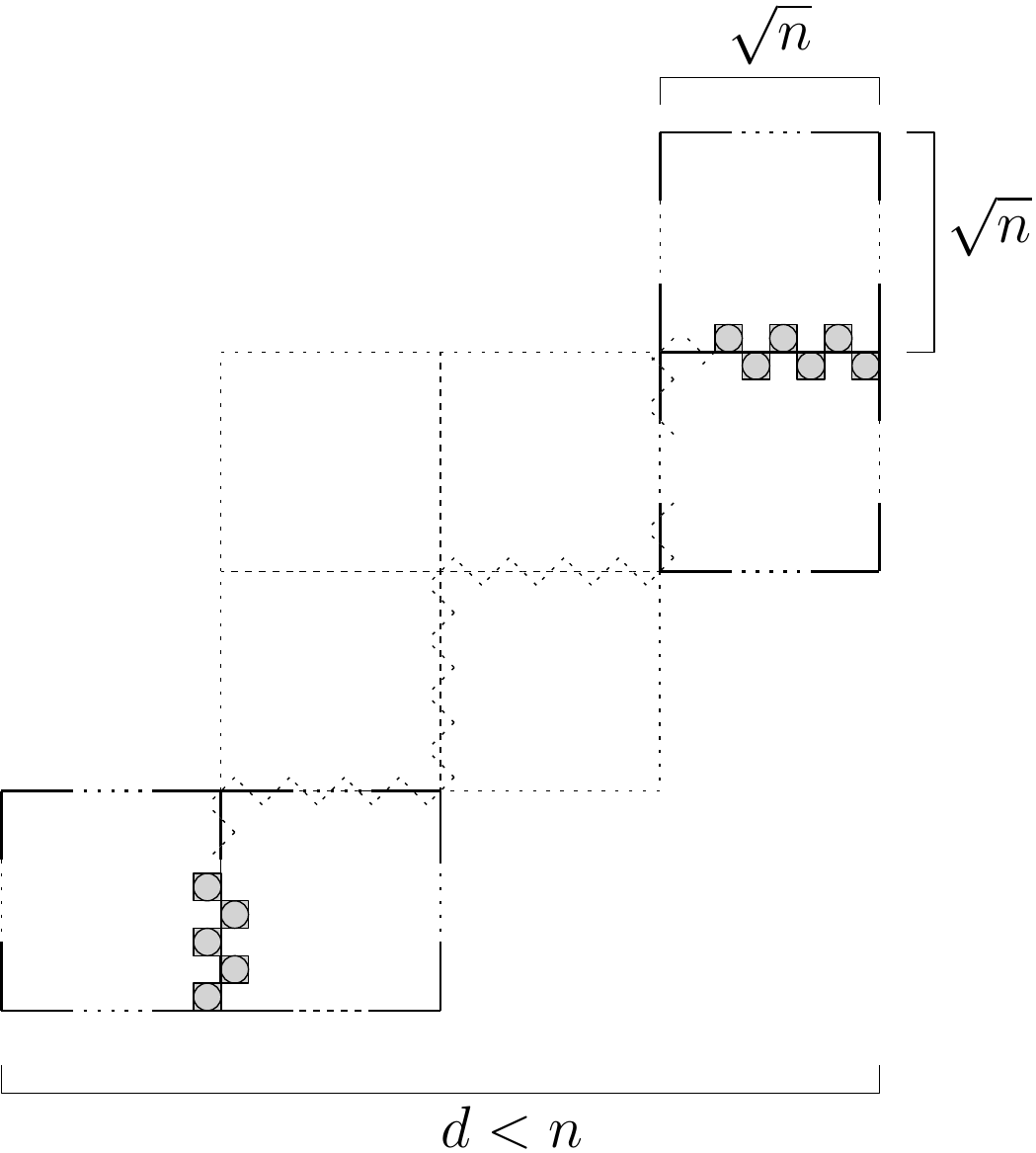}}	
		\caption{A diagonal zigzag line with a partitioning positioned to cross the middle through every two nodes in $A \in \mathscr{S_I}$ of dimension $d < n$.}
		\label{fig:worstCaseSahpe}
	\end{figure}
	
	We investigate how the current strategy behaves in the worst scenario. In any given charging phase $i$, for all $1 \le i \le \sqrt{n}$, {\sc Compress} compresses $\sqrt{n}$ lines of a single of multiple components  to their parent by moving them $\sqrt{n}$ distance in a total of $O(n)$ steps, with preserving connectivity. More, the compression may be via two diagonal sub-boxes occurring at most $2 \cdot i\sqrt{n}$. Additional cost is also given for rearrangements of at most  $2\cdot\sqrt{n}/2 = \sqrt{n}$ moves. Therefore, the charging phase $i$ takes a total moves $t_1$ of at most:
	\begin{align*}
	t = & \sum_{i=1}^{\sqrt{n}} i + (2 \cdot i\sqrt{n}) + \sqrt{n} =  \frac{\sqrt{n}(\sqrt{n}+1)}{2}+ (2 \cdot i\sqrt{n}) +\sqrt{n} = \frac{n +\sqrt{n}}{2} + (2 \cdot i\sqrt{n}) +\sqrt{n} \\
	& =  \frac{n +\sqrt{n} + (4 \cdot i\sqrt{n}) +2\sqrt{n} }{2}  =  \dfrac{n + 3\sqrt{n} + (4 \cdot i\sqrt{n})}{2} =   \dfrac{5n + 3\sqrt{n}}{2}\\
	& = O (n).    
	\end{align*}
	
	For the upper bound, we will assign the cost $t$ for each of the $2\sqrt{n}$ occupied sub-boxes in those particular shapes of Figures \ref{fig:UnusualShapeOfNnodes} and \ref{fig:worstCaseSahpe}. Hence, the total running time $T$ in moves is as follows:
	
	\begin{align*}
	T&  = t \cdot 2\sqrt{n} \\
	& = \dfrac{5n + 3\sqrt{n}}{2} \cdot 2\sqrt{n} =  \dfrac{10n\sqrt{n} + 6n}{2} = 5n\sqrt{n} +3n\\
	& = O(n\sqrt{n}). 
	\end{align*}
	
	By Lemma \ref{lem:CompressingPreservesConnectivity}, {\sc Compress} compresses any shape $A \in \mathscr{S_I}$ of $n$ nodes with a particular partitioning that dividing $A $ into $n$ components in at most $  O(n\sqrt{n}) $ steps with preserving the whole connectivity of the shape during the transformations.
\end{proof}

By Lemma \ref{lem:CompNice}, the resulting compressed square shape of  {\sc Compress} is a nice shape. Hence, Lemma \ref{lem:CompressingPreservesConnectivity}, Lemma \ref{lem:RunningTime}, and reversibility of nice shapes (from \cite{AMP19}), we therefore have: 

\begin{theorem} \label{theo:UC_Box}
	For any pair of connected shapes $(S_I,S_F)$ of the same order $n$, \emph{UC-Box} transforms $S_I$ into $S_F$ (and $S_F$ into $S_I$)  in $O(n\sqrt{n} )$ steps, while preserving connectivity during its course. 
\end{theorem}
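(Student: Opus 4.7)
The plan is to reduce transforming $S_I$ into $S_F$ to passing through a canonical intermediate shape, exploiting the three ingredients already established: correctness and cost of {\sc Compress}, the fact that its output is a \emph{nice} shape, and reversibility of line moves on nice shapes from \cite{AMP19}.

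First I would apply {\sc Compress} to $S_I$. By Lemma \ref{lem:CompNice} this yields a $\sqrt{n}\times\sqrt{n}$ compressed square $S_C$, which belongs to the family of nice shapes. By Lemma \ref{lem:CompressingPreservesConnectivity} connectivity of the shape is maintained throughout, and by Lemma \ref{lem:RunningTime} (or equivalently Lemma \ref{lem:Finit_Worst_shapes} for the worst configurations) this stage costs at most $O(n\sqrt{n})$ moves. Using the linear-time reconfiguration of nice shapes into a canonical straight line available from \cite{AMP19}, $S_C$ can then be transformed into a straight line $S_L$ of order $n$ in $O(n)$ additional moves, again preserving connectivity. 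Therefore the forward half $S_I \rightarrow S_L$ costs $O(n\sqrt{n})$ connectivity-preserving moves.

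Symmetrically, I would apply the same pipeline to the target shape $S_F$, producing a connectivity-preserving transformation $S_F \rightarrow S_L$ of cost $O(n\sqrt{n})$. By reversibility of line moves established in \cite{AMP19} (which applies because at every step the shape is connected and no move violates the permissible-line-move definition \ref{def:permissible_line_move}), each individual move of this second transformation can be inverted, and the inversion of the whole sequence yields a connectivity-preserving transformation $S_L \rightarrow S_F$ of the same length $O(n\sqrt{n})$. Concatenating the two halves gives the desired transformation $S_I \rightarrow S_L \rightarrow S_F$ of total cost $O(n\sqrt{n}) + O(n) + O(n\sqrt{n}) = O(n\sqrt{n})$, still connectivity-preserving because neither half ever breaks connectivity. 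Exchanging the roles of $S_I$ and $S_F$ gives the reverse direction.

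The only subtle point I would double-check is that the reversibility argument truly applies in the connectivity-preserving setting: a reversed line move must itself be a permissible line move on the reversed configuration, and reversing a sequence that preserves connectivity must still preserve connectivity at every intermediate configuration. Both properties follow immediately once one observes that the set of permissible moves is closed under inversion (a rightward push is undone by a leftward push on the shifted line, with the vacated cell at the opposite endpoint being the empty cell the definition requires), and the intermediate configurations are exactly the same set traversed in reverse order, so connectivity at each is preserved by construction. Apart from this check, the argument is a direct composition of the already proved lemmas and Theorem \ref{theo:UC_Box} follows.
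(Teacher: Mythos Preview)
Your proposal is correct and follows essentially the same approach as the paper: compress $S_I$ (and symmetrically $S_F$) to a nice canonical shape via {\sc Compress} using Lemmas \ref{lem:CompNice}, \ref{lem:CompressingPreservesConnectivity}, and \ref{lem:RunningTime}, then invoke reversibility from \cite{AMP19} to concatenate the two halves. Your explicit routing through the straight line $S_L$ and the sanity check that reversed moves remain permissible and connectivity-preserving are more detailed than what the paper spells out, but the underlying argument is identical.
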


\section{Lower Bounds}
\label{sec:Lower_Bounds}

In this section, we discuss a necessary minimum cost to transform the diagonal of order $n$ into a line exploiting the parallelism of line moves in a two dimensional grid.  The \textit{Input} is a diagonal shape $S_D$ of $n$ nodes occupying  $ (x_{1},y_{1}),$ $ (x_{2},y_{2}), \ldots, (x_{n},y_{n}) $, and the \textit{output} is a straight line $S_L$ of $n$ nodes occupying $n$ consecutive cells at a column $y_i$ or row $x_i$, for all $1 \le i \le n$.  Observe that $S_D$ matches the maximum number of steps a transformation takes to transform it into $S_L$, due to the inherent distance between these two pairs of shapes.

Given a complete graph $G=(V,E)$ in which $V$ is a set of nodes in $S_D$ and $E$ is non-negative edge weights (Manhattan distance between  nodes). Then, a simplification of this problem is collect all nodes on $S_D$ at the bottom-most node. That is, every node in $S_D$ must perform one or more hops through other nodes and end up at the bottom-most node. When going through a node the two or more nodes can continue traveling together and exploit parallelism.  

Any such solution to the problem forms a spanning tree $T \subseteq G$, where every leaf to root path corresponds to the hops of a specific node until it reached the end.
The cost of each subtree is: $c(T)$ is the total sum of the distances of its edges plus the cost of nodes $c(V)$. Every edge $E(u,v)$ has a cost equal to the distance of moving $u$ to $ v $,  where each node has a cost of paying for each internal node of a subtree the number of nodes in its subtree. The latter cost is due to not being able to exploit parallelism whenever turning, and any hop requires another turn. The cost due to distances is just:
\begin{align}
c(E) = \sum_{e\in E(T)}^{} cost(e),
\label{eqEdgesCost}
\end{align}
The cost of internal nodes is equal to:
\begin{align}
c(V) = \sum_{i =1}^{d(T)} i \cdot v\in d(T)_i,
\label{eqVetrexCost}
\end{align}
Where $ d(T) $ is the depth of tree $ T $ and $d(T)_i$ is the number of nodes at level $i$. The total cost in number of moves given by such a tree $ T $ is the sum of \ref{eqEdgesCost} and \ref{eqVetrexCost}: 
\begin{align}
c(T) = c(E) + c(V).
\label{eqTotalCost}
\end{align}
Now, the two sums seem to give some trade-off. If the depth is very small, then the cost due to distances seems to increase (e.g., if all nodes travel into one hop, they all pay their distances and the cost is quadratic). This approach is similar to any sequential transformation of individual movements which pays a cost of $\Theta(n^2)$ to transform $S_D$ into $S_L$. The summation of the total individual distances is, $\Sigma \Delta =0+1+2+\ldots+(n-1)=\Theta(n^2)$, independently of whether connectivity is preserved or not during transformations. This is because of the inherent individual distance between $S_D$ and $S_L$.
On the other hand, the tree $T$ of very large depths looks as a spanning line where a lot of parallelism must be exploited. The distance in this case would cost only $ c(E)= n-1 $. While the sum of turns at each node becomes quadratic, $ c(V)= n^2 $. Therefore, we observe that more balanced trees of logarithmic depth, such as binary trees, manage to balance both sums and give total cost $ n \log n $.
Due to the trade-off, it does not seem easy to lower bound in the general case. Further, it does not seem easy to lower bound the edges-sum even by some parameters depending on the depth (so that both sums will be using similar parameters). It might not even be related to that parameter. Therefore, we tried to further simplify the problem by restricting the solutions to extremely limited depths. Below, we have successfully managed to establish some special-case lower bounds for this problem.

It can be easily seen that no uniform strategy can achieve better bound than the $ O(n\log n) $-time strategy of \cite{AMP19}, by simply increasing the number of lines that are merging in every phase to decrease the number of phases. Hence, we have the following proposition.

\begin{proposition}
	Any strategy represented by a balanced tree performs $\Omega(n \log n)$ moves. 
\end{proposition}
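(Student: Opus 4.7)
The plan is to bound $c(T)$ through the vertex-cost term alone. Since $c(T)=c(E)+c(V)$ with $c(E)\ge 0$, it suffices to show that any balanced tree $T$ on $n$ nodes has $c(V)=\Omega(n\log n)$. First I would rewrite $c(V)=\sum_{i=1}^{d(T)} i\cdot |d(T)_i|$ as the sum of the depths of the $n$ nodes of $T$ (equivalently, the total external path length of $T$, which also equals the sum over internal nodes of the sizes of their subtrees). In this form $c(V)$ is a standard combinatorial quantity, so the problem reduces to lower bounding the sum of depths of a balanced tree.

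Next I would fix a reading of ``balanced'' consistent with the motivating discussion preceding the statement: $T$ has bounded branching factor $b$ (each internal node has at most $b$ children) and logarithmic depth $d(T)=\Theta(\log n)$, matching the balanced binary-tree regime in which the trade-off between $c(E)$ and $c(V)$ was informally described as achieving $n\log n$. Under this hypothesis the number of nodes at depth at most $h$ is bounded above by $\sum_{j=0}^{h}b^j=\Theta(b^h)$, so choosing $h=\log_b(n/2)-O(1)$ forces at most $n/2$ nodes into levels $\le h$, and hence at least $n/2$ of the $n$ nodes sit at depth $\ge h=\Omega(\log n)$. Substituting this into the depth sum gives $c(V)\ge (n/2)\cdot\Omega(\log n)=\Omega(n\log n)$, and combining with $c(T)\ge c(V)$ yields the proposition.

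The only real obstacle is pinning down ``balanced'' precisely enough for the counting step to go through: the argument needs a constant fraction of the $n$ nodes to live at depth $\Omega(\log n)$, which is exactly the geometric-growth property enforced by the bounded-branching assumption. Without such a hypothesis a shallow tree could have depth $\Theta(\log n)$ while concentrating almost all of its nodes at depth $1$ (for instance a star with a single long pendant path), in which case the vertex cost collapses to $O(n)$ and one would be forced to recover the bound from $c(E)$ — a considerably harder task, since the edge cost depends both on the geometry of the diagonal and on the precise pairing chosen at each level. Once the bounded-branching reading is fixed, however, the argument reduces to the one-line level-counting above, which is why the paper can state the result as an easy corollary of the cost decomposition rather than a substantive proof.
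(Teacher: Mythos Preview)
Your argument is correct under your reading of ``balanced'' (constant branching factor), but it takes a genuinely different route from the paper. You bound the vertex cost $c(V)$ alone, via the standard external-path-length estimate: with branching $\le b$, at most $O(b^h)$ nodes live at depth $\le h$, so half the nodes sit at depth $\Omega(\log_b n)$ and the depth sum is $\Omega(n\log n)$. The paper, by contrast, bounds what is essentially the \emph{edge} cost: it interprets a balanced tree as a uniform $k$-ary merging strategy (any fixed $k\ge 2$), and argues phase by phase that in phase~$i$ there are $n/k^i$ groups each paying $\Theta(k^{i+1})$ in distance (since adjacent lines are now $\Theta(k^{i-1})$ apart), giving $\Theta(nk)$ per phase over $\log_k n$ phases, hence total $nk\cdot\log_k n = n\log n\cdot(k/\log k)\ge 2n\log n$.

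The trade-off is this: your argument is cleaner and model-independent (it uses nothing about the diagonal's geometry), but it only works when $b$ is a constant --- if the merging parameter grows with $n$, the depth collapses and $c(V)$ alone is $o(n\log n)$. The paper's edge-cost argument handles every $k\ge 2$ uniformly, because the per-phase cost scales with $k$ and compensates for the loss of phases; this is exactly the regime your last paragraph flags as ``considerably harder'' to recover from $c(E)$, but for the uniform $k$-ary case it turns out to be a two-line calculation. So both proofs are valid, they just lower-bound different summands of $c(T)$ and cover overlapping but not identical readings of ``balanced''.
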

\begin{proof}
	Observe that such a strategy is essentially trying to increase the degrees of the nodes of a balanced tree and decrease its depth. For example, take any merging parameter $ k \ge 2 $. Notice that the $ O(n\log n) $-time transformation of \cite{AMP19} (called \emph{DL-Doubling}) has k = 2, as it is merging pairs of lines and get $ \log n $ phases. So, in every phase $ i $ we are going to partition the $ L $ lines into $ L/k $ groups of $ k $ consecutive lines each and merge the lines within each group into a single line.
	
	First , in phase 1, $ L = n  $, and we are partitioning into $ n/k $ groups. For each group we are paying at least $ k^2 $ asymptotically to merge the lines in it. Therefore for phase 1 we pay $ (n/k)k^2 = nk $ (this is similar also to the $ O(n\sqrt{n} )$-time transformation in \cite{AMP19}, but there it only did it once and gathered all the lines to the bottom and not in any further phases). Then in phase 2 $ L = n/k $, we are partitioning into $ L/k = n/k^2 $ groups. Each group is paying at least $ k^3 $ asymptotically, because the distance between consecutive lines has now increased to $ k $ (roughly). Thus gives again cost at least $ nk $. This should hold for the other phases.
	
	Now, Observe that this strategy gives $ \log_{k} n = \log n \log k $ phases. If each is paying $ nk $, then the total cost is $ (nk)(\log n \log k) = n \log n(k \log k) $, which for all $  k \ge 2 $ is at least $ 2 n \log n = \Omega(n \log n) $. This would be helpful because it excludes any attempts to get a better than the $ O(n \log n )$-time transformation by simply playing with the degrees of the tree in a uniform way (which in turn decreases its depth and thus the number of phases).
\end{proof}

\subsection{An $\Omega(n \log n)$ Lower Bound for The 2-HOP Tree}

We start to study a special case lower bound for all solutions that represented by a tree $T$ of a minimum depth. Assume any such solution moves all nodes in only one-way via shortest paths towards the target node. As a node joins other nodes, they do not split after that during the transformation. Let $d(T)$ denotes the depth of the tree. For $ d(T) = 1 $, the tree becomes a star, and the total cost is quadratic in this case, due to the summation of individual distances $c(E) =0+1+2+\ldots+(n-1)=\Theta(n^2)$.  

Then, we investigate the tree $T$ of depth at most 2, $ d(T) = 2 $. Observe that for any node in the tree we are paying ``\textit{asymptotically}'' at least the square of number of children that it has. The reason is that at most 2 of its children can be nodes at distance 1, then at most 2 can be nodes of distance 2, at most 2 of distance $ i $ in general due to the neighbouring properties of the diagonal. Thus, it gives a total cost for any such tree which is similar to $ c(T) = \sum_i d(u_i)^2 $, where $ d(u_i)$ is the number of children of $u_i$. That is, the squares of the degrees of all internal nodes, excluding their parent (the root $u_i$). When taking into account all nodes, this gives a graph-theoretic measure related to chemical compounds, known as the \textit{Zagreb index} \cite{Gutman72,GRBT75}. A great amount of bounds have been established for it, but none of which could be directly used in our case. 

Let $T$ of $k$ nodes be a tree of depth 1, as shown in Figure \ref{fig:K_Tree}. Then, the total asymptotic cost of the tree $c(T)$ is at least:
	\begin{align}
	c(T) \ge \sum_{i =0}^{k} d(u_i)^2,
	\label{eq:TotalCostOfTree}
	\end{align}
	Where $ d(u_i)$ is the degree of node $u_i$. 
\begin{figure}[th!]
	\centering
	{\includegraphics[scale=0.7]{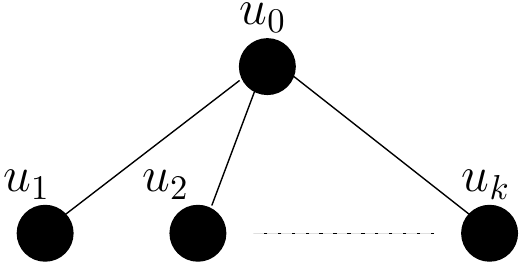}} 
	\caption{A tree $T$ of $k$ nodes.}
	\label{fig:K_Tree}
\end{figure}

Given the tree $T$ of $k$ we show the first case of a minimum total cost $T$ must pay if it has a node $u_i \in T$ with degree at least $n \log n$;

\begin{lemma} \label{lem:1stCase}
	If $\exists$ $ d(u_i) \ge \sqrt{n \log n}$, then $c(T) \ge n \log n$, for all $ 0 \le i \le k$. 
\end{lemma}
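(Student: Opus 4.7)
The plan is to observe that the cost bound \eqref{eq:TotalCostOfTree} is a sum of non-negative terms, one for each node of $T$, so it is lower-bounded by any single summand. Therefore the entire argument reduces to isolating the contribution of the single node $u_i$ whose degree is assumed to be large.

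First I would fix the node $u_i \in T$ satisfying $d(u_i) \ge \sqrt{n \log n}$ guaranteed by the hypothesis. Then, starting from the inequality
\begin{equation*}
c(T) \;\ge\; \sum_{j=0}^{k} d(u_j)^2,
\end{equation*}
I would drop every summand $j \neq i$ (which is valid since each $d(u_j)^2 \ge 0$) to conclude
\begin{equation*}
c(T) \;\ge\; d(u_i)^2 \;\ge\; \bigl(\sqrt{n \log n}\bigr)^2 \;=\; n \log n,
\end{equation*}
which is exactly the claimed bound.

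There is essentially no obstacle: the lemma is a one-line consequence of the degree hypothesis combined with the already-established lower bound \eqref{eq:TotalCostOfTree}. The only thing worth being careful about is ensuring that \eqref{eq:TotalCostOfTree} is genuinely an additive bound over \emph{all} nodes of $T$ (so that throwing away the other terms is legitimate), and that the square function applied to a node's degree really gives that node's contribution in isolation — both of which are exactly how the preceding discussion sets up the cost. Hence the proof is immediate and no further case analysis is required before moving on to the complementary case where every degree is smaller than $\sqrt{n \log n}$, which will be handled by the next lemma.
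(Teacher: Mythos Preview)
Your proposal is correct and matches the paper's own argument essentially verbatim: isolate the single large-degree node in the sum \eqref{eq:TotalCostOfTree}, square its degree, and obtain $c(T) \ge d(u_i)^2 \ge n\log n$. Nothing further is needed.
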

\begin{proof}
	The proof is straightforward. Consider the tree $T$ of $k$ nodes in Figure \ref{fig:K_Tree}. If $k \ge \sqrt{n \log n}$, then the tree shall have a minimum total cost of $c(T) \ge d(u_0)^2 = k^2 = {(\sqrt{n \log n})}^2 =  n \log n$. In general, if there exists a node $u_i \in T$, for all $ 0 \le i \le k$, such that $d(u_i) \ge \sqrt{n \log n}$, then the total cost of the tree must be at least $c(T) \ge n \log n$.  
\end{proof} 

Now, let us assume that all nodes in the tree have degrees less than $n \log n$. Thus, we show the lower bound of Lemma \ref{lem:1stCase} holds in this case.

\begin{lemma} \label{lem:2ndCase}
	Let  $ d(u_i) < \sqrt{n \log n}$ $\forall i$, where $ 0 \le i \le d(u_0) = k$. Then, $c(T) > n \log n$. 
\end{lemma}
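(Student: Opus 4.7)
The plan is to combine the structural upper bound $k = d(u_0) < D := \sqrt{n\log n}$ (which is simply the lemma's hypothesis at $i=0$) with a quadratic-mean inequality applied to the second-level degrees. Since we are in the $2$-HOP regime, $T$ has depth at most $2$: the root $u_0$ has $k$ children $u_1, \ldots, u_k$, and each $u_i$ with $i \ge 1$ has $d(u_i)$ leaf-children. Counting every node once yields
\[
1 + k + \sum_{i=1}^{k} d(u_i) \;=\; n, \qquad\text{so}\qquad \sum_{i=1}^{k} d(u_i) \;=\; n-1-k.
\]

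First I would invoke the cost bound $c(T) \ge \sum_{i=0}^{k} d(u_i)^{2}$, peel off the root contribution $k^{2}$, and apply the quadratic-mean (equivalently Cauchy--Schwarz) inequality $\sum x_i^{2} \ge (\sum x_i)^{2}/k$ to the remaining $k$ summands. This gives the basic estimate
\[
c(T) \;\ge\; k^{2} \;+\; \frac{(n-1-k)^{2}}{k}.
\]

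Next I would use $k < D$ to force the second term to dominate. For $n$ large enough, $n-1-k \ge n/2$, and therefore
\[
c(T) \;>\; \frac{(n/2)^{2}}{\sqrt{n\log n}} \;=\; \frac{n^{3/2}}{4\sqrt{\log n}},
\]
which exceeds $n\log n$ whenever $n^{1/2} > 4(\log n)^{3/2}$, i.e.\ for all sufficiently large $n$. This yields the desired strict inequality $c(T) > n\log n$.

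The main conceptual point (more than a technical obstacle) is to recognise that the uniform degree bound $d(u_i) < D$ is the \emph{engine} of the argument rather than an obstruction. By pigeonhole it forces the $\Theta(n)$ grandchildren to be packed into fewer than $\sqrt{n\log n}$ groups of average size $\Theta(\sqrt{n/\log n})$, and the convexity of squaring then amplifies this imbalance into a bound of order $n^{3/2}/\sqrt{\log n}$, which already beats $n\log n$ by a polynomial factor. The only delicate point is making sure the ``$n$ large enough'' threshold used in $n-1-k \ge n/2$ and in the final asymptotic comparison are compatible; this is a routine one-variable check that I would fold into the standard asymptotic framing of the $\Omega(n\log n)$ lower bound.
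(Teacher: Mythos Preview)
Your argument is correct and is considerably cleaner than the paper's own proof. Both approaches start from the same two ingredients, namely the counting identity $\sum_{i=1}^{k} d(u_i)=n-1-k$ and the cost lower bound $c(T)\ge\sum_{i=0}^{k} d(u_i)^2$, but they diverge in how the sum of squares is controlled.

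The paper orders the children by non-increasing degree and then, for each $i$, uses ``the maximum of what remains is at least the average of what remains'' to produce an individual lower bound $d(u_i)\ge \bigl(n-\sum_{j<i} d(u_j)-1\bigr)/\bigl(k-(i-1)\bigr)$. It then replaces the numerators and denominators by the hypothesis bound $\sqrt{n\log n}$, squares, and sums, arriving after a page of asymptotic manipulation at a bound of order $n^{3/2}/\mathrm{polylog}(n)$. Your route short-circuits all of that: a single application of Cauchy--Schwarz (equivalently the QM--AM inequality) gives $\sum_{i=1}^{k} d(u_i)^2 \ge (n-1-k)^2/k$ directly, and the hypothesis $k<\sqrt{n\log n}$ at $i=0$ immediately yields $c(T) > n^{3/2}/(4\sqrt{\log n})$. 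This is in fact a slightly sharper bound than what the paper extracts, obtained with essentially one line of inequality instead of a sequential pigeonhole chain.

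The only caveat---which you already flag---is that your strict inequality $c(T)>n\log n$ holds for $n$ large enough that $\sqrt{n}>4(\log n)^{3/2}$ and $n-1-\sqrt{n\log n}\ge n/2$. Since the surrounding section is framed asymptotically (the conclusion is stated as an $\Omega(n\log n)$ lower bound, and the paper's own proof freely uses $\simeq$), this is consistent with the intended level of rigour.
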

\begin{proof}
	Given a tree $T$ of $n$ nodes that has depth of 2, and a subtree $T^{\prime} \subseteq T$ of $k$ nodes as in Figure \ref{fig:K_Tree}. So, let  $ d(u_i) < \sqrt{n \log n}$ , for all $ 0 \le i \le d(u_0) = k$. Assume without loss of generality that the nodes $u_i \in T^{\prime}$, for all $1 \le i \le d(u_0) =k$, are ordered in non-increasing degrees from left to right (increasing order $i$), that is, $d(u_1) \ge d(u_2) \ge \ldots \ge d(u_k)$. Hence, there are $n- (k+1) \in T$ nodes remaining to be assigned. As $d(u_1) \in T^{\prime}$ is the maximum, it must hold that, $d(u_1) \ge \frac{n-(k+1)}{k}$, thus $ \frac{n-(k+1)}{k} \le d(u_1) < \sqrt{n \log n} $. 
	
	Next, there are $n- (k+1) - d(u_1) \in T$ nodes need to be allocated. As $d(u_2) \in T^{\prime}$ is the maximum among the rest, it must hold that $d(u_2) \ge \frac{n-(k+1)- d(u_1)}{k-1}$, thus $ \frac{n-(k+1)- d(u_1)}{k-1} \le d(u_2) < \sqrt{n \log n} $. In general, if a node $d(u_i) \in T^{\prime}$ is the maximum, then the following must hold that,
	
	\begin{align}
	d(u_i) \ge \frac{n- \Big( \sum_{j=0}^{i-1} d(u_i) \Big) -1}{k-(i-1)},
	\label{eq:GeneralCase}
	\end{align}
	Thus,
	\begin{align}
	\frac{n- \Big( \sum_{j=0}^{i-1} d(u_i) \Big) -1}{k-(i-1)} \le d(u_i) < \sqrt{n \log n},
	\end{align}	
	Now, plug $i = 1$ and $k = d(u_0)$ in (\ref{eq:GeneralCase}) yields, 
	\begin{align}
	d(u_1) &\ge \frac{n- d(u_0) -1}{d(u_0)} 
	> \frac{n- \sqrt{n \log n} -1}{\sqrt{n \log n}},
	\end{align}	
	When $i = 2$, we will get,
	\begin{align}
	d(u_2) &\ge \frac{n- (d(u_0) + d(u_1)) -1}{d(u_0)} 
	> \frac{n- 2\sqrt{n \log n} -1}{\sqrt{n \log n} -1 },
	\end{align}
	For all $ 1 \le i \le d(u_0) = k$, we shall obtain,
	\begin{align}
	d(u_i) &\ge \frac{n-\Big( \sum_{j=0}^{i-1} d(u_i) \Big) -1}{d(u_0)-(i-1)} 
	> \frac{n- i\sqrt{n \log n} -1}{\sqrt{n \log n} -(i-1) }, \label{eq:CostOfu_i}
	\end{align}
	Then, we plug (\ref{eq:CostOfu_i}) into (\ref{eq:TotalCostOfTree}) of Observation \ref{oberv:TreeCost}, which implies,
	\begin{align}
	c(T) &> \sum_{i =0}^{d(u_0)} {\Bigg[ \frac{n- i\sqrt{n \log n} -1}{\sqrt{n \log n} -(i-1) } \Bigg]}^2 > {(\sqrt{n \log n})}^{-1} \sum_{i =0}^{d(u_0)} {(n- i \sqrt{n \log n}-1)}^2 \\
	&\simeq {(\sqrt{n \log n})}^{-1} \sum_{i =0}^{d(u_0)} {(n- i \sqrt{n \log n})}^2\\ &= {(\sqrt{n \log n})}^{-1} \sum_{i =0}^{d(u_0)} (n^2 + i^2 \cdot n \log n - 2 i\cdot n \sqrt{n \log n})\\
	&= {(\sqrt{n \log n})}^{-1} \Bigg[ d(u_0)\cdot n^2 + \sum_{i =0}^{d(u_0)} n( i^2 \log n - 2 i \sqrt{n \log n}) \Bigg] \\
	&= {(\sqrt{n \log n})}^{-1} \Bigg[ d(u_0)\cdot n^2 + n\sum_{i =0}^{d(u_0)} ( i^2 \log n - 2 i \sqrt{n \log n}) \Bigg].
	\label{eq:BoundCost}
	\end{align}
	We need to bound the summation of (\ref{eq:BoundCost}):
	\begin{align}
	\sum_{i =0}^{d(u_0)} n( i^2 \log n - 2 i \sqrt{n \log n}) &= \sum_{i =0}^{d(u_0)} i^2 \log n -  \sum_{i =0}^{d(u_0)}  2 i \sqrt{n \log n} \\ &= \Bigg( \log n  \sum_{i =0}^{d(u_0)} i^2  \Bigg) - \Bigg( 2 \sqrt{n\log n}  \sum_{i =0}^{d(u_0)} i \Bigg)\\
	&= \log n \Bigg( \dfrac{{d(u_0)}^3}{3} + \dfrac{{d(u_0)}^2}{2} + \dfrac{d(u_0)}{6} \Bigg)\\ &- 2 \sqrt{n\log n} \cdot \dfrac{d(u_0)(d(u_0+1))}{2}\\
	&\simeq \log n \Big( {d(u_0)}^3 + {d(u_0)}^2 + d(u_0) \Big) \\ &- 2  \sqrt{n\log n} \cdot {d(u_0)}^2.
	\label{eq:BoundCost_1} 
	\end{align}
	
	Now, plug (\ref{eq:BoundCost_1}) into (\ref{eq:BoundCost}), then it will give a total cost of the tree $c(T)$ that asymptotically bounded on:
	\begin{align}
	c(T) &> {(\sqrt{n \log n})}^{-1} \Big( n^2 \cdot d(u_0)  + n \log n \cdot {d(u_0)}^3 - \sqrt{n \log n} \cdot {d(u_0)}^2\Big)\\
	&= \dfrac{ n^2 \cdot d(u_0)  + n \log n \cdot {d(u_0)}^3}{\sqrt{n \log n}} - {d(u_0)}^2 \\
	&> \dfrac{ n^2 \cdot d(u_0)  + n \log n \cdot {d(u_0)}^3}{\sqrt{n \log n}} - n \log n 
	\end{align}
	
	Finally, since $d(u_0) > 1$, it implies that,
	\begin{align}
	c(T) &> \dfrac{n^2}{\sqrt{n \log n}} - n \log n = \dfrac{n^2\cdot n \log n}{\sqrt{n \log n} \cdot \log n} - n \log n = \dfrac{n^2}{\sqrt{n \log n} } - n \log n\\
	&= \dfrac{n \cdot n}{\sqrt{n} \sqrt{ \log n} } - n \log n = \dfrac{n \cdot \sqrt{n}}{ \sqrt{ \log n} } - n \log n = \dfrac{n \log n \cdot \sqrt{n}}{\log n \cdot \sqrt{ \log n} } - n \log n \\
	&> \dfrac{n \log n \cdot \sqrt{n}}{{\log}^2 n } - n \log n = \Bigg( \dfrac{\sqrt{n}}{{\log}^2 n} - 1 \Bigg) n \log n = \\
	& = \Omega (n \log n).
	\end{align}
	
\end{proof} 

As a result, both Lemmas \ref{lem:1stCase} and \ref{lem:2ndCase} show that the total cost of any spanning tree $c(T)$ of $n$ nodes and depth at most $d(T) \le 2$ is always bounded by $\Omega(n \log n)$.

\begin{theorem}
	Any 2-HOP spanning tree $c(T)$ of $n$ nodes and depth at most $d(T) \le 2$ has a total cost $c(T)$ of $\Omega(n \log n)$.  
\end{theorem}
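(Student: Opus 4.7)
The plan is to derive the theorem as an immediate corollary of the dichotomy already set up in Lemmas~\ref{lem:1stCase} and~\ref{lem:2ndCase}. Given any 2-HOP spanning tree $T$ on $n$ nodes with $d(T)\le 2$, I would consider the maximum degree $\Delta(T) = \max_i d(u_i)$ taken over the internal nodes of $T$, and perform a case split according to whether $\Delta(T) \ge \sqrt{n \log n}$ or $\Delta(T) < \sqrt{n \log n}$.

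In the first case, there exists some internal node $u_i$ with $d(u_i)\ge \sqrt{n\log n}$. Plugging this into the cost inequality (\ref{eq:TotalCostOfTree}) and keeping only the single squared summand $d(u_i)^2$, I would invoke Lemma~\ref{lem:1stCase} to conclude $c(T)\ge (\sqrt{n\log n})^2 = n\log n = \Omega(n\log n)$, independently of how the remaining $n-1-d(u_i)$ nodes are arranged among the other internal nodes.

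In the complementary case, every internal node satisfies $d(u_i) < \sqrt{n\log n}$, so the hypothesis of Lemma~\ref{lem:2ndCase} is met. That lemma then forces $c(T) > \bigl(\sqrt{n}/\log^{2}n - 1\bigr)\, n\log n$, which is $\Omega(n\log n)$ for all sufficiently large $n$. Taking the minimum of the two cases gives the claimed lower bound.

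The only point needing care is that the two cases are exhaustive and that each one really applies to a legitimate 2-HOP tree spanning all $n$ nodes. The dichotomy on $\Delta(T)$ is tautologically exhaustive, and the small edge case in which $T$ is so shallow that $k$ itself must be large is automatically covered by the first case, since $d(u_0)=k\ge \sqrt{n\log n}$ already triggers Lemma~\ref{lem:1stCase}. Hence no new estimate is required beyond the two lemmas, and the theorem follows by combining them.
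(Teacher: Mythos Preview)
Your proposal is correct and follows exactly the paper's approach: the paper states the theorem as an immediate consequence of Lemmas~\ref{lem:1stCase} and~\ref{lem:2ndCase}, splitting on whether some internal node has degree at least $\sqrt{n\log n}$, and you reproduce precisely this dichotomy.
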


\subsection{A conditional $\Omega(n \log n)$ Lower Bound - One way transformation}

Now we present another special case lower bound for transformations that are exploiting line moves. Again, our techniques is based on one-way assumption in which all nodes move in one direction via shortest paths towards the target node (e.g., from top to bottommost node in the diagonal). Whenever a node joins other nodes, they continue travelling together and do not split thereafter.

Let $S_{D}$ be a diagonal connected shape occupies of order $ n $ nodes (lines of length 1)  on positions $ (x_{1},y_{1}),$ $ (x_{2},y_{2}), \ldots, (x_{n},y_{n}) $. The argument starts by deciding a potential target position of the final straight line, $S_L$. Assume a potential placement of $S_L$ horizontally on the bottommost row $y_1$ or vertically at the leftmost column $x_1$ of the shape. With this, and without loss of generality, we therefore assume that lines only move down and leftwards. This is convenient as they always push a minimum distance towards the target potential placement, i.e, in our assumption at row $y_1$ or column $x_1$ of $S_L$. 

Enclose each individual node of $S_D$ into a square box of dimension $d = 1$ to have a total of $n$ boxes, see the black squares boxes in Figure \ref{fig:Lower_Bound}. Then, double the dimension of the square boxes to surround every two nodes in a total of $n/2$ boxes of $d=2$, such as the red squares boxes in Figure \ref{fig:Lower_Bound}. Repeat doubling dimensions each of different colours $\log n$ times, until arriving at 1 square box of $d = n$, which contains all nodes of $S_D$. Assume that $n$ is a power of 2, therefore, the total number of all square boxes shall be exactly $n + n/2 + n/4 + \ldots +1 = 2n - 1$ boxes, where there are $n$ boxes of $d =1$, $n/2$ boxes of $d= 2$, $\ldots$ and 1 box of $d= n$.

\begin{figure}[th!]
	\centering
	{\includegraphics[scale=0.5]{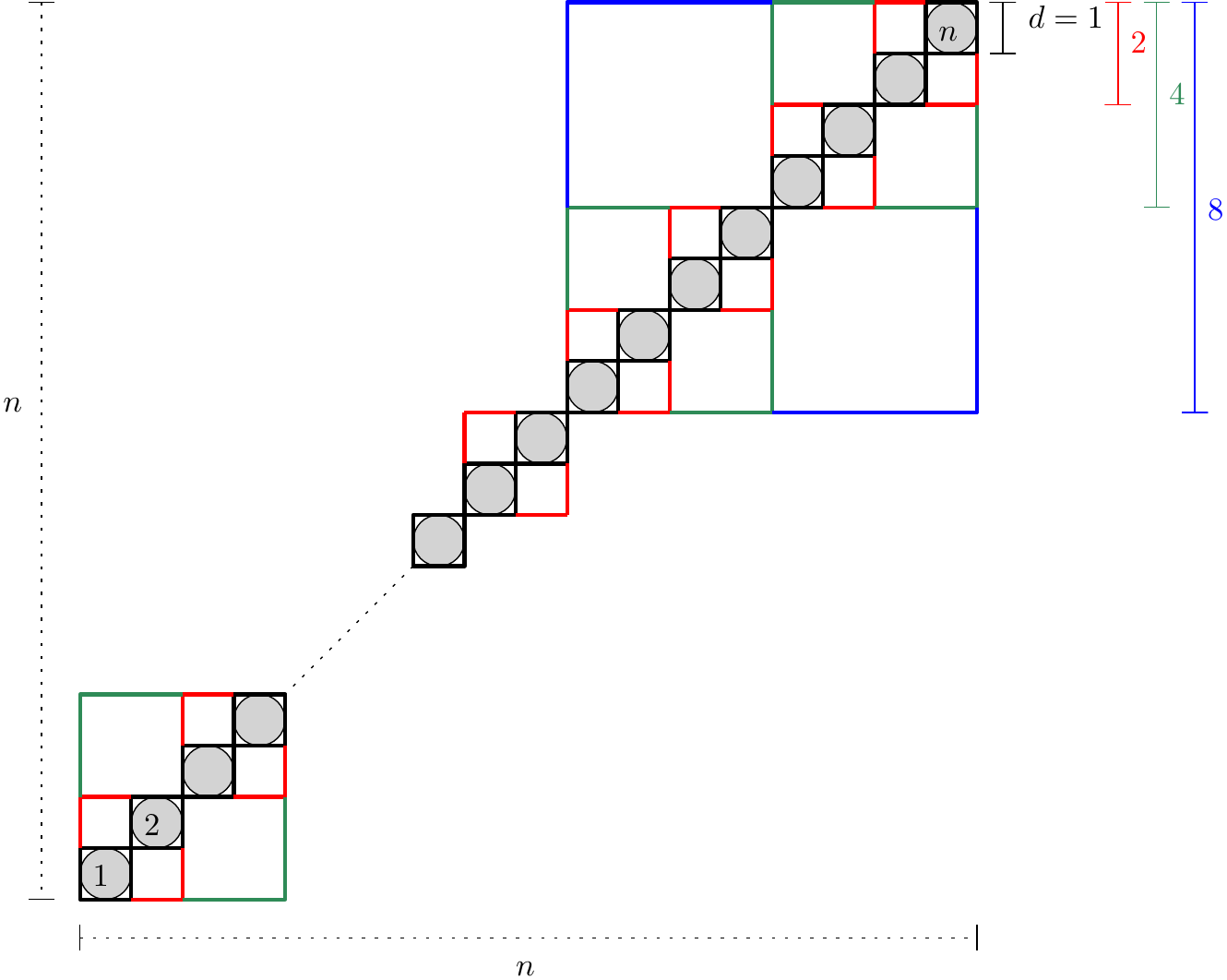}} 
	\caption{An initial diagonal shape $S_{D}$ on positions $ (x_{1},y_{1}),$ $ (x_{2},y_{2}), \ldots, (x_{n},y_{n}) $ enclosed into $ \log n $ boxes of dimensions $1,2,4, \ldots, n$.}
	\label{fig:Lower_Bound}
\end{figure}       

Now, observe that such a transformation at any order during its course, must pay at least $n$ steps to push $n$ nodes out from their black boxes of dimension 1. Likewise, when a line $l_1$ of 1 node occupying a cell $(x, y)$ (e.g, Figure \ref{fig:Lower_Bound_Case_1} (a)) is pushed one step to cross the boundary of its black box of dimension 1, no one will be pushed for free to move from any box of any size. The same argument follows, when a line $l_2$ of 2 nodes at $(x-1,y)$ and $(x-1, y-1)$ pushes 2 steps, say to the left, and leaves its red box of size $2\times2$, then no line is pushed to leave their red box of dimension 2 for free, see an example in Figure \ref{fig:Lower_Bound_Case_1} (b). More formally, by this observation, any transformation exploiting linear-strength pushing mechanism requires at least $d \cdot n/d$ steps, where the dimension $d = 2^k$ for all $0 \le k \le \log n$, to evacuate all lines from $n/d$ boxes of dimension $d$, without pushing any other lines for free in any arrangements during its course.    

\begin{figure}[th!]
	\centering
	\subcaptionbox{}
	{\includegraphics[scale=1]{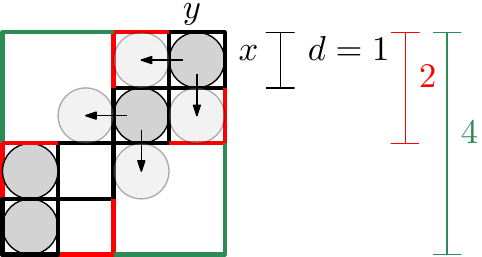}}  \qquad \qquad
	\subcaptionbox{}
	{\includegraphics[scale=1]{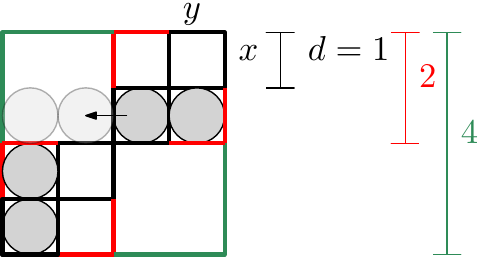}} 
	\caption{Artificial boxes of dimensions 1, 2 and 4 enclosing nodes of $S_D$.}
	\label{fig:Lower_Bound_Case_1}
\end{figure} 

There is another case might happen during transformations of any strategy based on line moves. Consider a square box of dimension $ d = 2^k$, for all $0 \le k \le \log n$, consists of four sub-boxes of dimension $2^{k-1}$  each, say without loss of generality, a blue box of size $8 \times 8$ holds four green sub-boxes of length-4 dimension, as depicted in Figure \ref{fig:Lower_Bound_Case_2}. Here, one can say that the line of length 4 in the top-right corner pushes 4 steps towards the left, which consequently moves the one in top-left for no cost. In such case, we should not forget the cost of forming the two length-4 line is prepaid previously. That is, the strategy already paid a cost of forming them initially from sub-boxes of dimensions 1 and 2. Further, one of the length-4 line is incurred the transformation a cost of 4 steps at least, to change its direction completely to occupy new 4 consecutive columns, that is, to line up vertically with the other length-4 line. Recall that any line of length $2^k$ occupying a box of $d = 2^k$ and crosses a boundary of that box vertically or horizontally, no line is pushed originally for free. This holds for any initial sub-lines of lengths less than $2^k$. 
\begin{figure}[th!]
	\centering
	{\includegraphics[scale=0.6]{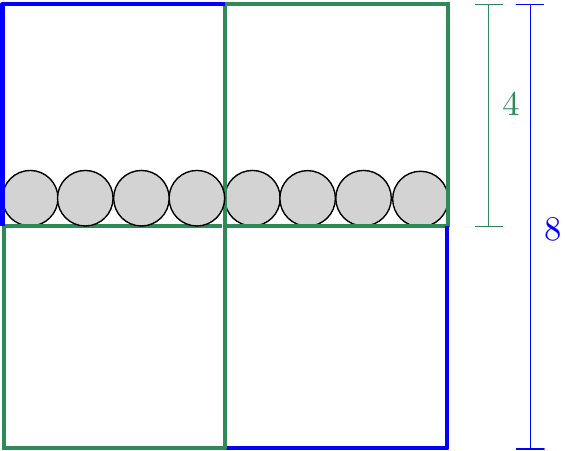}}  
	\caption{An example of the special case when nodes push others for free (on the same row of column).}
	\label{fig:Lower_Bound_Case_2}
\end{figure}    
With this, we can then calculate the total minimum steps that must be paid to evacuate all lines (of various lengths) from the $2n - 1$ square boxes. In each box, any strategy has to pay a minimum number of steps equals to the box's dimension $ d = 2^k$, for all $0 \le k \le \log n$, at any order during transformations. Thus the total minimum steps will  be $(1\cdot n) + (2\cdot n/2) + (4\cdot n/4) + \ldots + (n\cdot 1) = n + n + \ldots + n $. Now, since we have $\log n $ different dimensions, we obtain a total of $n \log n$ minimum number of steps. Hence, any transformation exploits linear-strength pushing mechanism asks for at least $\Omega (n \log n)$ steps to form all $n$ nodes at the potential placement and transform $S_{D}$ into $S_{L}$. 

Then, we try to apply a recursive transformation to check whether this  will  yield a better lower bound. That is, let $S_{L}$ be an initial straight line of $n$ nodes (say horizontal) which occupies the bottommost row $y_1$ and $S_D$ is a target diagonal of order $ n $  (lines of length 1) occupies  positions $ (x_{1},y_{1}),$ $ (x_{2},y_{2}), \ldots, (x_{n},y_{n}) $.  By reversibility, the pair $(S_D, S_{L})$ are transformable to each other, such that if $S_D \rightarrow  S_{L} $ (``$\rightarrow$'' means ``is transformed to'') then $ S_{L} \rightarrow S_{D} $ via a sequence of line moves. Then the cost of $S_D \rightarrow  S_{L} $ is equivalent to $ S_{L} \rightarrow S_{D} $.  

\begin{figure}[th!]
	\centering
	{\includegraphics[scale=0.7]{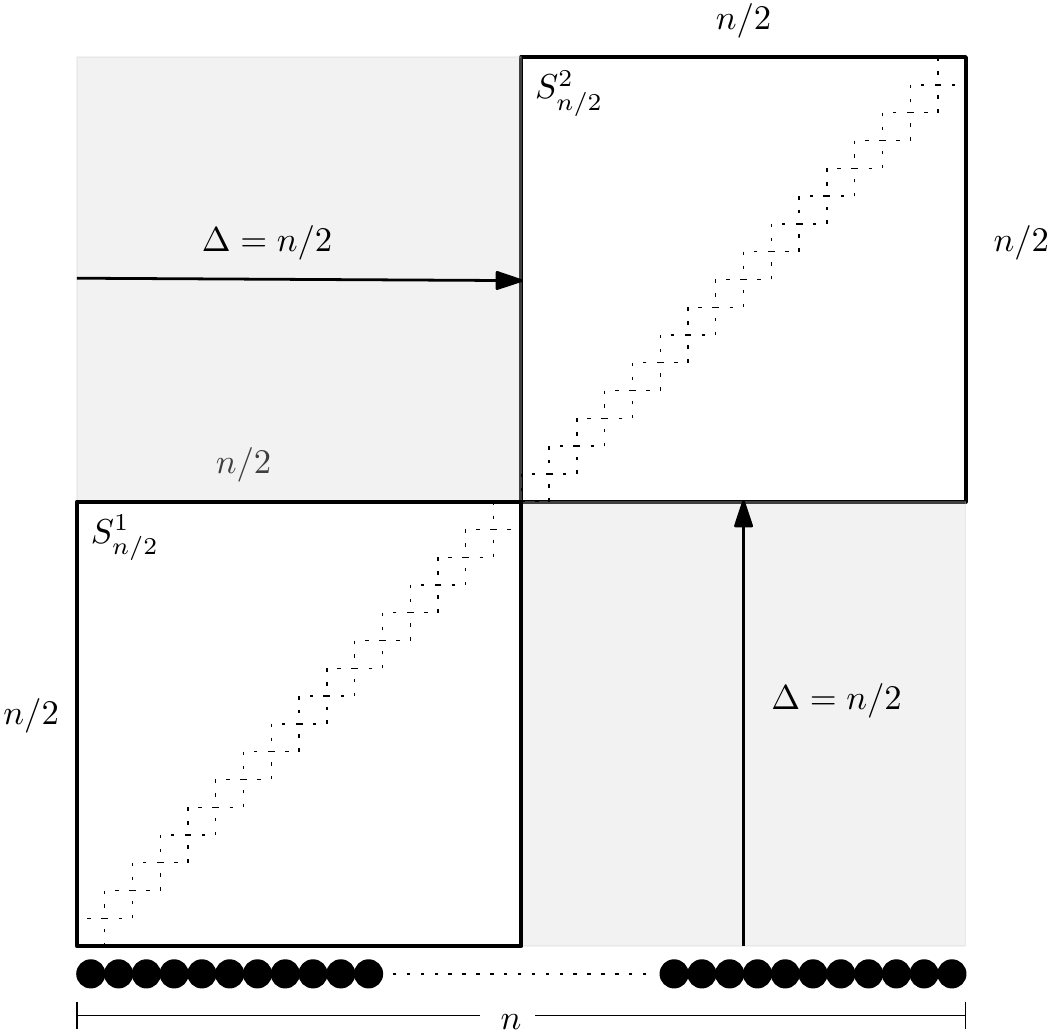}} 
	\caption{All nodes are initially placed on the bottom. Two independent sets $S_{n/2}^{1} $ and $S_{n/2}^{2} $ are defined.}
	\label{fig:Lower_Bound_3}
\end{figure}  

We define two independent sets, $S_{n/2}^{1} $ and $S_{n/2}^{2} $, each of which contains arbitrary $n/2$ nodes during configurations, such that $S_{n/2}^{1} \ne S_{n/2}^{2}$ and $S_{n/2}^{1}  \cap S_{n/2}^{2} = \phi$, see Figure \ref{fig:Lower_Bound_3}. Given a transformation $A$, then pick any $n/2 $ nodes randomly chosen from $S_L$. At any time, $A$ must pay a cost of at least $n/2$ for these specific $n/2$ nodes to cross a boundary of the $S_{n/2}^{2}$ box and get inside it, through the two shaded areas. This cost is based of the minimum distance any group of $n/2$ nodes have to pay, in order to reach their final positions inside the $S_{n/2}^{2}$ box. 

\begin{figure}[th!]
	\centering
	{\includegraphics[scale=0.7]{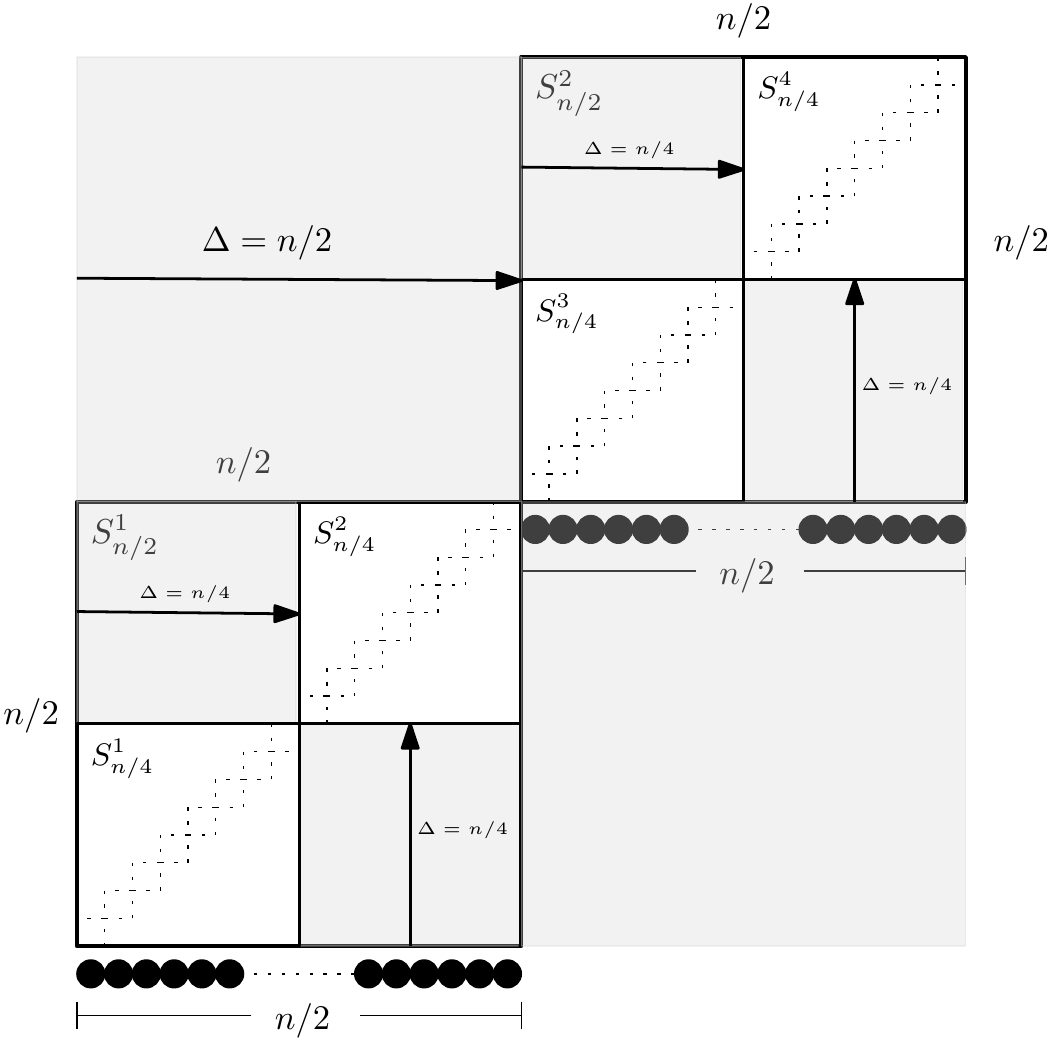}} 
	\caption{The two sets $S_{n/2}^{1} $ and $S_{n/2}^{2} $ are divided into four independent sets $S_{n/4}^{1}, S_{n/4}^{2} \subset S_{n/2}^{1} $ and $S_{n/4}^{3}, S_{n/4}^{4} \subset S_{n/2}^{2} $. }
	\label{fig:Lower_Bound_3_1}
\end{figure}

Similarly, we split the two boxes into half, by defining four independent sets $S_{n/4}^{1}, S_{n/4}^{2} \subset S_{n/2}^{1} $ and $S_{n/4}^{3}, S_{n/4}^{4} \subset S_{n/2}^{2} $. At any time, $A$ chooses two random group of nodes, each of size $n/4$. Then, $A$ has to pay a cost of at least $2 \cdot n/4$ for these nodes to cross boundaries and get inside the $S_{n/4}^{2}$ and  $S_{n/4}^{4}$ boxes. See Figure \ref{fig:Lower_Bound_3_1}. Repeat the same argument for the rest of $ \log n$ charging phases, until each nodes occupies its final target  positions in the diagonal $S_D$. As every node in each independent set will eventually reach its final position, then it will be contained into $\log n$ boxes. Therefore, the total amortized cost will be $(1 \cdot n/2) + (2\cdot n/4) + (4.\cdot n/8) + \ldots + (n/2 \cdot 1)  = n \log n / 2$.   As a result, we state that:

\begin{theorem}
	Any transformation strategy exploiting line moves requires $\Theta (n \log n)$ steps to transform the diagonal into a line. 
\end{theorem}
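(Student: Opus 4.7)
The plan is to establish the matching $\Omega(n\log n)$ lower bound under the stated one-way assumption; together with the $O(n\log n)$ upper bound from \cite{AMP19}, this yields the $\Theta(n\log n)$ claim. Without loss of generality I would fix the target placement of $S_L$ to be the bottommost row $y_1$, so that every line only ever moves down or leftwards towards its final cell; this is the convenient consequence of the one-way and shortest-path restrictions highlighted in the preamble to this section.

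The core tool is a nested boundary-crossing argument built on the dyadic decomposition already introduced around Figure \ref{fig:Lower_Bound}. I would enclose each of the $n$ diagonal nodes in a chain of $\log n + 1$ nested axis-aligned boxes of side lengths $1, 2, 4, \ldots, n$, yielding $n$ boxes at scale $1$, $n/2$ at scale $2$, and so on up to a single box at scale $n$. For each scale $d = 2^k$ and each box $B$ at that scale, I would argue that any valid one-way transformation must spend at least $d$ moves on pushes of line segments whose carrier is contained in $B$ and whose head crosses $\partial B$ in a permitted direction. Summed over the $n/d$ boxes at scale $d$, this gives at least $n$ moves per scale, and summed over the $\log n$ scales it gives $\Omega(n\log n)$.

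The main obstacle, and where the proof has real content, is ruling out \emph{cross-scale reuse}: a single long push that crosses a coarse boundary could in principle drag shorter sub-lines across many finer boundaries simultaneously, inflating the count. I would resolve this by charging each move only to the \emph{coarsest} scale whose boundary it crosses, and then showing that, inside any scale-$2^k$ box, the work of \emph{assembling} and \emph{re-orienting} the contiguous line that eventually leaves that box was necessarily paid at strictly finer scales. Concretely, a length-$2^k$ line exiting a scale-$2^k$ box must be composed of two previously perpendicular sub-lines of length $2^{k-1}$ drawn from scale-$2^{k-1}$ sub-boxes, and turning one of them incurs at least $2^{k-1}$ moves chargeable only to the finer scale; this is exactly the content of the case analysis around Figures \ref{fig:Lower_Bound_Case_1} and \ref{fig:Lower_Bound_Case_2}. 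Recursing on $k$ makes the ``prepaid'' accounting clean, and I expect the bulk of the technical work to be in carefully verifying that no move can be double-counted against two different scales under this charging scheme.

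As a sanity check I would carry out the symmetric recursion on the reverse transformation $S_L \to S_D$ afforded by reversibility of line moves. Splitting $S_L$ into two halves and charging for the minimum number of nodes that must cross each dyadic midpoint gives an independent amortised lower bound of $\sum_{k=0}^{\log n - 1} 2^k \cdot (n/2^{k+1}) = (n \log n)/2$, matching the nested-box bound and confirming it is tight. If both arguments deliver the same $\Omega(n\log n)$ figure through structurally different charging schemes, I would be confident the constants are not an artefact of one particular accounting.
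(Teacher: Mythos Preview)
Your proposal follows essentially the same approach as the paper: the nested dyadic boxes of Figure~\ref{fig:Lower_Bound}, the per-scale boundary-crossing charge of $d$ per box summed over $\log n$ scales, the ``prepaid'' accounting for assembly and turning (Figures~\ref{fig:Lower_Bound_Case_1} and~\ref{fig:Lower_Bound_Case_2}), and the reverse-direction amortised check via halving (Figures~\ref{fig:Lower_Bound_3} and~\ref{fig:Lower_Bound_3_1}) all mirror the paper's argument. Your charging rule (assign each move to the coarsest boundary it crosses) and your explicit flagging of the cross-scale-reuse obstacle are in fact a slightly crisper articulation of what the paper handles more informally, but the underlying ideas are the same.
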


\section{Conclusions and Open Problems }
\label{sec:Conclusion}

We have presented efficient transformations for the line-pushing model introduced in \cite{AMP19} and some first lower bounds for restricted sets of transformations. Our first transformation works on the family of all Hamiltonian shapes and matches the running time of the best known transformations ($O(n\log n)$) while additionally managing to preserve connectivity throughout its course. We then gave the first universal connectivity preserving transformation for this model. Its running time is $O(n\sqrt{n})$ and works on any pair of connected shapes of the same order. Our $\Omega(n\log n)$ lower bounds match the best known upper bounds, still they are valid only for restricted sets of transformations.

This work opens a number of interesting problems and research directions. An immediate next goal is whether it is possible to develop an $O(n\log n)$-time universal connectivity-preserving transformation. If true, then a natural question is whether a universal transformation can be achieved in $o(n\log n)$-time (even when connectivity can be broken) or whether there exists a general $\Omega(n \log n)$-time matching lower bound. As a first step, it might be easier to develop lower bounds for the connectivity-preserving case. There are also a number of interesting variants of the present model. One is a centralised parallel version in which more than one line can be moved concurrently in a single time-step. Another, is a distributed version of the parallel model, in which the nodes operate autonomously through local control and under limited information.

\bibliographystyle{plainurl}
\bibliography{ref} 
\end{document}